\documentclass[a4paper,11pt]{article}

\usepackage[english]{babel}
\usepackage{dsfont}
 \textwidth=15.5 cm 
\textheight=22.5 cm
\usepackage{setspace}
%\doublespacing
%\usepackage[verbose=true,twoside,hmarginratio=3:4]{geometry}
\usepackage[a4paper,tmargin=3.5truecm,bmargin=4truecm,rmargin=2.5truecm,
lmargin=3.2truecm,twoside,verbose=true]{geometry}
\usepackage{cancel,graphicx}
\usepackage{hyperref}
\usepackage{amsmath,amssymb,slashed}

\usepackage{amsmath,amssymb}
\usepackage[amsmath, hyperref, thmmarks]{ntheorem}
\numberwithin{equation}{section}

\allowdisplaybreaks[1]

\newcommand\bbbone{\mathbb{I}}

\newcommand\del{{\partial}}
\newcommand\delbar{{{\bar{\partial}}}}

%%%%%%%%%%%%%%%%%%%%%%%%%%%%%%%%%%%%%%%%%%%%%%%%%%%%%%%%%%%%%

\theoremsymbol{}
\theorembodyfont{\slshape}
\theoremheaderfont{\normalfont\bfseries}
\theoremseparator{}
\newtheorem{Theorem}{Theorem}[section]
\newtheorem{theorem}[Theorem]{Theorem}

\newtheorem{proposition}[Theorem]{Proposition}

\newtheorem{lemma}[Theorem]{Lemma}

\newtheorem{corollary}[Theorem]{Corollary}
\theorembodyfont{\upshape}
\theoremsymbol{\ensuremath{\blacklozenge}}

\newtheorem{remark}[Theorem]{Remark}

\newtheorem{definition}[Theorem]{Definition}
\theoremstyle{nonumberplain}
\theoremheaderfont{\scshape}
\theorembodyfont{\normalfont}
\theoremsymbol{\ensuremath{\blacksquare}}

\newtheorem{proof}{Proof}
\qedsymbol{\ensuremath{_\blacksquare}}
\theoremclass{LaTeX}

%%%%%%%%%%%%%%%%%%%%%%%%%%%%%%%%%%%%%%%%%%%%%%%%%%%%%%%%%%%%%

\renewenvironment{thebibliography}[1]
         {\section*{References}\frenchspacing\small
          \begin{list}{[\arabic{enumi}]}
         {\usecounter{enumi}\parsep=2pt\topsep 0pt
         \settowidth{\labelwidth}{[#1]}
         \leftmargin=\labelwidth\advance\leftmargin\labelsep
         \rightmargin=0pt\itemsep=1pt\sloppy}}{\end{list}}
\title{Connes distance by examples: \\ Homothetic spectral metric spaces}
\author{Jean-Christophe Wallet}

\begin{document}
\date{}
\maketitle
\vspace*{-1cm}

\begin{center}
\textit{Laboratoire de Physique Th\'eorique, B\^at.\ 210\\
    Universit\'e Paris-Sud 11,  91405 Orsay Cedex, France\\
    e-mail: 
\texttt{jean-christophe.wallet@th.u-psud.fr}}\\[1ex]

\end{center}

\vskip 2cm

%\Keywords{Noncommutative Geometry; non-compact spectral triples; spectral distance; Moyal planes}

%\Classification{58B34, 46L52, 81T75} 

\begin{abstract}
We study metric properties stemming from the Connes spectral distance on three types of non compact noncommutative spaces which have received attention recently from various viewpoints in the physics literature. These are the noncommutative Moyal plane, a family of harmonic Moyal spectral triples for which the Dirac operator squares to the harmonic oscillator Hamiltonian and a family of spectral triples with Dirac operator related to the Landau operator. We show that these triples are homothetic spectral metric spaces, having an infinite number of distinct pathwise connected components. The homothetic factors linking the distances are related to determinants of effective Clifford metrics. We obtain as a by product new examples of explicit spectral distance formulas. The results are discussed.
\end{abstract}

\pagebreak
\section{Introduction}
\subsection{Motivations and organization}
In noncommutative geometry (for reviews, see e.g.~\cite{CONNES, CM, LANDI, GRACIAVAR}), the concept of spectral triple involves a natural notion of distance, stemming from the initial observation by Connes \cite{CONNES1} that Dirac operator can actually be used to generate metric data. This is known as the Connes distance \cite{CONNES1}, \cite{CONNES2}, hereafter called the spectral distance. The spectral distance may be viewed as a noncommutative analog of the geodesic distance. Indeed, in the commutative case, for a finite dimensional compact Riemann spin manifold $M$ described by the standard spectral triple built from $\mathbb{A}=C^\infty(M)$, $H$ the Hilbert space of square integrable spinors on $M$ and $D$ the usual Dirac operator, the spectral distance between pure states of $C^\infty({M})$, i.e points, coincides with the geodesic distance between those points while for non pure states, the spectral distance is the Wasserstein distance of order 1 between the corresponding probability distributions in the theory of optimal transport \cite{Rieffel1}.\par 

Many examples of noncommutative spaces have been now constructed, but comparatively little work has been done so far on their metric aspects. From an abstract viewpoint, the notion of noncommutative metric space has been developed by Rieffel in \cite{Rieffel1, Rieffel11, Rieffel2a, Rieffel2b, Rieffel3} leading to a natural theory of noncommutative compact metric spaces. An extention to the locally compact case, initiated in \cite{Latremolieres}, deserves further investigations. The computation of explicit spectral distance formulas is difficult, due to numerous technical points to be overcome, unless the (noncommutative) geometry is ''relatively simple''. Therefore, the first computations were related to a limited number of situations, namely lattice geometries \cite{Muller1}, \cite{Muller2}, \cite{LIZZI}, finite-dimensional algebras \cite{Mart1}, almost commutative geometries \cite{Mart2}, \cite{Martwulk}, showing a relationship with the Carnot-Caratheodory distance in sub-Riemannian geometry \cite{Mart3} or providing a metric interpretation of the Higgs field as the component of the metric in a discrete internal dimension \cite{Martwulk}.\par

Recently, the spectral distance has been studied \cite{WAL1}, \cite{WAL2} within noncommutative Moyal plane. This latter is described by the spectral triple proposed in \cite{GAYRAL2004, MARSE1} (hereafter called Standard Moyal Spectral Triple) as a noncommutative analog of (non compact) Riemann spin geometry. An explicit formula for the spectral distance between pure states related to the eigenfunctions of the harmonic oscillator has been obtained in \cite{WAL1, WAL2}, showing in particular that the spectral distance $d_D$ cannot be viewed as a deformation of the usual Euclidean distance on $\mathbb{R}^2$, despite the fact the Moyal plane can be interpreted as an isospectral deformation of $\mathbb{R}^2$ \cite{WAL1, WAL2}. A comparison to the quantum distance introduced by Doplicher, Fredenhagen and Roberts (DFR) \cite{DFR} has appeared recently \cite{mart-tom}. The above distance formula has been extended \cite{mart-recent} to some classes of coherent states (i.e the ``quantum points''). Obviously, enlarging the above list of explicit examples of spectral distances appears desirable, especially in the case of non compact noncommutative spaces for which a general theory supplementing the above compact case remains to be done.\par 

As far as physics is concerned, Moyal geometry or differential calculi linked with Moyal algebra(s) have received a lot of attention within noncommutative field theories (see e.g \cite{GROSSWULK0}-\cite{LSZ}) and noncommutative gauge theories (see e.g \cite{WAL6}-\cite{WAL5}). The recent constructions of the first renormalisable (bosonic or fermionic) noncommutative field theories have pointed toward other interesting spectral triples for which the Dirac operator is no longer the usual Dirac operator of the Standard Moyal Spectral Triple. For the (bosonic) renormalisable noncommutative harmonic $\varphi^4$ scalar model \cite{GROSSWULK0, GROSSEWULK}, the Dirac operator is a''square root'' of the harmonic oscillator Hamiltonian. This gives rise to an interesting class of spectral triples, first considered in \cite{GROSSEWULK2} (hereafter called Harmonic Moyal Spectral Triples). This has been further investigated in \cite{wulkgayral2010} (see also \cite{wulk-finite}), from the viewpoint of spectral dimension and spectral action computation as an attempt to understand more deeply the noncommutative structures behind the gauge invariant model derived in \cite{WAL6}, \cite{GROSSEWOHL}. For the (fermionic) renormalisable noncommutative Gross-Neveu model \cite{VIGNES1, VIGNEWALLET}, the Dirac operator is built from the Landau operator with magnetic field proportional to the deformation parameter of the $\mathbb{R}^2$-plane and can be used to built a spectral triple (hereafter called Landau Moyal Spectral Triple). It does not square to the harmonic oscillator Hamiltonian but instead is related, e.g to the kinetic operator occurring in the class of noncommutative LSZ models \cite{LSZ}.\par

The purpose of this paper is to relate the metric properties encoded in the Standard, Harmonic and Landau Spectral Triples. We find that these spectral triples describe non compact spectral metric spaces. These are homothetic to each other, with an infinite number of distinct path-connected components. As a by product, we find new examples of spectral distance formulas that extend \cite{WAL1}, \cite{WAL2}. Here, homothety is due in part to the algebraic property that the spatial derivatives can be expressed as inner derivatives on the multiplier algebra of the algebra involved in the triples (see below), $\partial_\mu f=-{{i}\over{2}}[{\tilde{x}}_\mu,f]_\star$. This implies that, for each spectral triple, the operator characterizing $\ell_D$ (see equation \eqref{lipnorm}) can always be written{\footnote{We use Einstein summation convention over repeated indices.}} as $[D,\pi(a)]=\Gamma^\mu\pi(\partial_\mu a)$, where the set of hermitian matrices $\Gamma^\mu$ spans a representation of a Clifford algebra and depends on the triple. The homothetic factors linking the distances are then related to the determinants of the effective metrics induced by these ''effective'' Clifford algebras. This is summarized in the Theorem \ref{summaryresult} of the section 4 where we also discuss the results and conclude. In the subsection \ref{standardmoyalgeo}, we recall basic features on the Standard Moyal Spectral Triple and fix the notations. To have a self-contained presentation, additional related properties used in the course of the discussion are collected in the appendix. The subsection \ref{harmonictriple} deals with the harmonic Moyal spectral triple. In the section \ref{landautriple}, we consider the Landau Moyal Spectral Triple. We show the localized compactness condition (see iv) of Definition \ref{spectraltriple} below) for the resolvent operator associated to the Dirac operator. This latter is further shown to be related to a noncommutative connection on a certain module.\\
{\bf{Note added:}} After the completion of this work, we became aware of a second version of \cite{mart-tom} proposing to use Theorem \ref{th1} (or Theorem \ref{summaryresult}) given below to define a classical limit (i.e a limit for which the Planck length tends to zero) for which the spectral distance between coherent states for the Moyal plane case reduces to the usual Euclidean distance.

\subsection{Definitions and general properties}
In this paper{\footnote{In the following, ${\cal{B}}({{H}})$ and ${\cal{K}}({{H}})$ denote the C*-algebra of bounded operators on a separable Hilbert space ${{H}}$ and its ideal (C*-subalgebra) of compact operators on ${{H}}$.}}, we will use the following definition for a spectral triple (viewed as a ''$K$-cycle''):
\begin{definition}\label{spectraltriple} 
A spectral triple is the set of data $\mathfrak{{X}}_D=(\mathbb{A},\ \pi,\ {{H}},\ D)$ in which:\\
i) $\mathbb{A}$ is an involutive algebra and $\pi$ is a faithful $\star$-representation of $\mathbb{A}$ on ${\cal{B}}({{H}})$, ii) $D$ is a self-adjoint possibly unbounded operator defined on a dense domain Dom$(D)\subset{{H}}$ with $\pi(a)$Dom$(D)$$\subset$Dom$(D)$, $\forall a\in\mathbb{A}$ and satisfying: iii) $\forall a\in\mathbb{A}$, $[D,\pi(a)]\in{\cal{B}}({{H}})$, iv) $\forall a\in\mathbb{A}$, $\forall\lambda\notin$ sp$(D)$, $\pi(a)(D-\lambda)^{-1}\in{\cal{K}}({{H}})$.
\end{definition}
The metric information for the underlying noncommutative space can be extracted from $\mathfrak{{X}}_D$. The definition \ref{spectraltriple} has to be supplemented by additional conditions in order to give rise to natural noncommutative analogs of manifolds. For more details and examples, see e.g \cite{CONNES, LANDI, GRACIAVAR}. When $\mathbb{A}$ is unital, note that iv) of definition \ref{spectraltriple} is equivalent to $(D-\lambda\bbbone_\mathbb{A})^{-1}\in{\cal{K}}({{H}})$ for any $\lambda\notin$ spec$(D)$, the spectrum of $D$. \\
Let ${\mathfrak{S}}(\mathbb{A})$ be the space of states of $\mathbb{A}$, i.e positive linear map $\omega:\mathbb{A}\to\mathbb{C}$ with norm 1.
\begin{definition}\cite{CONNES1, CONNES2}\label{spectraldistance} The spectral distance between any two states is given by:
\begin{equation}
d_D(\omega_1,\omega_2)=\sup_{a\in\mathbb{A}}\big\{ \vert \omega_1(a)-\omega_2(a)\vert;\ \ell_D(a)\le1\big \},\ \forall\omega_1,\omega_2\in{\mathfrak{S}}(\mathbb{A}) \label{specdist},
\end{equation}
in which the D-seminorm on $\mathbb{A}$, $\ell_D:\mathbb{A}\to{\mathbb{R}}^+$, associated with the Dirac operator $D$ is defined by
\begin{equation}
\ell_D:a\mapsto \ell_D(a):=||[D,\pi(a)]||,\  \forall a\in\mathbb{A}\label{lipnorm},
\end{equation}
where $\vert\vert.\vert\vert$ is the operator norm for the representation of $\mathbb{A}$ in ${\cal{B}}({{H}})$. 
\end{definition}
The definition \ref{spectraldistance} extends to the case where $\mathbb{A}$ is a pre-C* algebra (such as the algebra underlying the spectral triples considered in this paper) for which the notion of state is meaningfull. Indeed, the restriction on $\mathbb{A}$ of any state on ${\bar{\mathbb{A}}}$, the C* completion of $\mathbb{A}$, defines on $\mathbb{A}$ a unique positive linear map with norm 1 while continuity insures that any positive linear map of norm 1 on $\mathbb{A}$ determines a unique state on ${\bar{\mathbb{A}}}$.\par
In the sequel, we will use a simple natural definition for a spectral metric space:
\begin{definition}\label{spectralmetricspace}
A spectral metric space is a spectral triple $\mathfrak{{X}}_D=(\mathbb{A},\ \pi,\ {{H}},\ D)$ in the sense of Definition \ref{spectraltriple} where a): the representation $\pi$ of $\mathbb{A}$ in ${\cal{B}}({{H}})$ is non degenerate and b): the metric commutant $\mathbb{A}_D^\prime:=\{a\in\mathbb{A};\ [D,\pi(a)]=0 \}$ is trivial.
\end{definition}
This definition is a bit more restrictive than the one used in \cite{Rieffel1}-\cite{Rieffel3} which is based on the notion of order-unit spaces. Note that the definition we use here does not require the $d_D$-metric topology on $\mathfrak{S}(\mathbb{A})$ to be of a specific type fixed once in advance and forever. Let us comment this point. \par
The notion of noncommutative metric space developed in \cite{Rieffel1}-\cite{Rieffel3} exploits the initial observation of \cite{CONNES1} that equipping a C*-algebra $\mathbb{A}$ with a metric can be done by choosing a suitable densely defined seminorm $\ell$ on $\mathbb{A}$. This latter may be viewed as a generalization of a Lipschitz seminorm. Indeed, for a (commutative) compact metric space (X,$\rho$), the metric $\rho$ can be recovered from the usual Lipschitz seminorm $\ell_\rho$ on (the commutative unital algebra) $\mathbb{A}=C(X)$, given for any $f\in C(X)$ by 
\begin{equation}
\ell_\rho(f):=\sup\{{{\vert f(x)-f(y) \vert}\over{\rho(x,y)}};\ x,y\in X, x\ne y\},
\end{equation}
thanks to the following relation 
\begin{equation}
\rho(x,y)=\sup\{\vert f(x)-f(y)\vert;\  f\in C(X),\ \ell_\rho(f)\le1\}. 
\end{equation}
This extends to the Kantorovitch distance on the space of probability measures over X, given for any $\mu_1,\mu_2\in{\mathfrak{S}}(C(X))$ by 
\begin{equation}
\rho(\mu_1,\mu_2)=\sup\{\vert \mu_1(f)-\mu_2(f)\vert;\ f\in C(X),\ \ell_\rho(f)\le1\}
\end{equation}
and is known to metrize the weak* topology on ${\mathfrak{S}}(C(X))$. A natural extension to the noncommutative unital case is to determine the properties obeyed by $\ell_D$, the analog of the Lipschitz seminorm, such that the spectral distance $d_D$ induces the weak* topology on $\mathfrak{S}(\mathbb{A})$, as carried out by Rieffel in \cite{Rieffel1, Rieffel11, Rieffel2a, Rieffel2b, Rieffel3}. This produced the theory of noncommutative compact metric spaces mentionned above{\footnote{These are called (compact) quantum metric spaces by Rieffel.}}. For unital $\mathbb{A}$, provided conditions a) and b) of Definition \ref{spectralmetricspace} hold, $d_D$ metrizes the weak* topology on ${\mathfrak{S}}(\mathbb{A})$ if and only if the ``Lipschitz ball'' $B({\mathfrak{X}}_D):=\{a\in\mathbb{A};\ \ell_D(a)\le1\big \}$ is norm pre-compact in  $\mathbb{A}/\mathbb{A}_D^\prime$. This is the natural condition for a spectral triple to give rise to a compact metric space. \par 

A full generalization to the non unital case is still lacking. It is worth pointing out that having a spectral distance metrizing the weak* topology on the space of states can no longer be a systematic requirement (unless the space is ``sufficiently close to the compact case''). Indeed, for commutative locally compact metric space, the Kantorovitch distance does not in general induces the weak* topology on $\mathfrak{S}(C_0(X))$. Nevertheless, a partial extension to the non unital case has been done in \cite{Latremolieres}, adapting to the noncommutative case the fact that each distance in the set of bounded-Lipschitz distances{\footnote{The set of bounded-Lipschitz distances $(d^\alpha_\ell)_{\alpha\in\mathbb{R}^+}$ is defined by $d^\alpha_\ell(\mu_1,\mu_2):=\sup_{f\in C_0(X)}(|\mu_1(f)-\mu_2(f)|;\ \ell(f)\le1,\ ||f||\le\alpha)$, $\forall\mu_1,\mu_2\in{\mathfrak{S}}(\mathbb{A})$. $d^\infty_\ell$ coincides with the Kantorovitch distance.}} induces the weak* topology on $\mathfrak{S}(C_0(X))$ which involves the Kantorovitch distance whenever $X$ is bounded locally compact. It is found \cite{Latremolieres} that for a non unital (separable) C*-algebra (provided ${\it{a)}}$ and ${\it{b)}}$ of Definition \ref{spectralmetricspace} still hold), $d_D$ induces the weak* topology on ${\mathfrak{S}}(\mathbb{A})$ if and only if one can find a strictly positive element{\footnote{$h\in\mathbb{A}$ is strictly positive if it is positive and $h\mathbb{A}h$ is norm-dense in $\mathbb{A}$}} $h\in\mathbb{A}$ such that $hB({\mathfrak{X}}_D)h$ is norm pre-compact in ${\mathbb{A}}$. This is a natural condition for having a noncommutative {\it{bounded}} locally compact metric space. Note that spectral triples fullfilling this condition have been studied recently in \cite{bellissardmarcolli}. The spectral triples we will consider in this paper do not correspond to compact or bounded locally compact metric spaces but satisfy the conditions of Definition \ref{spectralmetricspace}. They can therefore be viewed as non compact spectral metric spaces.\par 

\section{Standard and Harmonic Moyal geometries.}\label{section2}
\subsection{The Standard Moyal Spectral Triple}\label{standardmoyalgeo}
In this subsection, we collect the material related to the Standard Moyal Spectral Triple proposed in \cite{GAYRAL2004, MARSE1}. Other useful properties are given in the appendix. Note that it can be viewed as an isospectral deformation of the canonical commutative spectral triple for the Euclidean $\mathbb{R}^2$ plane. For more details, see e.g \cite{GRACIAVAR}, \cite{Gracia-Bondia:1987kw, Varilly:1988jk}. \\
Let ${\cal{S}}= {\cal{S}}({\mathbb{R}}^2)$ be the space of complex-valued Schwartz functions on ${\mathbb{R}}^2$ and ${\cal{S}}^\prime(\mathbb{R}^2)={\cal{S}}^\prime$ its topological dual space. In the following, $fg$ denotes the commutative product of any two functions $f,g\in{\cal{S}}$ ($m$ is the multiplication operator $m(f)g:=fg$), $||.||_2$ is the $L^2(\mathbb{R}^2)$-norm. Recall that ${\cal{S}}\subset L^2(\mathbb{R}^2)$ densely.
\begin{proposition}\label{moyalproperty}\cite{Gracia-Bondia:1987kw, Varilly:1988jk}
The  associative Moyal $\star$-product is defined for all $f, g$ in ${\cal{S}}$ by: $\star:{\cal{S}}\times{\cal{S}}\to{\cal{S}}$ 
\begin{equation}
\label{moyal}
(f\star g)(x):=\frac{1}{(\pi\theta)^2}\int d^2y\,d^2z\ f(x+y)g(x+z)e^{-i\,2y^\mu\,\Theta^{-1}_{\mu\nu}z^\nu},\ \Theta_{\mu\nu}:=\theta\begin{pmatrix} 0&1 \\ -1& 0 \end{pmatrix} 
\end{equation}
where $\theta\in\mathbb{R}$, $\theta>0$. The complex conjugation is an involution $^\dag$ for the Moyal product. The integral is a faithful trace: $
\int d^2x\ (f\star g)(x)=\int d^2x\ (g\star f)(x)=\int d^2x\ f(x)g(x)$. The Leibniz rule holds: $\partial_\mu(f\star g)=\partial_\mu f\star g+f\star\partial_\mu g$, $\forall f,g\in{\cal{S}}$.
\end{proposition}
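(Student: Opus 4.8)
The plan is to read off the four assertions of Proposition~\ref{moyalproperty} — that $\star$ maps $\mathcal{S}\times\mathcal{S}$ into $\mathcal{S}$ (including associativity), that complex conjugation is an involution, the trace identity, and the Leibniz rule — after putting \eqref{moyal} into two convenient equivalent forms: a ``position-space'' form obtained by the substitution $u=x+y$, $v=x+z$ (which makes the $x$-dependence of the phase transparent), and a ``Fourier-space'' form obtained by expanding $f(x+y)$ and $g(x+z)$ in Fourier modes and carrying out the resulting Gaussian/oscillatory $y,z$-integrals.

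\emph{The algebraic items.} For the involution, conjugating \eqref{moyal} and using that $\Theta^{-1}$ is real flips the sign of the phase; relabelling $y\leftrightarrow z$ and invoking the antisymmetry of $\Theta^{-1}$ (so that $e^{+i2z^\mu\Theta^{-1}_{\mu\nu}y^\nu}=e^{-i2y^\mu\Theta^{-1}_{\mu\nu}z^\nu}$) turns the result into $(\bar g\star\bar f)(x)$, whence $\overline{f\star g}=\bar g\star\bar f$. For the trace, I would apply the substitution $u=x+y$ followed by $w=z-y$ (both unimodular, and the diagonal part $y^\mu\Theta^{-1}_{\mu\nu}y^\nu$ of the phase vanishes by antisymmetry), then perform the $y$-integral first: it yields $(2\pi)^2\,|\det(2\Theta^{-1})|^{-1}\,\delta^{(2)}(w)=(2\pi)^2\tfrac{\theta^2}{4}\,\delta^{(2)}(w)$, and since $\tfrac{1}{(\pi\theta)^2}(2\pi)^2\tfrac{\theta^2}{4}=1$ the prefactor is cancelled exactly, leaving $\int d^2x\,(f\star g)(x)=\int d^2x\,f(x)g(x)$; symmetry under $f\leftrightarrow g$ is commutativity of the pointwise product, and faithfulness of $\tau(a):=\int a$ follows from $\tau(\bar f\star f)=\int|f|^2=||f||_2^2$, which vanishes only for $f=0$. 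The Leibniz rule is immediate: $f,g\in\mathcal{S}$ justify differentiating under the integral in \eqref{moyal}, and $\partial_\mu^x\big(f(x+y)g(x+z)\big)=(\partial_\mu f)(x+y)g(x+z)+f(x+y)(\partial_\mu g)(x+z)$ by the chain rule.

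\emph{Well-definedness and associativity.} The Fourier reduction rewrites the product, in Fourier variables, as a twisted convolution
\begin{equation}
\widehat{f\star g}(\zeta)=c\int d^2\eta\ \hat f(\zeta-\eta)\,\hat g(\eta)\ e^{-\frac{i}{2}(\zeta-\eta)^\mu\Theta_{\mu\nu}\eta^\nu}
\end{equation}
for a suitable $\theta$-dependent constant $c$. Because the cocycle $(\xi,\eta)\mapsto e^{-\frac{i}{2}\xi^\mu\Theta_{\mu\nu}\eta^\nu}$ is smooth with polynomially bounded derivatives, integrating by parts repeatedly in $\eta$ shows that this twisted convolution of two Schwartz functions is again Schwartz, so $f\star g\in\mathcal{S}$; the same estimates give joint continuity of $(f,g)\mapsto f\star g$ for the Fréchet topology of $\mathcal{S}$. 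Associativity then reduces to the $2$-cocycle identity $\sigma(\xi,\eta)+\sigma(\xi+\eta,\zeta)=\sigma(\eta,\zeta)+\sigma(\xi,\eta+\zeta)$ for $\sigma(\xi,\eta)=-\tfrac12\xi^\mu\Theta_{\mu\nu}\eta^\nu$, which holds because $\sigma$ is bilinear, so each of the three terms $\xi\Theta\eta$, $\xi\Theta\zeta$, $\eta\Theta\zeta$ occurs once on either side.

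\emph{Main obstacle.} Everything structural here — the cocycle identity, the trace identity, the involution, Leibniz — is short; the real work is analytic bookkeeping: justifying the interchanges of integrals (Fubini, legitimate since $f,g\in\mathcal{S}\subset L^1$ but requiring care with the oscillatory factors) in passing to the Fourier form, and proving the seminorm estimates that make ``twisted convolution of Schwartz functions is Schwartz'' precise, i.e.\ the continuity of $\star$ on $\mathcal{S}$. Complete details are in \cite{Gracia-Bondia:1987kw, Varilly:1988jk}.
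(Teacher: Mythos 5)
The paper offers no proof of this proposition at all: it is stated as a quoted result with the citation \cite{Gracia-Bondia:1987kw, Varilly:1988jk} doing the work, so there is nothing internal to compare your argument against. Your sketch is the standard one found in those references — the change of variables giving the trace identity (with the prefactor $\tfrac{1}{(\pi\theta)^2}(2\pi)^2\tfrac{\theta^2}{4}=1$ checking out), the relabelling-plus-antisymmetry argument for the involution, differentiation under the integral for Leibniz, and the reduction of well-definedness and associativity to twisted convolution and the bilinear $2$-cocycle identity — and it is correct modulo exactly the analytic details you flag (Fubini with the oscillatory phase and the seminorm estimates showing $\mathcal{S}$ is stable under twisted convolution), which is where the cited papers spend their effort.
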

We set ${\cal{A}}:=({\cal{S}},\star)$. The $\star$-product \eqref{moyal} can be extended to spaces larger than ${\cal{S}}$, as recalled in the appendix. \par 
In this paper, we focus on metric properties so that it is sufficient to deal with the data defining a spectral triple as in Definition \ref{spectraltriple}. Recall that more conditions are needed to obtain a noncommutative analog of a non compact Riemann spin geometry. The proposal \cite{GAYRAL2004} is based on the existence of a preferred unitalisation of the initial non unital algebra. This permits one in particular to obtain an orientability condition through the construction of a suitable Hochschild cycle. The preferred unitalisation \cite{GAYRAL2004} of ${\cal{A}}$ is $({\cal{B}},\star)$, a unital Frechet pre-C* algebra, where ${\cal{B}}$ is the set of smooth bounded functions of $\mathbb{R}^2$ with all derivatives bounded. Note that one has ${\cal{A}}\subset{\bar{\cal{B}}}\subset{\mathbb{A}}_\theta\sim{\cal{L}}(L^2(\mathbb{R}))$ where ${\bar{\cal{B}}}$ is the C* completion of (${\cal{B}},\star)$ with respect to the norm $||a||:=\sup\{||a\star b||_2/||b||_2;\ b\in L^2(\mathbb{R}^2)\}$ and $\mathbb{A}_\theta$ is the unital C* algebra defined by $\mathbb{A}_\theta:=\{a\in{\cal{S}}^\prime;\ a\star b\in L^2(\mathbb{R}^2),\ \forall b\in L^2(\mathbb{R}^2) \}$. We will come back to this unitalisation in the remark \ref{unitalisation-1} below.\\
These algebras are related to the maximal unitalisation of ${\cal{A}}$, the multiplier algebra of ${\cal{A}}$ 
\begin{equation}
{\cal{M}}=\{a\in{\cal{S}}^\prime;\ a\star b\in{\cal{S}},\ b\star a\in{\cal{S}},\ \forall b\in{\cal{S}} \}\label{multiplikator},
\end{equation}
which however cannot be used here because it cannot be represented on the algebra of bounded operators of the Hilbert space of the Moyal triple ${\cal{H}}_0=L^2(\mathbb{R}^2)\otimes\mathbb{C}^2$. Note that what is (mostly) called Moyal algebra in the physics literature is the multiplier algebra ${\cal{M}}$.\par 
The next proposition \ref{formulas} collects useful formulas that we will be needed in the sequel.
\begin{proposition} \label{formulas}
We set $[f,g]_\star:=f\star g-g\star f$ and ${\tilde{x}}_\mu:=2\Theta^{-1}_{\mu\nu}x^\nu$. For any $f,g\in{\cal{M}}$, the following relations hold true:
\begin{equation}
\partial_\mu(f\star g)=\partial_\mu f\star g+f\star\partial_\mu g;\ (f\star g)^\dag=g^\dag\star f^\dag;\ \partial_\mu f=-{{i}\over{2}}[{\tilde{x}}_\mu,f]_\star\label{relat1},
\end{equation}
\begin{equation}
{\tilde{x}}_\mu\star f=({\tilde{x}}_\mu.f)+i\partial_\mu f;\ \{{\tilde{x}}_\mu,f\}_\star:={\tilde{x}}_\mu\star f+f\star{\tilde{x}}_\mu=2{\tilde{x}}_\mu.f \label{relat122}
\end{equation}
\end{proposition}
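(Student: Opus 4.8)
The plan is to isolate the one nontrivial identity, $\tilde{x}_\mu\star f=(\tilde{x}_\mu.f)+i\partial_\mu f$, and to deduce everything else from it by the involution and by pure algebra. The Leibniz rule and $(f\star g)^\dag=g^\dag\star f^\dag$ on $\mathcal{M}$ are just the transcription to the multiplier algebra of the two facts already recorded on $\mathcal{S}$ in Proposition \ref{moyalproperty}: since $\partial_\mu$ and complex conjugation preserve $\mathcal{M}$, and since a multiplier is determined by its $\star$-action on $\mathcal{S}$, the $\mathcal{S}$-versions propagate to $\mathcal{M}$ by the routine duality argument; so I would dispose of these first and concentrate on the coordinate identity.

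I would prove $\tilde{x}_\mu\star f=(\tilde{x}_\mu.f)+i\partial_\mu f$ for $f\in\mathcal{S}$ straight from \eqref{moyal}. As $v\mapsto\tilde{x}_\mu(v)=2\Theta^{-1}_{\mu\nu}v^\nu$ is linear, $\tilde{x}_\mu(x+y)=\tilde{x}_\mu(x)+\tilde{x}_\mu(y)$. The term $\tilde{x}_\mu(x)$ factors out of the double integral, and what is left is $\tfrac{1}{(\pi\theta)^2}\int d^2y\,d^2z\,f(x+z)\,e^{-i2y^\alpha\Theta^{-1}_{\alpha\beta}z^\beta}=f(x)$, because $\tfrac{1}{(\pi\theta)^2}\int d^2y\,e^{-i2y^\alpha\Theta^{-1}_{\alpha\beta}z^\beta}=\delta^2(z)$ (this is the Gaussian representation of the Dirac distribution, equivalently the statement that $1$ is the $\star$-unit, and uses $\det\Theta=\theta^2$); this recovers $\tilde{x}_\mu.f$. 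For the remaining $y$-linear term I would use the antisymmetry of $\Theta^{-1}$ in the form $\tilde{x}_\mu(y)\,e^{-i2y^\alpha\Theta^{-1}_{\alpha\beta}z^\beta}=-i\,\partial_{z^\mu}e^{-i2y^\alpha\Theta^{-1}_{\alpha\beta}z^\beta}$, integrate by parts in $z$ (no boundary term, $f$ being Schwartz) and collapse the $y$-integral again via $\delta^2(z)$; this produces precisely $i\,\partial_\mu f$.

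Everything else is algebra. Since $\tilde{x}_\mu$ is real, $\tilde{x}_\mu^\dag=\tilde{x}_\mu$; applying the involution to $\tilde{x}_\mu\star f^\dag=(\tilde{x}_\mu.f^\dag)+i\partial_\mu f^\dag$ gives $f\star\tilde{x}_\mu=(\tilde{x}_\mu.f)-i\partial_\mu f$. Adding the two identities yields $\{\tilde{x}_\mu,f\}_\star=2\,\tilde{x}_\mu.f$; subtracting them yields $[\tilde{x}_\mu,f]_\star=2i\,\partial_\mu f$, that is $\partial_\mu f=-\tfrac{i}{2}[\tilde{x}_\mu,f]_\star$. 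One then passes from $\mathcal{S}$ to $\mathcal{M}$ by the same continuity/density (or multiplier-bookkeeping) argument as above; as a consistency check, note that once $\partial_\mu=-\tfrac{i}{2}[\tilde{x}_\mu,\,\cdot\,]_\star$ holds on $\mathcal{M}$ it gives an independent proof of the Leibniz rule there, a commutator with a fixed element being a derivation of the associative product $\star$.

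I expect the only genuine difficulty to be keeping the above manipulations legitimate on $\mathcal{M}$ rather than merely on $\mathcal{S}$, where one is dealing with tempered distributions and the pointwise products $\tilde{x}_\mu.f$, the integrations by parts and the $\delta$-identity have to be read distributionally. This is harmless here because $\mathcal{M}$ consists exactly of the smooth functions all of whose derivatives are polynomially bounded, so $\tilde{x}_\mu.f$ and $\partial_\mu f$ remain in $\mathcal{M}$, $\mathcal{S}\star\mathcal{M}\subset\mathcal{S}$, and the $\star$-product on $\mathcal{M}$ is uniquely fixed by its restriction to $\mathcal{S}$; the Gaussian integral itself is entirely routine.
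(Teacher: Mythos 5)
Your proof is correct and fills in exactly the ``simple calculations'' that the paper's one-line proof leaves implicit: a direct evaluation of ${\tilde{x}}_\mu\star f$ from \eqref{moyal} (linearity of ${\tilde{x}}_\mu$, the $\delta$-collapse of the $y$-integral, and one integration by parts in $z$ using the antisymmetry of $\Theta^{-1}$), after which the anticommutator, the inner-derivation formula and the Leibniz rule all follow algebraically via the involution. The only inaccuracy is the parenthetical claim that ${\cal{M}}$ ``consists exactly of the smooth functions all of whose derivatives are polynomially bounded'' --- the multiplier algebra \eqref{multiplikator} is a larger space of tempered distributions --- but this does not affect the argument, since all that is needed is that ${\tilde{x}}_\mu\in{\cal{M}}$, that $\partial_\mu$ and multiplication by the polynomial ${\tilde{x}}_\mu$ are defined distributionally on ${\cal{M}}\subset{\cal{S}}^\prime$, and that the identities proved on ${\cal{S}}$ extend to ${\cal{M}}$ by the duality/continuity you invoke.
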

\begin{proof}
These relations can be obtained by simple calculations.
\end{proof}
Set $\partial:={{1}\over{{\sqrt{2}}}}(\partial_1-i\partial_2)$, $\delbar:={{1}\over{{\sqrt{2}}}}(\partial_1+i\partial_2)$. The standard self-adjoint Dirac operator on  $\mathbb{R}^2$ is
\begin{equation}
D_0:=-i\sigma^\mu\del_\mu = -i{\sqrt{2}}\begin{pmatrix} 0&\delbar \\ \del& 0 \end{pmatrix},\ \sigma^1=\begin{pmatrix} 0&1 \\ 1& 0 \end{pmatrix},\ 
\sigma^2=\begin{pmatrix} 0&i \\ -i& 0 \end{pmatrix},\label{cliff2}
\end{equation}
where the $\sigma^\mu$'s span an irreducible representation of the complex Clifford algebra $\mathbb{C}l_{\mathbb{C}}(2)$, $\sigma^\mu\sigma^\nu+\sigma^\nu\sigma^\mu=2\delta^{\mu\nu}$. We set $\sigma^3:=i\sigma^1\sigma^2$ and one can check $\sigma^1\sigma^3=i\sigma^2$, $\sigma^2\sigma^3=-i\sigma^1$, $\{\sigma^\mu,\sigma^3\}=0$. Let ${\cal{H}}_0= L^2({\mathbb{R}}^2)\otimes {\mathbb{C}}^2$ be the Hilbert space of square integrable sections of the trivial spinor bundle ${\mathbb{S}}={\mathbb{R}}^2\times{\mathbb{C}}^2$. The corresponding inner product is
\begin{equation}
\langle \psi,\phi\rangle=\int d^2x(\psi_1^*\phi_1+\psi_2^*\phi_2) \quad\forall
\;
\psi = \binom{ \psi_1 }{ \psi_2 }
,\;\phi = \binom{ \phi_1 }{ \phi_2 }\; \in{\cal{H}}_0.
\end{equation}
The domain of $D_0$ is Dom$(D_0)={\cal{D}}_{L^2}\otimes \mathbb{C}^2$ where ${\cal{D}}_{L^2}$ is the set of smooth functions in $L^2(\mathbb{R}^2)$ with all their derivatives in $L^2(\mathbb{R}^2)$.\par 

In the following, ${{L}}(a)\in{\cal{B}}({\cal{H}}_0)$ is the Left multiplication operator by any element of ${\cal{A}}$, defined by ${{L}}(a)\psi:=a\star\psi$, $\forall a\in{\cal{A}}$, $\forall\psi\in L^2(\mathbb{R}^2)$. One has $L(a)^\dag=L(a^\star)$. We denote by $\pi_0:{\cal{A}}\to{\cal{B}}({\cal{H}}_0)$, the faithful left regular representation of ${\cal{A}}$ on ${\cal{B}}({\cal{H}}_0)$, namely:
\begin{equation}
\pi_0(a):={{L}}(a)\otimes\bbbone_2,\ \pi_0(a)\psi=(a\star\psi_1,a\star\psi_2),\ \forall a\in{\cal{A}},\ \forall\psi=(\psi_1,\psi_2)\in{\cal{H}}_0\label{regulrep}.
\end{equation}
Let $\ell_{D_0}(a):=||[D_0,\pi_0(a)] ||$, $\forall a\in{\cal{A}}$, be the Lipschitz seminorm on ${\cal{A}}$ for $D_0$ \eqref{cliff2}.
\begin{proposition}\label{lzero}
$\ell_{D_0}(a)={\sqrt{2}}\max(||{{L}}(\partial a)||,\ ||{{L}}(\delbar a)||), \forall a\in{\cal{A}}$.
\end{proposition}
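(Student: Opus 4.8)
The plan is to compute the operator norm of the commutator $[D_0,\pi_0(a)]$ directly from its matrix form. First I would use the Leibniz rule from Proposition \ref{formulas}, together with the explicit form \eqref{cliff2} of $D_0$, to obtain
\begin{equation}
[D_0,\pi_0(a)]=-i\sqrt{2}\begin{pmatrix} 0 & L(\delbar a) \\ L(\del a) & 0 \end{pmatrix},
\end{equation}
where the off-diagonal entries act by left $\star$-multiplication by $\delbar a$ and $\del a$ respectively on $L^2(\mathbb{R}^2)$. Here I am using that $\pi_0(a)$ preserves $\DOM(D_0)$ (so that the commutator makes sense on the relevant domain and extends to a bounded operator by iii) of Definition \ref{spectraltriple}), and that differentiation acts only on the $\star$-multiplication factor, producing $L(\del a)$ and $L(\delbar a)$ by the Leibniz identity.

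Next I would compute the norm of this $2\times 2$ operator matrix. Writing $B:=[D_0,\pi_0(a)]$, one has $B^\dag B = 2\,\mathrm{diag}\big(L(\del a)^\dag L(\del a),\ L(\delbar a)^\dag L(\delbar a)\big)$, since the cross terms vanish because the matrix is purely off-diagonal. The norm of a block-diagonal positive operator is the maximum of the norms of the blocks, so $\|B\|^2 = 2\max\big(\|L(\del a)^\dag L(\del a)\|,\ \|L(\delbar a)^\dag L(\delbar a)\|\big) = 2\max\big(\|L(\del a)\|^2,\ \|L(\delbar a)\|^2\big)$, using the C*-identity $\|T^\dag T\|=\|T\|^2$ on $\mathcal{B}(L^2(\mathbb{R}^2))$. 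Taking the square root gives $\ell_{D_0}(a)=\sqrt{2}\max(\|L(\del a)\|,\|L(\delbar a)\|)$, which is the claimed formula.

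I expect the only genuine point requiring care is the justification that the operator norm of $B$ as an operator on $\mathcal{H}_0=L^2(\mathbb{R}^2)\otimes\mathbb{C}^2$ really reduces to the maximum over the two diagonal blocks of $B^\dag B$ — i.e. that no cancellation or interference between the two $\mathbb{C}^2$-components can lower or raise the norm. This is immediate once one observes $B^\dag B$ is block-diagonal: for $\psi=(\psi_1,\psi_2)$ one computes $\langle\psi, B^\dag B\psi\rangle = 2\|L(\del a)\psi_1\|_2^2 + 2\|L(\delbar a)\psi_2\|_2^2$, and the supremum of this over $\|\psi_1\|_2^2+\|\psi_2\|_2^2\le 1$ is plainly $2\max(\|L(\del a)\|^2,\|L(\delbar a)\|^2)$, attained by concentrating all the mass in the component realizing the larger norm. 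Everything else — the matrix form of $D_0$, the Leibniz rule, the identity $L(a)^\dag=L(a^\star)$ — is already recorded in the excerpt, so the proof is essentially this short computation.
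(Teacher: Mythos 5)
Your proposal is correct and follows essentially the same route as the paper: compute $[D_0,\pi_0(a)]=-iL(\partial_\mu a)\otimes\sigma^\mu$ via the Leibniz rule, apply the C*-identity $\|T\|^2=\|T^\dag T\|$, and read off the norm from the resulting block-diagonal positive operator. The extra detail you supply on why the norm of the block-diagonal operator is the maximum of the block norms is exactly the ``direct calculation'' the paper leaves implicit.
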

\begin{proof}
By Leibnitz rule, Proposition \ref{moyalproperty}, one infers $[\partial_\mu,{{L}}(a)]={{L}}(\partial_\mu(a))$ and therefore $[D_0,\pi(a)]=-i{{L}}(\partial_\mu  a)\otimes\sigma^\mu$ for any $a\in{\cal{A}}$. Then, by applying the general property of the operator norm $||T||^2=||TT^\dagger||$ to the operator $T=[D_0,\pi(a)]$, a direct calculation yields the result.
\end{proof}
\begin{proposition}\label{X0}\cite{GAYRAL2004}
${\mathfrak{X}}_{D_0}=({\cal{A}},\ \pi_0,\ {\cal{H}}_0=L^2({\mathbb{R}}^2)\otimes {\mathbb{C}}^2,\ D_0)$ where $\pi_0:{\cal{A}}\to{\cal{B}}({\cal{H}}_0)$ is the left regular representation \ref{regulrep} and $D_0$ is the standard Dirac operator on $\mathbb{R}^2$ \eqref{cliff2} is a spectral triple as in Definition\ref{spectraltriple}. It is the standard Moyal spectral triple.
\end{proposition}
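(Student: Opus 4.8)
The plan is to verify the four conditions of Definition \ref{spectraltriple} for the data $\mathfrak{X}_{D_0}=(\caA,\pi_0,\caH_0,D_0)$, treating the non-unital, non-compact nature of $\caA=(\caS,\star)$ as the source of the only real difficulties. Condition i) is essentially immediate: $\caA$ is involutive under complex conjugation (Proposition \ref{moyalproperty}), and $\pi_0(a)=L(a)\otimes\bbbone_2$ is a $\star$-representation because $L(a)^\dag=L(a^\star)$ and $L(a\star b)=L(a)L(b)$ by associativity of $\star$; faithfulness follows since $a\star b=0$ for all $b\in\caS$ forces $a=0$ (use an approximate identity or the trace property $\int(a\star b)=\int ab$ with $b$ ranging over $\caS$). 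For condition ii), note that $\caS\star\caS\subset\caS$ (Proposition \ref{moyalproperty}), hence $\pi_0(a)$ maps $\dlde\otimes\gC^2=\DOM(D_0)$ into itself: if $\psi\in\dlde$ then $a\star\psi$ is smooth with all derivatives in $L^2$ because by the Leibniz rule $\partial^\alpha(a\star\psi)$ is a finite sum of terms $\partial^\beta a\star\partial^\gamma\psi$, each lying in $L^2(\gR^2)$ since $\partial^\beta a\in\caS$ acts by a bounded left-multiplication operator; $D_0$ itself is self-adjoint on $\dlde\otimes\gC^2$ as a standard fact about the flat Dirac operator on $\gR^2$.

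Condition iii) is handled by Proposition \ref{lzero}: the computation $[D_0,\pi_0(a)]=-iL(\partial_\mu a)\otimes\sigma^\mu$ shows the commutator is bounded because $\partial_\mu a\in\caS$ and left multiplication by a Schwartz function is a bounded operator on $L^2(\gR^2)$ (indeed $\|L(f)\|\le C\|f\|$ for a suitable Moyal norm, or more simply $\caS\subset\bar\caB\subset\mathbb{A}_\theta$). So $\ell_{D_0}(a)<\infty$ for every $a\in\caA$, and the explicit formula of Proposition \ref{lzero} follows from $\|T\|^2=\|TT^\dag\|$ applied to $T=[D_0,\pi_0(a)]$, using $\sigma^\mu\sigma^\nu+\sigma^\nu\sigma^\mu=2\delta^{\mu\nu}$ to diagonalize $TT^\dag$.

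The main work is condition iv): for $a\in\caA$ and $\lambda\notin\SPEC(D_0)$, the operator $\pi_0(a)(D_0-\lambda)^{-1}$ must be compact on $\caH_0$. Here $\caA$ is non-unital, so one cannot simply invoke compactness of the resolvent alone — $D_0$ has continuous spectrum $\gR$ and $(D_0-\lambda)^{-1}$ is bounded but not compact. The strategy is to show that $\pi_0(a)(D_0-\lambda)^{-1}$ is a compact operator by approximating it in operator norm by finite-rank operators. A clean route: since $\caS$ is spanned (in its own topology, and densely in $L^2$) by the Hermite/Wigner transition eigenfunctions of the harmonic oscillator — the matrix basis $\{f_{mn}\}$ diagonalizing $\star$-multiplication — it suffices to prove compactness for $a=f_{mn}$, a rank-one-type element, and then pass to the limit using that $a\mapsto\pi_0(a)(D_0-\lambda)^{-1}$ is norm-continuous (bounded by $\|L(a)\|\,\|(D_0-\lambda)^{-1}\|$) and that $f\mapsto L(f)$ sends the Moyal-norm-convergent expansion of $a$ to a norm-convergent series. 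For the generator $f_{mn}$, one writes $\pi_0(f_{mn})(D_0-\lambda)^{-1}$ and exhibits it as a product of a Hilbert-Schmidt (hence compact) operator with a bounded one: left $\star$-multiplication by a Schwartz function has a smooth rapidly decaying integral kernel on $L^2(\gR^2)$, while $(D_0-\lambda)^{-1}$ has a kernel (a Bessel-type Green's function) that decays and is locally integrable in $d=2$, and the composition's kernel is square-integrable because the Schwartz factor provides decay in the "outgoing" variable. Alternatively, and perhaps more transparently, one uses the standard criterion that $g(x)h(-i\partial)$ is compact on $L^2(\gR^n)$ whenever $g,h\to 0$ at infinity (both vanish at infinity here, with $g$ Schwartz and $h(\xi)=(\sigma^\mu\xi_\mu-\lambda)^{-1}\to 0$), combined with the fact that the Moyal left-multiplication $L(a)$ differs from the pointwise multiplication $m(a)$ by a smoothing operator, so $\pi_0(a)(D_0-\lambda)^{-1}$ inherits compactness. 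I expect this last point — controlling $L(a)$ versus $m(a)$ and assembling the compactness argument rigorously for non-unital $\caA$ — to be the genuine obstacle; everything else is bookkeeping with the Leibniz rule and Schwartz-space bounds. One then records that this is precisely the content of \cite{GAYRAL2004}, so the proof reduces to citing that construction and checking that Definition \ref{spectraltriple} asks for nothing beyond what is established there.
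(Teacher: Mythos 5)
Your handling of axioms i)--iii) is essentially the paper's: boundedness of $L(a)$ (the paper uses $\|a\star b\|_2\le(2\pi\theta)^{-1/2}\|a\|_2\|b\|_2$ from the matrix basis plus Cauchy--Schwarz), the Leibniz rule for the domain stability, and Proposition \ref{lzero} for iii). The gap is in axiom iv), which is indeed the only hard point, and both routes you sketch rest on a false claim. For route (a): left Moyal multiplication $L(a)$ by a Schwartz function is \emph{not} Hilbert--Schmidt and does not have a jointly rapidly decaying kernel. From \eqref{kla} one gets $K_{L(a)}(x,y)=(\pi\theta)^{-2}e^{i2x\Theta^{-1}y}\,\check a\big(2\Theta^{-1}(x-y)\big)$ up to conventions: a unimodular phase times a Schwartz function of the \emph{difference} $x-y$ only, hence not square integrable in $(x,y)$. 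Concretely, $f_{00}\star f_{00}=f_{00}$ and $f_{00}^\dagger=f_{00}$, so $L(f_{00})$ is an orthogonal projection with infinite-dimensional range $\overline{\mathrm{span}}\{f_{0l}\}_{l\in\mathbb{N}}$ --- not even compact. What is true is that the \emph{composition} $L(a)(-\partial^2+1)^{-1}$ is Hilbert--Schmidt, which is precisely the statement your proposal asserts (``the composition's kernel is square-integrable'') but never proves; it is the entire content of the paper's proof of iv): after reducing $\pi_0(a)(D_0-\lambda)^{-1}\in\mathcal{K}({\cal{H}}_0)$ to $\pi_0(a)(D_0^2+1)^{-1}\in\mathcal{K}({\cal{H}}_0)$ by the standard resolvent equivalence and using $D_0^2=-\partial^2\otimes\bbbone_2$, the paper computes $K_{L(a)(-\partial^2+1)^{-1}}(x,y)=C\int d^2p\;a(x+\tfrac12\Theta p)(p^2+1)^{-1}e^{ip(x-y)}$ and evaluates $\int|K|^2=(C')^2\|a\|_2^2\|(p^2+1)^{-1}\|_2^2<\infty$ by a change of variables.

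Route (b) is also unsound: $L(a)-m(a)$ is not a compact (let alone smoothing) perturbation. Taking $a=f_{00}$ again, $L(f_{00})$ is an infinite-rank projection with infinite-dimensional kernel, so its essential spectrum is $\{0,1\}$, whereas $m(f_{00})$ is multiplication by a nonconstant positive Gaussian, whose essential spectrum is its essential range, a nondegenerate interval $[0,\max f_{00}]$. A compact perturbation preserves the essential spectrum, so $L(f_{00})-m(f_{00})\notin\mathcal{K}(L^2(\mathbb{R}^2))$, and you cannot transfer compactness from $m(a)(D_0-\lambda)^{-1}$ (which is indeed compact by the $g(x)h(-i\partial)$ criterion) to $L(a)(D_0-\lambda)^{-1}$ this way. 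The honest repair is either the paper's direct kernel estimate above, or your matrix-basis reduction with the correct endpoint: prove that each $L(f_{mn})(-\partial^2+1)^{-1}$ --- not $L(f_{mn})$ itself --- is Hilbert--Schmidt, then pass to general $a\in\mathcal{A}$ via $\|L(b)\|\le(2\pi\theta)^{-1/2}\|b\|_2$ and the $L^2$-convergence of the matrix-basis expansion; either way the kernel computation you deferred has to be done.
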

The initial proof can be found in \cite{GAYRAL2004}. To have a self-contained presentation, we give a proof of Proposition \ref{X0} in the appendix.\par

In this paper, we will not need to deal with explicit expressions for states. A characterization of the space of pure states taken from Proposition 4 of \cite{WAL1} is given in the appendix. The explicit formula for the spectral distance between pure states related to the eigenfunctions of the harmonic oscillator constructed in \cite{WAL1, WAL2} is also given. In \cite{WAL1, WAL2}, the existence of states at infinite distance has been shown together with the fact that the Lipschitz ball for the Dirac operator is not norm bounded (see e.g section 3.5 of \cite{WAL2}). Therefore, the topology induced by $d_{D_0}$ on ${\mathfrak{S}}({\cal{A}})$ is not the weak* topology. It follows that the Standard Moyal Spectral Triple $\mathfrak{X}_{D0}$ is neither a compact nor a bounded locally compact spectral metric space.\par

\subsection{The Harmonic Moyal Spectral Triple}\label{harmonictriple}

Starting from (a $\mathbb{R}^{2n}$ version of) the spectral triple $\mathfrak{X}_{D_0}${\footnote{with additional conditions together with a preferred unitalisation described in subsection \ref{standardmoyalgeo}}}, the corresponding spectral action has been computed in \cite{GAYRAL2004}. It has been shown to be the simplest generalization of the Yang-Mills action on Moyal space which however is not renormalisable (for $n\ge2)$. Keeping in mind the renormalisability proof of a 4-dimensional $\varphi^4$ scalar theory on Moyal space given in \cite{GROSSEWULK} for which the presence of a harmonic oscillator term is essential, another interesting triple has been considered in \cite{GROSSEWULK2}. It has been further investigated in \cite{wulkgayral2010} (see also \cite{wulk-finite}), from the viewpoint of spectral dimension and computation of the spectral action. This provides a nice attempt to understand more deeply the actual noncommutative structures related to the gauge invariant model derived in \cite{WAL6}, \cite{GROSSEWOHL}. \par 
As we are presently interested by metric properties, we consider only the $K$-cycle part of this spectral triple as given in Definition \ref{spectraltriple}, called hereafter the Harmonic Moyal Spectral Triple. Its salient features is that the related Dirac operator is required to be a ''square root'' of the harmonic oscillator Hamiltonian. In the following, we assume $n=1$. The extension to any integer value for $n$ is straightforward.\par 

The construction of square roots of operators is somewhat standard in physics. In the present case, it can be conveniently carried out as follows. Introduce $(\gamma^\mu,\gamma^{\mu+2})_{\mu=1,2}$, hermitian elements of the matrix algebra $\mathbb{M}_4(\mathbb{C})$, assumed to span an irreducible representation of the complex Clifford algebra $\mathbb{C}l_{\mathbb{C}}(4)$:
\begin{equation}
\{\gamma^\mu,\gamma^\nu\}=2\delta^{\mu\nu},\ \{\gamma^{\mu+2},\gamma^{\nu+2}\}=2\delta^{\mu\nu},\ \{\gamma^\mu,\gamma^{\nu+2}\}=0,\ \mu,\nu\in\{1,2\}\label{cliff4}.
\end{equation}
Consider the following family of unbounded self adjoint Dirac operators indexed by a real parameter chosen in the range $\Omega\in ]0,1]$ to make contact with the physics literature:
\begin{equation}
D_\Omega:=\gamma^\mu(-i\partial_\mu)-\Omega\gamma^{\mu+2}m({\tilde{x}}_\mu), \label{Domega}
\end{equation}
where  $m({\tilde{x}}_\mu)a:={\tilde{x}}_\mu a$ for any $a\in{\cal{S}}$ and the domain Dom$(D_\Omega)$ is chosen to be a dense subset of  ${\cal{H}}={\cal{H}}_0\otimes \mathbb{C}^2\cong L^2(\mathbb{R}^2)\otimes{\mathbb{C}}^4$, the Hilbert space of square integrable sections of the trivial spinor bundle ${\mathbb{S}}={\mathbb{R}}^2\times{\mathbb{C}}^4$. We pick Dom$(D_\Omega)={\cal{S}}\otimes\mathbb{C}^4$. \par 

Important features for the metric properties are related to the operator $[D_\Omega,\pi(a)]$. Here, the faithful representation $\pi:{\cal{A}}\to{\cal{H}}$ denotes the extension of the left regular representation $\pi_0:{\cal{A}}\to{\cal{B}}({\cal{H}}_0)$ of Proposition \ref{X0}, namely 
\begin{equation}
\pi(a):=L(a)\otimes\bbbone_4,\ \forall a\in{\cal{A}}\label{pi4}.
\end{equation}
The Dirac operator $D_\Omega$ satisfies useful algebraic relations, stemming from \eqref{cliff4} and the properties of the Moyal product given in Proposition \ref{formulas}.
\begin{proposition} \label{algebrorelat}
The following properties hold true:
\begin{equation}
D^2_\Omega=(-\partial^2+\Omega^2{\tilde{x}}^2)\bbbone_4+i2\Omega\gamma^\mu\gamma^{\nu+2}\Theta^{-1}_{\nu\mu},\label{hamiltonian}
\end{equation}
\begin{equation}
[D_\Omega,\pi(a)]=-iL(\partial_\mu a)\otimes(\gamma^\mu+\Omega\gamma^{\mu+2}),\ \forall a\in{\cal{A}} \label{da}.
\end{equation}
\end{proposition}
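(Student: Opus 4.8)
The plan is to verify both identities in Proposition~\ref{algebrorelat} by direct computation, using only the Clifford relations \eqref{cliff4}, the representation \eqref{pi4}, and the Moyal identities collected in Propositions \ref{moyalproperty} and \ref{formulas}. Neither statement is deep; the work is in organizing the cross terms correctly.

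For \eqref{da}, I would first note that on $\DOM(D_\Omega)={\cal S}\otimes\gC^4$ one has $\pi(a)=L(a)\otimes\bbbone_4$, so $[D_\Omega,\pi(a)]=\big[\gamma^\mu(-i\partial_\mu),L(a)\otimes\bbbone_4\big]-\Omega\big[\gamma^{\mu+2}m({\tilde x}_\mu),L(a)\otimes\bbbone_4\big]$. The first commutator reduces, by the same argument as in Proposition~\ref{lzero} (Leibniz rule giving $[\partial_\mu,L(a)]=L(\partial_\mu a)$), to $-iL(\partial_\mu a)\otimes\gamma^\mu$. For the second term I would use that $m({\tilde x}_\mu)$ is pointwise multiplication, so $[m({\tilde x}_\mu),L(a)]b = {\tilde x}_\mu.(a\star b) - a\star({\tilde x}_\mu.b)$; by the second identity in \eqref{relat122}, $a\star({\tilde x}_\mu.b)$ and ${\tilde x}_\mu.(a\star b)$ differ precisely by a derivative term, and one extracts $[m({\tilde x}_\mu),L(a)] = -iL(\partial_\mu a)$ — equivalently one can invoke $\partial_\mu a = -\tfrac i2[{\tilde x}_\mu,a]_\star$ from \eqref{relat1}. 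Assembling, $[D_\Omega,\pi(a)] = -iL(\partial_\mu a)\otimes\gamma^\mu - \Omega\big(-iL(\partial_\mu a)\big)\otimes\gamma^{\mu+2}\cdot(-1)$; care with the sign of the $\Omega$-term and with which factor the derivative lands on yields $-iL(\partial_\mu a)\otimes(\gamma^\mu+\Omega\gamma^{\mu+2})$.

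For \eqref{hamiltonian}, I would expand $D_\Omega^2 = \big(\gamma^\mu(-i\partial_\mu)-\Omega\gamma^{\mu+2}m({\tilde x}_\mu)\big)^2$ into four blocks. The $\gamma\gamma$ block gives $-\gamma^\mu\gamma^\nu\partial_\mu\partial_\nu = -\tfrac12\{\gamma^\mu,\gamma^\nu\}\partial_\mu\partial_\nu = -\partial^2\,\bbbone_4$ using the first relation in \eqref{cliff4} and symmetry of $\partial_\mu\partial_\nu$. The $\gamma^{+2}\gamma^{+2}$ block gives $\Omega^2\gamma^{\mu+2}\gamma^{\nu+2}{\tilde x}_\mu{\tilde x}_\nu = \Omega^2{\tilde x}^2\,\bbbone_4$ by the second relation in \eqref{cliff4}. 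The two cross blocks combine to $i\Omega\big(\gamma^\mu\gamma^{\nu+2}+\gamma^{\nu+2}\gamma^\mu\big)$-type terms acting on $\partial_\mu m({\tilde x}_\nu)$ and $m({\tilde x}_\nu)\partial_\mu$; since $\{\gamma^\mu,\gamma^{\nu+2}\}=0$, the symmetric part vanishes and only the commutator $[\partial_\mu,m({\tilde x}_\nu)] = \partial_\mu({\tilde x}_\nu) = 2\Theta^{-1}_{\nu\mu}$ survives, producing the constant matrix term $i2\Omega\gamma^\mu\gamma^{\nu+2}\Theta^{-1}_{\nu\mu}$. The only subtlety here is bookkeeping: tracking the $-i$ and $-\Omega$ prefactors through the cross terms and recognizing that $\partial_\mu({\tilde x}_\nu\cdot)$ minus ${\tilde x}_\nu\partial_\mu(\cdot)$ is multiplication by the constant $\partial_\mu{\tilde x}_\nu$, which is where the antisymmetry of $\Theta^{-1}$ enters.

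The main obstacle, such as it is, is purely clerical: getting all signs and the placement of the derivative (acting on $a$ versus on the spinor argument) consistent in \eqref{da}, and in \eqref{hamiltonian} correctly isolating the anticommuting cross terms so that only the $c$-number curvature piece $i2\Omega\gamma^\mu\gamma^{\nu+2}\Theta^{-1}_{\nu\mu}$ remains. I do not anticipate any analytic difficulty — everything takes place on the dense domain ${\cal S}\otimes\gC^4$ where all operations are legitimate — so I would simply present the computation compactly, citing Propositions \ref{moyalproperty} and \ref{formulas} for each Moyal-algebraic step.
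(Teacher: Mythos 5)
Your proposal is correct and follows essentially the same route as the paper: \eqref{hamiltonian} by expanding $D_\Omega^2$ and using the Clifford relations \eqref{cliff4} so that in the cross terms only the commutator $[\partial_\mu,m({\tilde{x}}_\nu)]=2\Theta^{-1}_{\nu\mu}$ survives, and \eqref{da} by reducing the $m({\tilde{x}}_\mu)$ contribution to a $\star$-commutator via \eqref{relat122} and the inner-derivation identity in \eqref{relat1}, exactly as in the paper's intermediate step \eqref{interm1}. One clerical correction: the commutator is $[m({\tilde{x}}_\mu),L(a)]=+iL(\partial_\mu a)$ (since $[{\tilde{x}}_\mu,a]_\star=2i\,\partial_\mu a$), not $-iL(\partial_\mu a)$; with this sign the assembly yields $-iL(\partial_\mu a)\otimes(\gamma^\mu+\Omega\gamma^{\mu+2})$ directly, without the extra factor of $(-1)$ you insert.
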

\begin{proof}
For \eqref{hamiltonian}, a direct calculation, using \eqref{cliff4} leads to the result. To obtain \eqref{da}, a calculation yields
\begin{equation}
 [D_\Omega,\pi(a)]\Psi=-i\gamma^\mu\partial_\mu a\star\Psi-\Omega\gamma^{\mu+2}({\tilde{x}}_\mu(a\star\Psi)-a\star({\tilde{x}}_\mu\Psi))  
\label{interm1}.
\end{equation}
for any $\Psi$ in Dom$(D_\Omega)$ and any $a\in{\cal{A}}$. In \eqref{interm1}, recall that the ordinary multiplication of functions and the Moyal product $\star$ act componentwise on $\Psi$. Then, by combining the second term of \eqref{interm1} to the relation \eqref{relat122} and further using the fact that $\partial_\mu$ can be expressed as an inner derivation through the last relation in \eqref{relat1}, some algebraic manipulations give rise to the result \eqref{da}.
\end{proof}
It is now convenient to introduce explicit representations for the $\gamma$ matrices and focus on the corresponding Dirac operators. We therefore set 
\begin{equation}
\gamma^\mu:=\Gamma^1\otimes\sigma^\mu,\ \gamma^{\mu+2}:=\sigma^\mu\otimes\Gamma^2,\ \mu=1,2, 
\end{equation}
where the $\sigma$'s are defined in \eqref{cliff2} and $\Gamma^1$, $\Gamma^2$ are hermitian elements of $\mathbb{M}_2(\mathbb{C})$. Then, it can be verified that each of the two families of self-adjoint Dirac operators defined by
\begin{equation}
D_{1}:=(\bbbone_2\otimes\sigma^\mu)(-i\partial_\mu)-\Omega(\sigma^\mu\otimes\sigma^3)m({\tilde{x}}_\mu)\label{Domega1}
\end{equation}
\begin{equation}
D_{2}:=(\sigma^3\otimes\sigma^\mu)(-i\partial_\mu)-\Omega(\sigma^\mu\otimes\bbbone_2)m({\tilde{x}}_\mu)\label{Domega2}
\end{equation}
satisfies a relation similar to \eqref{hamiltonian}. Namely, \eqref{hamiltonian} takes the form
\begin{equation}
D^2_{1}=D^2_{2}=(-\partial^2+\Omega^2{\tilde{x}}^2)\bbbone_4-{{2\Omega}\over{\theta}}(\sigma^\mu\otimes\sigma^\mu),\label{hamiltonian12}
\end{equation}
while \eqref{da} still holds with $\gamma^\mu=\Gamma^1\otimes\sigma^\mu$, $\gamma^{\mu+2}=\sigma^\mu\otimes\Gamma^2$, $\mu=1,2$ which can be read off from \eqref{Domega1}, \eqref{Domega2}.\par

Let $d_{D_k}$, $k=1,2$ denotes the spectral distance for the Dirac operators \eqref{Domega1}, \eqref{Domega2}. In view of eqn.\eqref{specdist}, the relation \eqref{da} readily signals the existence of a simple relationship between $d_{D_k}$ and $d_{D_0}$. This can be summarized into the following statement.
\begin{theorem}\label{th1}
For any $\Omega\in]0,1]$, the triples ${\mathfrak{X}}(k):=({\cal{A}},\ \pi,\ {\cal{H}},\ D_k)$, $k=1,2$, where $\pi:{\cal{A}}\to{\cal{B}}({\cal{H}})$, $\pi(a)=L(a)\otimes\bbbone_4$, $\forall a\in{\cal{A}}$,  is the left regular representation, are spectral triples with corresponding spectral distances $d_{D_k}$ homothetic to $d_{D_0}$, namely
\begin{equation}
d_{D_k}(\omega_1,\omega_2)=(1+\Omega^2)^{-{{1}\over{2}}}d_{D_0}(\omega_1,\omega_2), \ \forall k=1,2,\ \forall\omega_1,\omega_2\in{\mathfrak{S}}({\cal{A}})\label{distanceD}.
\end{equation}
\end{theorem}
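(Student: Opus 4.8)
The plan is to reduce the whole statement to the seminorm $\ell_{D_0}$ already computed in Proposition~\ref{lzero}: I will show that, for $k=1,2$, the $D_k$-seminorm is $\ell_{D_k}=\sqrt{1+\Omega^2}\,\ell_{D_0}$ on all of ${\cal{A}}$, after which the homothety \eqref{distanceD} of the distances drops out of the definition \eqref{specdist}. The proof that each ${\mathfrak{X}}(k)$ is a genuine spectral triple is then a collection of routine checks.

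\emph{Step 1 (the commutator).} Specializing \eqref{da} of Proposition~\ref{algebrorelat} to $D_k$ and abbreviating $\Gamma^\mu:=\gamma^\mu+\Omega\gamma^{\mu+2}$ (the $\gamma$'s being read off from \eqref{Domega1}--\eqref{Domega2}), one has $[D_k,\pi(a)]=-iL(\partial_\mu a)\otimes\Gamma^\mu$ for every $a\in{\cal{A}}$, where the $\Gamma^\mu$ are hermitian elements of $\mathbb{M}_4(\mathbb{C})$. The only relevant feature of these matrices is their Clifford relation: expanding the anticommutator and using \eqref{cliff4} gives $\{\Gamma^\mu,\Gamma^\nu\}=\{\gamma^\mu,\gamma^\nu\}+\Omega^2\{\gamma^{\mu+2},\gamma^{\nu+2}\}+\Omega\big(\{\gamma^\mu,\gamma^{\nu+2}\}+\{\gamma^{\mu+2},\gamma^\nu\}\big)=2(1+\Omega^2)\delta^{\mu\nu}\bbbone_4$. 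Hence the rescaled matrices $\widehat\Gamma^\mu:=(1+\Omega^2)^{-1/2}\Gamma^\mu$ satisfy $\{\widehat\Gamma^\mu,\widehat\Gamma^\nu\}=2\delta^{\mu\nu}\bbbone_4$, i.e.\ they span a $\star$-representation of $\mathbb{C}l_{\mathbb{C}}(2)$ on $\mathbb{C}^4$.

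\emph{Step 2 (the norm) and Step 3 (conclusion).} Since $\mathbb{C}l_{\mathbb{C}}(2)\cong\mathbb{M}_2(\mathbb{C})$ is simple with a unique two-dimensional irreducible representation, namely the one afforded by the Pauli matrices $\sigma^1,\sigma^2$ of \eqref{cliff2}, every nondegenerate $\star$-representation of it on $\mathbb{C}^4$ is unitarily equivalent to two copies of that irrep; thus there is a unitary $U\in\mathbb{M}_4(\mathbb{C})$ with $U\widehat\Gamma^\mu U^\dag=\bbbone_2\otimes\sigma^\mu$. Conjugating $[D_k,\pi(a)]$ by the unitary $\bbbone_{L^2}\otimes U$ on ${\cal{H}}$ then exhibits it as unitarily equivalent to $\sqrt{1+\Omega^2}\,\big([D_0,\pi_0(a)]\otimes\bbbone_2\big)$ on ${\cal{H}}_0\otimes\mathbb{C}^2$; taking operator norms and using $\|A\otimes\bbbone_2\|=\|A\|$ yields $\ell_{D_k}(a)=\sqrt{1+\Omega^2}\,\ell_{D_0}(a)$ for all $a\in{\cal{A}}$. (Alternatively one avoids the representation theory by computing $[D_k,\pi(a)][D_k,\pi(a)]^\dag$ directly and diagonalizing its $4\times4$ matrix part by means of the hermitian involution $\big(2i(1+\Omega^2)\big)^{-1}[\Gamma^1,\Gamma^2]$, which is traceless and squares to $\bbbone_4$, so the product equals $(1+\Omega^2)$ times two copies of $[D_0,\pi_0(a)][D_0,\pi_0(a)]^\dag$.) Since $\ell_{D_k}$ is absolutely homogeneous, $\{a\in{\cal{A}};\ \ell_{D_k}(a)\le1\}=(1+\Omega^2)^{-1/2}\{a\in{\cal{A}};\ \ell_{D_0}(a)\le1\}$, and the supremum in \eqref{specdist} over this ball rescales $d_{D_0}$ by $(1+\Omega^2)^{-1/2}$, which is \eqref{distanceD}. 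Finally ${\mathfrak{X}}(k)$ is a spectral triple in the sense of Definition~\ref{spectraltriple}: $\pi$ is the faithful $\star$-extension of $\pi_0$; $D_k$ is self-adjoint on the dense, $\pi({\cal{A}})$-invariant domain ${\cal{S}}\otimes\mathbb{C}^4$; the commutator is bounded by Step~1; and the localized compactness follows from \eqref{hamiltonian12}, because $D_k^2$ is the scalar harmonic-oscillator Hamiltonian $-\partial^2+\Omega^2{\tilde{x}}^2$ plus a constant bounded matrix, hence has compact resolvent, so $\pi(a)(D_k-\lambda)^{-1}$ is compact for $\lambda\notin$ sp$(D_k)$ (cf.\ \cite{GROSSEWULK2, wulkgayral2010}).

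\emph{Main obstacle.} The crux is Step~2: that the operator norm of $[D_k,\pi(a)]$ sees nothing of the effective Clifford matrices $\Gamma^\mu$ beyond the rescaled metric $(1+\Omega^2)\delta^{\mu\nu}$, equivalently that $\widehat\Gamma^\mu$ is unitarily equivalent to $\bbbone_2\otimes\sigma^\mu$ (equivalently, the explicit diagonalization of the matrix part of $[D_k,\pi(a)][D_k,\pi(a)]^\dag$). This is precisely what pins the homothety factor to $(1+\Omega^2)^{-1/2}$, the square root of the determinant-type invariant of the effective Clifford metric mentioned in the introduction; everything else is bookkeeping with Propositions~\ref{lzero} and \ref{algebrorelat}.
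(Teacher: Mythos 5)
Your proof is correct and its overall architecture (reduce everything to the seminorm identity $\ell_{D_k}=\sqrt{1+\Omega^2}\,\ell_{D_0}$, deduce the homothety from the definition of $d_D$, then check the four axioms with the compact resolvent of the harmonic oscillator for iv)) matches the paper's. Where you genuinely diverge is in the proof of the key lemma. The paper writes out $[D_1,\pi(a)]$ as an explicit $4\times4$ matrix in $L(\partial a)$, $L(\delbar a)$, computes $[D_1,\pi(a)]^*[D_1,\pi(a)]$ entrywise, and exhibits a concrete unitary $\mathfrak{U}$ that block-diagonalizes it into $(1+\Omega^2)\,\mathrm{diag}(L^*L,{\bar L}^*{\bar L},L^*L,{\bar L}^*{\bar L})$, then invokes Proposition \ref{lzero}; the case $k=2$ is handled by saying the analysis is similar. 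You instead observe that $\Gamma^\mu=\gamma^\mu+\Omega\gamma^{\mu+2}$ satisfies $\{\Gamma^\mu,\Gamma^\nu\}=2(1+\Omega^2)\delta^{\mu\nu}$ and appeal to the uniqueness (up to unitary equivalence and multiplicity) of the $\star$-representation of $\mathbb{C}l_{\mathbb{C}}(2)\cong\mathbb{M}_2(\mathbb{C})$ to conclude that $[D_k,\pi(a)]$ is unitarily equivalent to $\sqrt{1+\Omega^2}$ times two copies of $[D_0,\pi_0(a)]$. This is more conceptual and uniform: it treats $k=1,2$ (indeed any choice of $\Gamma^1,\Gamma^2$ in \eqref{cliff4}) at once, and it makes the determinant-of-the-effective-Clifford-metric interpretation of Theorem \ref{summaryresult} transparent rather than emergent from a computation; what it costs is reliance on the structure theory of finite-dimensional C*-algebras where the paper needs only elementary linear algebra. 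Your parenthetical alternative via the involution $\bigl(2i(1+\Omega^2)\bigr)^{-1}[\Gamma^1,\Gamma^2]$ is essentially the paper's computation reorganized. Minor bookkeeping aside (your $\bbbone_2\otimes\sigma^\mu$ versus $\sigma^\mu\otimes\bbbone_2$ is an immaterial reordering of tensor factors), the argument is sound.
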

\begin{proof}
First, it is easy to check that the triples $\mathfrak{X}(k)$, $k=1,2$, fullfill the axioms i) and ii) of Definition \ref{spectraltriple}. The arguments are similar to those used in the corresponding part of the proof of Proposition \ref{X0} given in the appendix.\par
In order to deal with axiom iii) together with relation \eqref{distanceD}, we make use of the following lemma.
\begin{lemma}
For any $a\in{\cal{A}}$ and any $k=1,2$, the following relation holds true
\begin{equation}
\ell_{D_k}(a)=(1+\Omega^2)^{{{1}\over{2}}}\ell_{D_0}(a)\label{relatlipnorm}.
\end{equation}
\end{lemma}
\begin{proof}
Consider first $k=1$. The analysis is similar for $k=2$.\par

For any $a\in{\cal{A}}$, $\ell_{D_1}(a)=||[D_1,\pi(a)]||=||-iL(\partial_\mu a)\otimes(\bbbone_2\otimes\sigma^\mu+\Omega\sigma^\mu\otimes\sigma^3) ||$. It is convenient to use the explicit matrix expression given by
\begin{equation}
[D_1,\pi(a)]=-i{\sqrt{2}}\begin{pmatrix} 0&L(\delbar a)&\Omega L(\delbar a)&0\\ 
L(\partial a)&0&0&-\Omega L(\delbar a)\\
\Omega L(\partial a) &0&0&L(\delbar a)\\
0&-\Omega L(\partial a)&L(\partial a)&0
             \end{pmatrix}, \forall a\in{\cal{A}}\label{matd_1a}.
\end{equation}
In order to make the notations more compact, we set for a while $L:=L(\partial a)$, ${\bar{L}}:=L(\delbar a)$. Then, one obtains
\begin{equation}
[D_1,\pi(a)]^*[D_1,\pi(a)]=
2\begin{pmatrix}
(1+\Omega^2)L^*L&0&0&0\\
0&{\bar{L}}^*{\bar{L}}+\Omega^2L^*L&\Omega({\bar{L}}^*{\bar{L}}-L^*L)&0\\
0&\Omega({\bar{L}}^*{\bar{L}}-L^*L)&L^*L+\Omega^2{\bar{L}}^*{\bar{L}}&0\\
0&0&0&(1+\Omega^2){\bar{L}}^*{\bar{L}}
\end{pmatrix}.
\end{equation}
But, one can write
\begin{equation}
[D_1,\pi(a)]^*[D_1,\pi(a)]=(1+\Omega^2)\mathfrak{U}^\dag\begin{pmatrix}
L^*L&0&0&0\\
0&{\bar{L}}^*{\bar{L}}&0&0\\
0&0&L^*L&0\\
0&0&0&{\bar{L}}^*{\bar{L}}\end{pmatrix}\mathfrak{U}\label{diag1}
\end{equation}
with unitary $\mathfrak{U}$ given by
\begin{equation}
\mathfrak{U}=\begin{pmatrix}
1&0&0&0\\
0&(1+\Omega^2)^{-{{1}\over{2}}}&\Omega(1+\Omega^2)^{-{{1}\over{2}}}&0\\
0&-\Omega(1+\Omega^2)^{-{{1}\over{2}}}&(1+\Omega^2)^{-{{1}\over{2}}}&0\\
0&0&0&1\end{pmatrix}\label{diagprim1}.
\end{equation}
Then, \eqref{diag1}, \eqref{diagprim1} implies that for any $a\in{\cal{A}}$, 
$\ell_{D_1}(a)^2=||[D_1,\pi(a)] ||^2=
||[D_1,\pi(a)]^*[D_1,\pi(a)] ||=(1+\Omega^2)\max(||L^*L ||,||{\bar{L}}^*{\bar{L}} ||)=(1+\Omega^2)\max(||L ||^2,||{\bar{L}} ||^2)=(1+\Omega^2)\ell_{D_0}^2(a)$. This proves \eqref{relatlipnorm} and the Lemma.
\end{proof}
The above lemma implies immediately that for any $a\in{\cal{A}}$ and any $k=1,2$, $[D_k,\pi(a)]$ is a bounded operator on ${\cal{H}}$ so that the triples $\mathfrak{X}(k)$ fulfill the axiom iii) while one can write for any states $\omega_1,\omega_2\in\mathfrak{S}({\cal{A}})$
\begin{align}
d_{D_k}(\omega_1,\omega_2)&=\sup\big\{\vert \omega_1(a)-\omega_2(a)\vert;\ a\in{\cal{A}},\ell_{D_k}(a)\le1\big\}\nonumber\\
&=\sup\big\{ {{\vert \omega_1(a)-\omega_2(a)\vert}\over{\ell_{D_k}(a)}};\ a\in{\cal{A}}\big\}\nonumber\\
&=\sup\big\{ {{\vert \omega_1(a)-\omega_2(a)\vert}\over{(1+\Omega^2)^{{{1}\over{2}}}\ell_{D_0}(a)}};\ a\in{\cal{A}}\big\},
\end{align}
where the last equality stems from \eqref{relatlipnorm}, from which follows \eqref{distanceD}. \par
Finally, a standard property of the harmonic oscillator is that the spectrum of the corresponding Hamiltonian operator $H_h:=-\partial^2+\Omega^2{\tilde{x}}^2$ is spec$(H_h)={{\Omega}\over{\theta}}(n+1), n\in\mathbb{N}$ while each eigenvalue has finite multiplicity. This implies that $(-\partial^2+\Omega^2{\tilde{x}}^2)^{-1}$ is a compact operator on $L^2(\mathbb{R}^2)$, so that, in view of \eqref{hamiltonian12}, $(D_k^2+1)^{-1}$, $k=1,2$,  is also a compact operator on ${\cal{H}}$. This implies that any of the triples $\mathfrak{X}(k)$, $k=1,2$ fullfills the axiom iv) and is therefore a spectral triple. This terminates the proof of Theorem \ref{th1}.
\end{proof}
The Theorem \ref{th1} has some consequences on properties of $d_{D_k}$ and on the actual metric topological properties of the noncommutative space modeled by the harmonic spectral triple that we now discuss. \par 
First, the explicit formula obtained in \cite{WAL1,WAL2} for $\mathfrak{X}_0$ expressing the distance $d_{D_0}$ between two arbitrary pure states related to the eigenfunctions of the harmonic oscillator extends straightforwardly to the case of $\mathfrak{X}(k),\ k=1,2$. From Proposition \ref{purestates} of the appendix, these pure states are given for any $a\in{\cal{A}}$ by $\omega_m(a)={{1}\over{2\pi\theta}}\langle f_{m0},L(a)f_{m0}\rangle$, $\forall m\in\mathbb{N}$. By combining Proposition \ref{distance-basic} and \eqref{distanceD}, one obtains
\begin{equation}
d_{D_k}(\omega_m,\omega_n)={\sqrt{{{\theta}\over{2(1+\Omega^2)}} }}\sum_{p=n+1}^{m}{{1}\over{{\sqrt{p}} }},\ n<m,\ \forall k=1,2\label{formula2}.
\end{equation}
A confrontation of $d_{D_0}$ with the notion of quantum distance introduced by Doplicher, Fredenhagen and Roberts (DFR) \cite{DFR} has been performed in \cite{mart-tom}. From Theorem \ref{th1}, it follows that the conclusions of \cite{mart-tom} apply to the Harmonic Moyal Spectral Triple. In particular, the larger the energy gap between the eigenstates is (i.e $m\gg n$ in \eqref{formula2}), the closer the spectral distance $d_{D_k},\ k=1,2$ and the DFR distance are. Note that another explicit spectral distance formula among some of the coherent states (i.e the ``quantum points'') for the standard Moyal plane has been derived in \cite{mart-recent}. From this latter work, combined with Theorem \ref{th1}, one readily obtains that for the Harmonic Moyal Spectral Triple, the spectral distance $d_{D_k}, k=1,2$ between two arbitrary coherent states of the harmonic oscillator is proportional to the Euclidean distance.\par 
Next, the Theorem \ref{th1} implies that $\mathfrak{X}(k)$, $k=1,2$ is not a compact or a bounded locally compact spectral metric space, since \eqref{distanceD} holds true and $d_{D_0}$ does not metrizes the weak* topology on $\mathfrak{S}({\cal{A}})$ \cite{WAL1,WAL2}. This stems from the existence of a family of pure states defined by the family of unit vector of $L^2(\mathbb{R}^2)$ given \cite{WAL2} by
\begin{equation}
\psi_s:={{1}\over{{\sqrt{2\pi\theta}}}}\sum_{m\in\mathbb{N}}({{1}\over{\zeta(s)(m+1)^s}})^{{{1}\over{2}}}f_{m0},\ \forall 
s\in\mathbb{R},\ s>1 \label{statesinfty}
\end{equation}
where $\zeta(s)$ is the Riemann zeta function and $(f_{m0})_{m\in\mathbb{N}}$ is a subfamily of the so called matrix base (see e.g 
\cite{Gracia-Bondia:1987kw, Varilly:1988jk}) mentionned in the appendix. \par
The $d_{D_k}$-metric topological properties can be summarized as:
\begin{proposition}\label{connected1}
For any $k=1,2$, the spectral triple $\mathfrak{X}(k)$ is a spectral metric space with an infinite number of distinct connected components, each component being pathwise connected for the $d_{D_k}$-topology.
\end{proposition}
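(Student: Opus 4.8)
The plan is to transport the whole statement to the Standard Moyal Spectral Triple ${\mathfrak{X}}_{D_0}$ by means of the homothety of Theorem~\ref{th1}, and then to exploit the convex structure of the state space. I would first dispose of the ``spectral metric space'' part by checking conditions a) and b) of Definition~\ref{spectralmetricspace}: non-degeneracy of $\pi$ reduces to that of the left regular representation $\pi_0$ of ${\cal{A}}$ (equivalently, ${\bar{\cal{A}}}$ admits an approximate unit), exactly as for ${\mathfrak{X}}_{D_0}$; and for the metric commutant I would note that $[D_k,\pi(a)]=0\iff\ell_{D_k}(a)=0\iff\ell_{D_0}(a)=0$ by \eqref{relatlipnorm}, which by Proposition~\ref{lzero} means $\|L(\partial a)\|=\|L(\delbar a)\|=0$, i.e. $\partial_1a=\partial_2a=0$ by faithfulness of $L$ on ${\cal{A}}$; since the only constant Schwartz function is $0$, the metric commutant ${\cal{A}}_{D_k}^\prime$ is trivial, so ${\mathfrak{X}}(k)$ is a spectral metric space.

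The core of the argument is one elementary identity. For any $\omega_1,\omega_2\in{\mathfrak{S}}({\cal{A}})$ the segment $\omega_t:=(1-t)\omega_1+t\omega_2$, $t\in[0,1]$, lies in ${\mathfrak{S}}({\cal{A}})$ since the state space is convex, and from \eqref{specdist} together with linearity of $a\mapsto\omega_i(a)$ one gets $d_{D_k}(\omega_s,\omega_t)=|s-t|\,d_{D_k}(\omega_1,\omega_2)$ for all $s,t\in[0,1]$. If $\omega_1,\omega_2$ lie in the same component $C$ (i.e. $d_{D_k}(\omega_1,\omega_2)<\infty$), this identity shows that every $\omega_t$ lies in $C$ and that $t\mapsto\omega_t$ is a $d_{D_k}$-continuous path joining them inside $C$; hence each component is pathwise connected.

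For the number of components I would use Theorem~\ref{th1} once more: since $d_{D_k}=(1+\Omega^2)^{-1/2}d_{D_0}$, two states are at infinite $d_{D_k}$-distance iff they are at infinite $d_{D_0}$-distance, so the partition of ${\mathfrak{S}}({\cal{A}})$ into finite-distance classes — these classes being open and closed (around any state a finite-radius ball stays inside its class), hence equal to the connected components once pathwise connectedness is known — is the same for $d_{D_k}$ and for $d_{D_0}$. From \cite{WAL1, WAL2} there is at least one pair $\omega_1\ne\omega_2$ with $d_{D_0}(\omega_1,\omega_2)=\infty$ (for instance the vector state attached to some $\psi_s$ of \eqref{statesinfty} against the vacuum state), hence $d_{D_k}(\omega_1,\omega_2)=\infty$; feeding this pair into the segment identity above shows that $\{\omega_t:t\in[0,1]\}$ is a family of states pairwise at infinite $d_{D_k}$-distance, therefore lying in pairwise distinct components. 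This yields infinitely (in fact uncountably) many connected components.

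I expect the only genuinely non-routine ingredient to be the existence of a pair of states at infinite distance for ${\mathfrak{X}}_{D_0}$: everything else is either a direct consequence of Theorem~\ref{th1} and the earlier propositions, or an elementary fact about the convex set ${\mathfrak{S}}({\cal{A}})$. That ingredient is not reproved here but imported from \cite{WAL2} (non-compactness of ${\mathfrak{X}}_{D_0}$, unboundedness of its Lipschitz ball, exhibited through the family \eqref{statesinfty}), with Theorem~\ref{th1} serving only to carry it over to $d_{D_k}$.
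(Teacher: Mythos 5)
Your proof is correct, and its skeleton is the same as the paper's: transport everything to $d_{D_0}$ via the homothety \eqref{relatlipnorm}, show the finite-distance classes are closed-open by the triangle inequality, and get pathwise connectedness from the convex segment $\omega_t=(1-t)\omega_1+t\omega_2$ together with the identity $d_{D_k}(\omega_s,\omega_t)=|s-t|\,d_{D_k}(\omega_1,\omega_2)$ (which the paper imports from \cite{Andreamartinetti} and you rederive directly from \eqref{specdist} by linearity --- both are fine). The one genuine divergence is how the \emph{infinitude} of components is obtained. The paper invokes Proposition 7 of \cite{WAL1}, which supplies an explicit infinite family $\{\psi_s\}$ of states that are \emph{pairwise} at infinite distance, and places each $\psi_s$ in a different clopen class. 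You instead need only a \emph{single} pair $\omega_1,\omega_2$ with $d_{D_0}(\omega_1,\omega_2)=+\infty$ and then apply the segment identity to that pair: for $s\ne t$ one gets $d_{D_k}(\omega_s,\omega_t)=|s-t|\cdot(+\infty)=+\infty$, so $\{\omega_t\}_{t\in[0,1]}$ already gives uncountably many distinct components. This is a slicker reduction of the external input (existence of one infinitely-separated pair rather than an infinite pairwise-separated family), at the price of producing non-pure witnesses; the paper's version exhibits the components via concrete vector states. You also verify conditions a) and b) of Definition \ref{spectralmetricspace} (non-degeneracy and triviality of the metric commutant via $\ell_{D_k}(a)=0\Rightarrow\partial_\mu a=0\Rightarrow a=0$ in ${\cal S}$), which the paper's proof leaves implicit even though the statement asserts ``spectral metric space''; that is a worthwhile addition. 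Your parenthetical guess that the infinitely-separated pair can be taken to be some $\psi_s$ against the vacuum is not exactly what \cite{WAL1} proves (the cited result concerns pairs $\psi_{s_1},\psi_{s_2}$), but since you only need existence of one such pair, this does not affect the argument.
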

\begin{proof}
By combining the proposition 7 of \cite{WAL1} with Theorem \ref{th1} and \eqref{distanceD}, one obtains that for any $k=1,2$
\begin{equation}
d_{D_k}(\psi_{s_1},\psi_{s_2})=+\infty,\ \forall s_1,s_2\in]1,{{5}\over{4}}[\cup],{{5}\over{4}},{{3}\over{2}}],
\end{equation}
where the symbol $\psi_s$ denote the vector state generated by $\psi_{s_1}$ \eqref{statesinfty}. Then, for any $s_1$ in the above set, the subset $\mathfrak{S}_{\psi_{s_1}}\subset\mathfrak{S}({\cal{A}})$ defined by  $\mathfrak{S}_{\psi_{s_1}}:=\{\omega\in\mathfrak{S}({\cal{A}}),\ d_{D_k}(\omega,\ \psi_{s_1})<+\infty\}$ is a closed-open set for the $d_{D_k}$-topology. Indeed, for any element $\eta$ in $\mathfrak{B}_\rho(\omega)\subset\mathfrak{S}({\cal{A}})$, the open ball with center $\omega$ and radius $\rho>0$, and for any $\omega\in\mathfrak{S}_{\psi_{s_1}}$, one has 
\begin{equation}
d_{D_k}(\eta,\psi_{s_1})\le d_{D_k}(\eta,\omega)+d_{D_k}(\omega,\psi_{s_1})<+\infty, 
\end{equation}
so that $\mathfrak{S}_{\psi_{s_1}}$ is open. Now, for any $\omega$ in the complement of $\mathfrak{S}_{\psi_{s_1}}$ and any $\eta\in\mathfrak{B}_\rho(\omega)$, one has 
\begin{equation}
d_{D_k}(\omega,\psi_{s_1})\le d_{D_k}(\eta,\psi_{s_1})+d_{D_k}(\eta,\omega)
\end{equation}
implying $d_{D_k}(\eta,\psi_{s_1})=+\infty$ from which $\eta\in\mathfrak{S}_{\psi_{s_1}}$ so that $\mathfrak{S}_{\psi_{s_1}}$ is also closed. Hence, $\mathfrak{S}_{\psi_{s_1}}$ is a closed-open subset of $\mathfrak{S}({\cal{A}})$ which is therefore an union of connected parts while $\psi_s\notin\mathfrak{S}_{\psi_{s_1}}$, $\forall s\ne s_1$ and cannot belong to the same connected component.\\
Pathwise connectedness follows directly from the fact that the map $\omega_t:\ t\in[0,1]\to\mathfrak{S}({\cal{A}})$ defined by $\omega_t:=(1-t)\omega_1+t\omega_2$, for any $\omega_1,\omega_2\in\mathfrak{S}_{\psi_{s_1}}$ is $d_{D_k}$-continuous, since one has $d_{D_k}(\omega_{t_1},\omega_{t_2})=|t_1-t_2|d_{D_k}(\omega_1,\omega_2)$ from e.g \cite{Andreamartinetti} and $d_{D_k}(\omega_1,\omega_2)<+\infty$. Hence the result.
\end{proof}

\begin{remark}\label{unitalisation-1}
Replacing ${\cal{A}}$ by a suitable unitalisation in the triples $\mathfrak{X}(k)$ would not produce compact spectral metric spaces. Indeed, take the preferred unitalisation of ${\cal{A}}$, ${\cal{A}}_1=({\cal{B}},\star)$ where ${\cal{B}}$ has been defined in the subsection \ref{standardmoyalgeo}. Then, the states \eqref{statesinfty} are still states of ${\cal{A}}_1$ while ${\cal{A}}\subset{\cal{A}}_1$ implies $d_{D_k}\le d_{{\cal{A}}_1}$ where $d_{{\cal{A}}_1}$ is the spectral distance for the triple built from ${\cal{A}}_1$. \par 
\end{remark}

\section{The Landau Moyal spectral triple}\label{landautriple}

It is interesting to consider another physically motivated triple. It is built from the Dirac operator used in the renormalisable 2-D noncommutative Gross-Neveu action. Its square is related to the class of operators occurring in the action describing the Langman-Szabo-Zarembo (LSZ) model which both received some attention recently. See for instance \cite{VIGNES1}, \cite{VIGNEWALLET}, \cite{LSZ}. \par 
Let us define the following triple 
\begin{equation}
\mathfrak{Y}(\xi):=({\cal{A}},\ \pi_0,\ {\cal{H}}_0=L^2(\mathbb{R}^2)\otimes\mathbb{C}^2,\ {\cal{D}}_\xi)\label{grosstriple},
\end{equation}
in which the self-adjoint Dirac operator ${\cal{D}}_\xi$ is given by
\begin{equation}
{\cal{D}}_\xi:=-i\sigma^\mu\partial_\mu+\xi\sigma^\mu m({\tilde{x}}_\mu)\label{grossdirac}
\end{equation}
where $\pi_0$ is still defined by \eqref{regulrep} and we take Dom$({\cal{D}}_\xi)={\cal{S}}\otimes\mathbb{C}^2$. For the moment we assume $\xi\in\mathbb{R}$. Some restrictions will be put on this parameter in a while.\par 
For any $a\in{\cal{A}}$, we set $\ell_{{\cal{D}}_\xi}(a):=||[{\cal{D}}_\xi,\pi_0(a)] ||$, the Lipschitz semi-norm for ${\cal{D}}_\xi$. It appears that this latter can be simply related to the Lipschitz semi-norm of the standard Dirac operator $\ell_{D_0}$.
\begin{proposition}
The following properties hold for any $\xi\in\mathbb{R}$:
\begin{equation}
{\cal{D}}_\xi^2=(-\partial^2+\xi^2{{\tilde{x}}^2}-i2\xi{\tilde{x}}_\mu\partial_\mu)\otimes\bbbone_2-{{4\xi}\over{\theta}}\sigma^3
\label{magnethamilton},
\end{equation}
\begin{equation}
[{\cal{D}}_\xi,\pi_0(a)]=(1-\xi)L(-i\partial_\mu a)\otimes\sigma^\mu,\ \ell_{{\cal{D}}_\xi}(a)=|1-\xi|\ell_{D_0}(a), \forall a\in{\cal{A}}\label{lipnormgrossneveu}.
\end{equation}
\end{proposition}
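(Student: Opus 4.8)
The plan is to prove the two identities in \eqref{magnethamilton} and \eqref{lipnormgrossneveu} in turn, the first by a direct computation and the second by exploiting that $\partial_\mu$ is an inner derivation, in close parallel to the proof of Proposition \ref{algebrorelat}.

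First I would establish \eqref{magnethamilton}. Write ${\cal{D}}_\xi = \sigma^\mu(-i\partial_\mu + \xi\, m({\tilde{x}}_\mu))$ and square it, using the Clifford relation $\sigma^\mu\sigma^\nu = \delta^{\mu\nu}\bbbone_2 + i\epsilon^{\mu\nu}\sigma^3$ (equivalently $\sigma^\mu\sigma^\nu+\sigma^\nu\sigma^\mu=2\delta^{\mu\nu}$ together with $\sigma^1\sigma^2 = i\sigma^3$). The symmetric part in $\mu,\nu$ gives $(-i\partial_\mu+\xi{\tilde{x}}_\mu)^2 = -\partial^2 + \xi^2{\tilde{x}}^2 - i\xi(\partial_\mu{\tilde{x}}_\mu + {\tilde{x}}_\mu\partial_\mu)$ acting on ${\cal{S}}$; since $\partial_\mu {\tilde{x}}_\mu = {\tilde{x}}_\mu\partial_\mu + \partial_\mu({\tilde{x}}_\mu)$ and $\partial_\mu({\tilde{x}}_\mu)=2\Theta^{-1}_{\mu\nu}\partial_\mu(x^\nu)=2\Theta^{-1}_{\mu\mu}=0$, this reduces to $-\partial^2 + \xi^2{\tilde{x}}^2 - 2i\xi{\tilde{x}}_\mu\partial_\mu$, matching the scalar part of \eqref{magnethamilton}. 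The antisymmetric part contributes the commutator term $i\xi\epsilon^{\mu\nu}[\partial_\mu,{\tilde{x}}_\nu]\sigma^3$ (the $\partial\partial$ and ${\tilde{x}}{\tilde{x}}$ pieces drop by antisymmetry); computing $[\partial_\mu,{\tilde{x}}_\nu] = \partial_\mu({\tilde{x}}_\nu) = 2\Theta^{-1}_{\mu\nu}$ and contracting with $\epsilon^{\mu\nu}$ against $\Theta^{-1}$, whose off-diagonal entries are $\pm 1/\theta$, yields the coefficient $-4\xi/\theta$ in front of $\sigma^3$. One must be a little careful whether the $-i\partial$ term meant is $\sigma^\mu(-i\partial_\mu)$ acting before or with $m({\tilde{x}}_\mu)$, but since ${\cal{D}}_\xi$ is built as a single sum $-i\sigma^\mu\partial_\mu + \xi\sigma^\mu m({\tilde{x}}_\mu)$ the computation is unambiguous.

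For \eqref{lipnormgrossneveu}, I would repeat the step leading to \eqref{interm1}: for $\Psi\in\DOM({\cal{D}}_\xi)$ and $a\in{\cal{A}}$,
\[
[{\cal{D}}_\xi,\pi_0(a)]\Psi = -i\sigma^\mu(\partial_\mu a)\star\Psi + \xi\sigma^\mu\big({\tilde{x}}_\mu(a\star\Psi) - a\star({\tilde{x}}_\mu\Psi)\big).
\]
Now use \eqref{relat122}, which gives ${\tilde{x}}_\mu\star f = ({\tilde{x}}_\mu\fois f) + i\partial_\mu f$, together with the inner-derivation formula $\partial_\mu f = -\tfrac{i}{2}[{\tilde{x}}_\mu,f]_\star$ from \eqref{relat1}, exactly as in the proof of Proposition \ref{algebrorelat}, to rewrite the bracketed Moyal expression as $i(\partial_\mu a)\star\Psi$ (the pointwise-multiplication pieces and the $a\star{\tilde{x}}_\mu\Psi$ pieces cancel). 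Collecting terms gives $[{\cal{D}}_\xi,\pi_0(a)] = (1-\xi)\,L(-i\partial_\mu a)\otimes\sigma^\mu$, which is precisely $(1-\xi)$ times the operator $[D_0,\pi_0(a)]$ appearing in the proof of Proposition \ref{lzero}. Taking operator norms and invoking Proposition \ref{lzero} immediately yields $\ell_{{\cal{D}}_\xi}(a) = |1-\xi|\,\ell_{D_0}(a)$.

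The only genuinely delicate point — and the one I would flag as the main obstacle — is bookkeeping the $\sigma^3$ coefficient in \eqref{magnethamilton}: one has to keep straight the relation between $\epsilon^{\mu\nu}$, the matrix $\Theta^{-1}_{\mu\nu}$ and $\sigma^3=i\sigma^1\sigma^2$, and in particular get the factor $4$ (rather than $2$ as in \eqref{hamiltonian12}) and the sign right, since here both the $-i\partial$ and the $\xi{\tilde{x}}$ terms carry the \emph{same} $\sigma^\mu$ (unlike the harmonic case where $\gamma^\mu$ and $\gamma^{\mu+2}$ involve different matrix factors). Everything else is the routine algebra already exercised in Propositions \ref{formulas}, \ref{lzero} and \ref{algebrorelat}, so the proof can simply refer back to those. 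I would write the proof as: "For \eqref{magnethamilton}, a direct computation using \eqref{cliff2} and the relations of Proposition \ref{formulas} gives the result. For \eqref{lipnormgrossneveu}, proceeding as in the proof of Proposition \ref{algebrorelat} one obtains $[{\cal{D}}_\xi,\pi_0(a)] = (1-\xi)L(-i\partial_\mu a)\otimes\sigma^\mu$, and the norm identity then follows from Proposition \ref{lzero}."
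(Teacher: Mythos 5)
Your proposal is correct and follows essentially the same route as the paper: a direct Clifford-algebra computation for \eqref{magnethamilton}, and for \eqref{lipnormgrossneveu} the same manipulation as in Proposition \ref{algebrorelat}, using the anticommutator identity \eqref{relat122} together with the inner-derivation formula $\partial_\mu f=-\tfrac{i}{2}[{\tilde{x}}_\mu,f]_\star$ to reduce $[{\cal{D}}_\xi,\pi_0(a)]$ to $(1-\xi)[D_0,\pi_0(a)]$, after which Proposition \ref{lzero} (or simply taking operator norms) gives the seminorm relation. Your bookkeeping of the $\sigma^3$ coefficient, including the factor $-4\xi/\theta$, checks out.
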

\begin{proof}
The relation \eqref{magnethamilton} stems from a direct calculation. The first relation \eqref{lipnormgrossneveu} can be obtained, as for Proposition \ref{algebrorelat}, from a direct computation of $[{\cal{D}}_\xi,\pi_0(a)]$ and further using the relation $\{{\tilde{x}}_\mu,f\}_\star:={\tilde{x}}_\mu\star f+f\star{\tilde{x}}_\mu=2{\tilde{x}}_\mu.f$, $\forall f\in{\cal{M}}$ and the fact that $\partial_\mu$ can be actually expressed as an inner derivation through the last relation in \eqref{relat1}. From this, follows the second relation \eqref{lipnormgrossneveu}.
\end{proof}
\begin{remark} \label{remarklandau} At this point, two comments are in order:
\begin{itemize}
\item i) As expected, the Dirac operator ${\cal{D}}_\xi$ \eqref{grossdirac} does not square to the harmonic oscillator Hamiltonian. Instead, the first term in \eqref{magnethamilton} is simply the Landau Hamiltonian of the physics literature. It describes the motion of a charged particle in the $\mathbb{R}^2$-plane submitted to an external constant magnetic field. It is defined by:
\begin{equation}
H_L:=-\partial^2+\xi^2{\tilde{x}}^2-i2\xi{\tilde{x}}_\mu\partial_\mu=P_\mu^2\ge0,
\end{equation}
where $P_\mu$ is the Landau operator defined by $P_\mu:=-i\partial_\mu-B_{\mu\nu}x^\nu$ in which the skew symmetric ${\mathbb{R}}$-valued matrix $B_{\mu\nu}$ encoding the magnetic field is given by $B_{\mu\nu}=-2\xi\Theta^{-1}_{\mu\nu}$. \\
Setting $B:=B_{12}={{2\xi}\over{\theta}}$ and assuming $B>0$, one recovers the main situation considered in \cite{LSZ} where in particular the positive operator $(-D^2)$ (in the notations of \cite{LSZ}) is $(-D^2)=H_L$. The spectrum of $H_L$ is spec$(H_L)={{8\xi}\over{\theta}}(n+{{1}\over{2}})$, $n\in\mathbb{N}$, related to the energies of the Landau levels, with infinite multiplicity for the eigenvalues. This degeneracy reflects the so called ``magnetic translation'' invariance for the system.
\item ii) Let $d_{{\cal{D}}_\xi}$ and ${\cal{A}}^\prime_{{\cal{D}}_\xi}$ denote respectively the spectral distance and the metric commutant for ${\cal{D}}_\xi$. From the first relation \eqref{lipnormgrossneveu}, one observes that $\xi=1$ implies ${\cal{A}}^\prime_{{\cal{D}}_\xi}={\cal{A}}$ so that the corresponding $d_{{\cal{D}}_\xi}$ cannot give rise to a spectral metric space. 
\end{itemize}
\end{remark}
In view of the remark \ref{remarklandau}, we now assume $\xi>0$. We further note that the infinite degeneracy of the spectrum for $H_L$ implies that the resolvent operator for ${\cal{D}}_\xi$ cannot be compact. This is different from the situation of the subsection \ref{harmonictriple} where the resolvent operator for the $D_k$'s are already compact. \par
We now derive some useful properties of integral kernels related to the Dirac operators considered in this subsection that will be needed to show the compactness property for the resolvent operators. In the following, the inverse operator for $H_L$ is defined as usual by $H_L^{-1}(x,y):=\int_0^\infty dt\ e^{-tH_L}(x,y):=\int_0^\infty dt\ {\cal{K}}_{H_L^{-1}}(t;x,y)$.\par 
\begin{proposition}\label{interkernel} One has the following properties:\hfill\break
i) For any $a\in{\cal{A}}$ and $\mu^2\in\mathbb{R}^+$, the integral kernel of the operator $L(a)(H_L+\mu^2)^{-1}$, ${\cal{K}}_{L(a)(H_L+\mu^2)^{-1}}$, is given by
\begin{equation}
{\cal{K}}_{L(a)(H_L+\mu^2)^{-1}}(x,y)={{e^{i2x\Theta^{-1}y}}\over{(\pi\theta)^2 }}\int d^2ud^2v\ a(u)\Phi(v)e^{i2x\Theta^{-1}(v-u)}e^{i2y\Theta^{-1}(u-\xi v)}e^{-i2u\Theta^{-1}v}\label{k1},
\end{equation}
\begin{equation}
\Phi(v)=\int_0^\infty dt{{e^{-t\mu^2}}\over{4\pi\sinh{t}}}e^{-\xi\theta^{-1}|v|^2\coth{t}}\label{fi1}.
\end{equation}
ii) For any $a\in{\cal{A}}$ and any $\xi>0$, $\xi\ne1$, $L(a)(H_L+\mu^2)^{-1}$ is a Hilbert-Schmidt operator on $L^2(\mathbb{R}^2)$.
\end{proposition}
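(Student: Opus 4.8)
The plan is to compute the integral kernel directly for part i) and then use the explicit form to verify square-integrability in part ii). For part i), I would start from the Schwinger–DeWitt (heat-kernel) representation $(H_L+\mu^2)^{-1}=\int_0^\infty dt\, e^{-t\mu^2}e^{-tH_L}$. Since $H_L=P_\mu^2$ is (up to the constant shift already stripped off) the Landau Hamiltonian for a constant magnetic field $B=2\xi/\theta$, its heat kernel $\mathcal{K}_{H_L^{-1}}(t;x,y)$ is the classical Mehler-type kernel for a charged particle in a constant field — a Gaussian in $x-y$ with a magnetic phase $e^{i2x\Theta^{-1}y}$-type factor, with $t$-dependent coefficients $\coth t$ and $1/\sinh t$. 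I would quote (or rederive in one line from the oscillator analogy) this kernel; after the $t$-integration the Gaussian piece collapses to the function $\Phi(v)$ of \eqref{fi1}. Then I would compose on the left with $L(a)$, whose kernel is $\mathcal{K}_{L(a)}(x,z)=\frac{1}{(\pi\theta)^2}\int d^2u\, a(u)\, e^{\cdots}$ coming from the explicit form of the Moyal product \eqref{moyal} (writing $a\star\psi$ as an integral operator against $\psi$). Multiplying the two kernels and performing the intermediate convolution integral, the various Gaussian/oscillatory phases combine; collecting the exponents carefully yields \eqref{k1}. This is essentially a bookkeeping exercise in Gaussian integrals and rearranging bilinear forms in $x,y,u,v$ with the antisymmetric $\Theta^{-1}$.

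For part ii), the Hilbert–Schmidt condition is $\int d^2x\, d^2y\, |\mathcal{K}_{L(a)(H_L+\mu^2)^{-1}}(x,y)|^2<\infty$. The key observation is that in \eqref{k1} the entire $x$- and $y$-dependence sits inside \emph{pure phases} $e^{i2x\Theta^{-1}(\cdots)}$ and $e^{i2y\Theta^{-1}(\cdots)}$ (together with the fixed phase $e^{i2x\Theta^{-1}y}$), so $|\mathcal{K}|$ is not itself integrable in $x,y$ — instead one must integrate over $x$ and $y$ in the $L^2$ norm and use that these phase integrals produce delta functions. Concretely, writing out $|\mathcal K|^2$ introduces a second set of dummy variables $u',v'$; the $x$-integral gives $\delta^{(2)}$ identifying $v-u-y\mapsto v'-u'$ shifted appropriately, and the $y$-integral gives another $\delta^{(2)}$. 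Here the hypothesis $\xi\neq 1$ is crucial: the $y$-phase in \eqref{k1} is $e^{i2y\Theta^{-1}(u-\xi v)}$ coming against $e^{i2x\Theta^{-1}y}$, and after the $x$-integration the $y$-dependence becomes proportional to $(1-\xi)$ times a linear form in the remaining variables, so the resulting delta function has a nonvanishing Jacobian $\propto(1-\xi)^{2}$ exactly when $\xi\neq1$. (If $\xi=1$ the $y$-integral would instead force $u=u'$ etc.\ and leave an overall volume divergence, matching the non-compactness flagged in Remark \ref{remarklandau}.) Solving the delta constraints reduces the eightfold integral to something like $\int d^2u\,d^2v\,d^2v'\,|a(u)|^2\,\Phi(v)\Phi(v')$ up to constants, which factorizes as $\big(\int|a(u)|^2 du\big)\big(\int\Phi\big)^2$ and is finite since $a\in\mathcal{S}\subset L^2$ and, from \eqref{fi1}, $\int d^2v\,\Phi(v)=\int_0^\infty dt\,\frac{e^{-t\mu^2}}{4\pi\sinh t}\cdot\frac{\pi\theta}{\xi\tanh t}<\infty$ provided $\mu^2>0$ (the $t\to0$ integrand behaves like $t^{-2}e^{-t\mu^2}$... — here one must be slightly careful: actually $\sinh t\cdot\coth t^{-1}\sim t\cdot t=t^2$ so $\int d^2v\,\Phi\sim\int_0^\infty dt\,\frac{\theta}{4\xi}\frac{1}{\sinh t\coth t}$, convergent at $0$ like $\int dt$ and at $\infty$ by the $e^{-t\mu^2}$ factor when $\mu^2>0$; for $\mu^2=0$ one instead keeps $a\in\mathcal S$ to control large distances). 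I would therefore phrase ii) for $\mu^2>0$, or note the $\mu^2=0$ case uses the Schwartz decay of $a$.

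The main obstacle I anticipate is \textbf{not} any deep analysis but the careful organization of the multi-variable Gaussian bookkeeping: getting the precise combination of phases in \eqref{k1} right, and then in part ii) correctly tracking which linear combinations of $u,v,u',v',x,y$ are killed by the two phase-integrations and verifying the Jacobian is $(1-\xi)$-controlled rather than singular. A secondary subtlety is justifying the interchange of the $t$-integral with the $x,y$ (or $u,v$) integrals and the use of the heat-kernel representation on the relevant dense domain — this is routine given $\mu^2>0$ and $a\in\mathcal S$, since all integrands are absolutely integrable, but it should be stated. Once the kernel formula \eqref{k1} is in hand, part ii) is a finite computation with no real analytic content beyond the $\xi\neq1$ nondegeneracy.
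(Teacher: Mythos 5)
Your overall route is the same as the paper's: represent $(H_L+\mu^2)^{-1}$ by the heat kernel of the Landau Hamiltonian (the paper obtains \eqref{khl} by a Gaussian ansatz $a(t)e^{b(t)|x-y|^2}e^{c(t)x\Theta^{-1}y}$), compose with the kernel of $L(a)$ coming from the Moyal product, and then evaluate the Hilbert--Schmidt norm by letting the $x$- and $y$-phase integrations produce delta functions whose Jacobian carries the factor $(1-\xi)^{-2}$. Your identification of $\xi=1$ as the degenerate case and of the $(1-\xi)$-controlled Jacobian is exactly what the paper does. Part i) of your proposal is therefore fine in substance.

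There is, however, a genuine gap in your part ii), caused by a miscount in the delta-function bookkeeping. The squared kernel is an eightfold integral over $u_1,u_2,v_1,v_2\in\mathbb{R}^2$; the two two-dimensional delta functions produced by the $x$- and $y$-integrations remove \emph{four} real dimensions, not two. Concretely, the $x$-integration forces $u_1-u_2=v_1-v_2$ and the subsequent $y$-integration forces $(1-\xi)(v_1-v_2)=0$, hence $v_1=v_2$ and then $u_1=u_2$. The surviving integral is therefore $C\!\int d^2u\,d^2v\,|a(u)|^2\,|\Phi(v)|^2=C\|a\|_2^2\,\|\Phi\|_2^2$ with $C=\tfrac{1}{4(1-\xi)^2}$, and \emph{not} the threefold integral $\int d^2u\,d^2v\,d^2v'\,|a(u)|^2\Phi(v)\Phi(v')$ that you write down. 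Consequently the quantity whose finiteness must be established is the $L^2$-norm $\int d^2v\,|\Phi(v)|^2$, whereas your convergence analysis addresses the $L^1$-norm $\int d^2v\,\Phi(v)$ --- a different condition, and the one actually needed is nowhere verified in your argument. The paper closes this step by the pointwise bound $\Phi(v)\le\tfrac{1}{4\pi}K_0(\xi\theta^{-1}|v|^2)$ (obtained from \eqref{fi1} after a change of variable and discarding the factor raised to the power $\mu^2$), together with $\int_0^\infty K_0(x)^2\,dx=\pi^2/4$, which gives $\int d^2v\,|\Phi(v)|^2\le\pi\theta/(48\xi)$. You need to supply an estimate of this kind (the logarithmic singularity of $\Phi$ at $v=0$ is square-integrable in two dimensions, so the statement is true, but it is not a consequence of what you wrote). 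Once that is fixed the rest of your argument goes through.
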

\begin{proof}
First, one can write $(H_L+\mu^2)^{-1}=\int_0^\infty dt\ e^{-t\mu^2}e^{-tH_L}=\int_0^\infty dt\ e^{-t\mu^2}{\cal{K}}_{H_L^{-1}}(t;x,y)$. The determination of the integral kernel ${\cal{K}}_{H_L^{-1}}(x,y)$ can be obtained by a long but straightforward calculation by looking for a heat kernel solution of ${{d{\cal{K}}}\over{dt}}+H_L{\cal{K}}=0$ of the form $a(t)e^{b(t)|x-y|^2}e^{c(t)x\Theta^{-1}y}$, with $a(t),\ b(t),\ c(t)$ functions of $t$. It is given by
\begin{equation}
{\cal{K}}_{H_L^{-1}}(x,y)=\int_0^\infty dt\ {{1}\over{4\pi\sinh{t}}}e^{i2\xi x\Theta^{-1}y}e^{-\xi\theta^{-1}|x-y|^2\coth{t}}\label{khl}.
\end{equation}
Then, by combining \eqref{khl} with $K_{L(a)}$ \eqref{kla}, one obtains \eqref{k1}, \eqref{fi1} which hold for any $a\in{\cal{A}}$. This shows i).\par
To prove the property ii), one has to show that $I:=\int_{\mathbb{R}^2} dxdy|{\cal{K}}_{L(a)(H_L+\mu^2)^{-1}}(x,y)|^2$ is finite. By using \eqref{k1}, one obtains easily
\begin{equation}
I={{1}\over{(\pi\theta)^4 }}\int d^2xd^2yd^2u_1d^2u_2d^2v_1d^2v_2a^*(u_1)a(u_2)\Phi^*(v_1)\Phi(v_2)e^{i2u_1\Theta^{-1}v_1}e^{-i2u_2\Theta^{-1}v_2}\nonumber
\end{equation}
\begin{equation}
\times e^{-i2x\Theta^{-1}(v_1-v_2+u_2-u_1)}e^{-i4y\Theta^{-1}(u_1-u_2-\xi(v_1-v_2))}\nonumber
\end{equation}
\begin{equation}
={{1}\over{\pi^2\theta^2}}\int d^2yd^2u_2d^2v_1d^2v_2a^*(u_2+v_1-v_2)a(u_2)\Phi^*(v_1)\Phi(v_2)e^{i2(u_2-v_2)\Theta^{-1}v_1}e^{-i2u_2\Theta^{-1}v_2}\nonumber
\end{equation}
\begin{equation}
\times e^{-i4(1-\xi)y\Theta^{-1}(v_1-v_2)}\label{intermediaire}
\end{equation}
where the second equality is obtained upon integrating over $x$ and $u_1$ and we used the relation $\int {{d^2x}\over{(2\pi)^2}}e^{i2\alpha x\Theta^{-1}z}={{\theta^2}\over{4\alpha^2}}\delta^2(z)$, $\forall\alpha\in\mathbb{R}$, $\alpha\ne0$. When $\xi=1$, the last factor in the second equality equals to $1$ so that one obtains $I=+\infty$ and the operator $L(a)(H_L+\mu^2)^{-1}$ is not Hilbert-Schmidt on $L^2(\mathbb{R}^2)$. \\
When $\xi\ne1$, by further combining the delta function occurring from the integration over $y$ and the exponential factors appearing in \eqref{intermediaire}, one can express $I$ as
\begin{equation}
I=C\int d^2ud^2v |a(u)|^2|\Phi(v)|^2=C||a ||^2_2\int d^2v|\Phi(v)|^2\label{integral-estimate}
\end{equation}
where $C={{1}\over{4(1-\xi)^2 }}$. Now, by setting in \eqref{fi1} $u=\xi\theta^{-1}|v|^2(\coth{t}-1)$, one can rewrite $\Phi(v)$ as
\begin{equation}
\Phi(v)={{1}\over{4\pi}}\int_0^\infty dt\ e^{-\xi\theta^{-1}|v|^2}{{e^{-t}}\over{(t(t+2\xi\theta^{-1}|v|^2))^{{{1}\over{2}}} }}\big({{t}\over{ (t(t+2\xi\theta^{-1}|v|^2))^{{{1}\over{2}}} }} \big)^{\mu^2}
\end{equation}
from which one infers that
\begin{equation}
\Phi(v)\le{{1}\over{4\pi}}\int_0^\infty dt\ e^{-\xi\theta^{-1}|v|^2}{{e^{-t}}\over{(t(t+2\xi\theta^{-1}|v|^2))^{{{1}\over{2}}} }}\label{majorat}.
\end{equation}
The RHS of \eqref{majorat} can be exactly related to a modified Bessel function of second kind. This can be easily seen from the integral formula $K_0(xz)=({{\pi}\over{2z}})^{{{1\over{2}}}} {{e^{-xz}}\over{\Gamma({{1}\over{2}})}}\int_0^\infty dt\ e^{-tx}(t(1+{{t}\over{2z}}))^{-{{1}\over{2}}}$ which holds true whenever  $|arg(z)|<\pi,\ x>0$. Therefore, \eqref{majorat} translates into
\begin{equation}
\Phi(v)\le {{1}\over{4\pi}}K_0(\xi\theta^{-1}|v|^2),
\end{equation}
which, combined with $\int_0^\infty dxK_0(x)^2={{\pi^2}\over{4}}$ yields
\begin{equation}
\int d^2v\vert \Phi(v)\vert^2\le{{\pi\theta}\over{48\xi}}\label{estimation-2}.
\end{equation}
Then, combining \eqref{estimation-2} and \eqref{integral-estimate}, one obtains
\begin{equation}
I\le {\tilde{C}}||a ||^2_2<+\infty,
\end{equation}
with ${\tilde{C}}={{ \pi\theta}\over{192\xi(1-\xi)^2 }}$. This shows that $L(a)(H_L+\mu^2)^{-1}$ is a Hilbert-Schmidt operator on $L^2(\mathbb{R}^2)$ for any $a\in{\cal{A}}$, provided $\xi\ne1$. This terminates the proof.
\end{proof}
The property ii) of Proposition \ref{interkernel} implies:
\begin{corollary}\label{corol2}
For any $a\in{\cal{A}}$, $\xi>0$, $\xi\ne1$, $\mu\in\mathbb{R}^*$, $L(a)(H_L+\mu^2)^{-1}$ is a compact operator on $L^2(\mathbb{R}^2)$.
\end{corollary}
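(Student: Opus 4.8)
The plan is to deduce the corollary directly from part ii) of Proposition \ref{interkernel} together with a standard functional-analytic fact: every Hilbert-Schmidt operator on a separable Hilbert space is compact. Since Proposition \ref{interkernel} ii) already establishes that $L(a)(H_L+\mu^2)^{-1}$ is Hilbert-Schmidt on $L^2(\mathbb{R}^2)$ for any $a\in{\cal{A}}$ and any $\xi>0$ with $\xi\ne1$, and since the Hilbert-Schmidt operators form a two-sided $*$-ideal contained in the ideal ${\cal{K}}(L^2(\mathbb{R}^2))$ of compact operators, the conclusion is immediate.

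Concretely, I would argue as follows. Fix $a\in{\cal{A}}$, $\xi>0$ with $\xi\ne1$, and $\mu\in\mathbb{R}^*$, so that $\mu^2\in\mathbb{R}^+$ and the hypotheses of Proposition \ref{interkernel} ii) are met. That proposition gives that $L(a)(H_L+\mu^2)^{-1}$ is Hilbert-Schmidt, i.e.\ its integral kernel ${\cal{K}}_{L(a)(H_L+\mu^2)^{-1}}$ lies in $L^2(\mathbb{R}^2\times\mathbb{R}^2)$ with finite Hilbert-Schmidt norm $I^{1/2}$ bounded by $\tilde C^{1/2}\|a\|_2$. A Hilbert-Schmidt operator on a separable Hilbert space is the norm limit of its finite-rank truncations (obtained, for instance, by expanding the kernel in an orthonormal basis of $L^2(\mathbb{R}^2\times\mathbb{R}^2)$ built from products of basis elements of $L^2(\mathbb{R}^2)$), hence compact. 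Therefore $L(a)(H_L+\mu^2)^{-1}\in{\cal{K}}(L^2(\mathbb{R}^2))$, which is the assertion.

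Since the argument is a one-line invocation of a textbook theorem, there is essentially no obstacle; the only thing worth a remark is that separability of $L^2(\mathbb{R}^2)$ guarantees the finite-rank approximation, and that the restriction $\mu\in\mathbb{R}^*$ (i.e.\ $\mu\ne0$) is exactly what is needed so that $\mu^2>0$ and $(H_L+\mu^2)^{-1}$ is a bona fide bounded inverse, matching the hypothesis of the proposition. I would state the proof in two sentences referencing Proposition \ref{interkernel} ii) and the standard inclusion of Hilbert-Schmidt operators in the compacts, without reproducing any of the kernel estimates.

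\begin{proof}
By ii) of Proposition \ref{interkernel}, for any $a\in{\cal{A}}$, any $\xi>0$ with $\xi\ne1$, and any $\mu\in\mathbb{R}^*$ the operator $L(a)(H_L+\mu^2)^{-1}$ is Hilbert-Schmidt on $L^2(\mathbb{R}^2)$. Since $L^2(\mathbb{R}^2)$ is separable, every Hilbert-Schmidt operator on it is a norm limit of finite-rank operators, hence compact, so $L(a)(H_L+\mu^2)^{-1}\in{\cal{K}}(L^2(\mathbb{R}^2))$.
\end{proof}
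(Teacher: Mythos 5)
Your proof is correct and is exactly the paper's argument: the paper states the corollary as an immediate consequence of Proposition \ref{interkernel} ii), relying on the standard fact that Hilbert--Schmidt operators are compact. Your added remarks on separability and on $\mu\ne0$ are accurate but not needed beyond what the paper already does.
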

\begin{remark} \label{lszdual}A few comments are now in order:\par
\begin{itemize}
\item i) For $\xi=1$, one has $B={{2}\over{\theta}}={{1}\over{\theta_{LSZ}}}$ ($\theta_{LSZ}$ is the deformation parameter in the convention of \cite{LSZ}). Therefore, the value $\xi=1$ corresponds to the peculiar value for the magnetic field at which the LSZ model is exactly solvable \cite{LSZ}. However, the related noncommutative space modeled by $\mathfrak{Y}(1)$ is not a spectral metric space, in view of the point ii) of the remark \ref{remarklandau}. 
\item ii) Assume now that $\xi\ne1$. By observing directly the actual structure of eqns. \eqref{k1}, \eqref{fi1}, \eqref{khl} and the steps in the computation leading to the estimate of $I$, it can be easily realized that the fact that the operator $L(a)(H_L+\mu^2)^{-1}$ is Hilbert-Schmidt comes essentially from the occurrence of the left multiplication operator $L(a)$. Qualitatively speaking, this simply reflects the strong smoothing properties of the Moyal product. As it is almost apparent from \eqref{khl}, the operator $(H_L+\mu^2)^{-1}$ is obviously not Hilbert-Schmidt  which on a more physics inspirated viewpoint, reflects the infinite degeneracy of the eigenvalues of the Landau Hamiltonian operator on $\mathbb{R}^2$.\\
Notice that a somehow similar smoothing property of the left multiplication operator $L(a)$ has been exploited in \cite{GROSSEWULK2}, \cite{wulkgayral2010} where in particular the interesting dimension drop from (the KO-dimension) $2d$ to (the spectral dimension) $d$ in the corresponding harmonic spectral triples stems from the fact that  $|D_\Omega|^{-2d}$ and $L(a)|D_\Omega|^{-d}$ are {\it{both}} trace class operators.
\end{itemize}
\end{remark}
Thanks to Proposition \ref{interkernel}, Corollary \ref{corol2}, it can be easily realized that the following property is satisfied.
\begin{proposition}\label{Ytriple}
For any $\xi>0$, $\xi\ne1$, the triples ${\mathfrak{Y}}(\xi):=({\cal{A}},\ \pi_0,\ {\cal{H}}_0,\ {\cal{D}}_\xi)$, where $\pi_0:{\cal{A}}\to{\cal{B}}({\cal{H}}_0)$ is the left regular representation given by \eqref{regulrep}, are spectral triples with spectral distances $d_{{\cal{D}}_\xi}$ homothetic to $d_{D_0}$, namely
\begin{equation}
d_{{{\cal{D}}_\xi}}(\omega_1,\omega_2)=|1-\xi|^{-1}d_{D_0}(\omega_1,\omega_2),\ \forall\omega_1,\omega_2\in{\mathfrak{S}}({\cal{A}})\label{distanceDlandau}.
\end{equation}
Each ${\mathfrak{Y}}(\xi)$ describes a spectral metric space with infinite number of distinct connected components, each component being pathwise connected for the ${\cal{D}}_\xi$-topology.
\end{proposition}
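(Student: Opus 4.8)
The plan is to follow the blueprint of the proof of Theorem~\ref{th1}: once the algebraic identity $\ell_{\mathcal{D}_\xi}(a)=|1-\xi|\,\ell_{D_0}(a)$ of \eqref{lipnormgrossneveu} is in hand (and recall $\xi\ne1$), axioms i)--iii) of Definition~\ref{spectraltriple}, the homothety \eqref{distanceDlandau}, and the two conditions of Definition~\ref{spectralmetricspace} come essentially for free, and the only genuinely new work is the localized compactness axiom iv) --- which here, unlike in the harmonic case, cannot be reduced to compactness of $\mathcal{D}_\xi^2$-resolvents alone, since the eigenvalues of $H_L$ have infinite multiplicity (cf.\ Remark~\ref{remarklandau}). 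Concretely, axioms i) and ii) will hold for $\mathfrak{Y}(\xi)$ by the same reasoning as for $\mathfrak{X}_{D_0}$ and $\mathfrak{X}(k)$: $\mathcal{A}$ is involutive, $\pi_0$ is faithful, $\mathrm{Dom}(\mathcal{D}_\xi)=\mathcal{S}\otimes\mathbb{C}^2$ is dense and stable under each $\pi_0(a)$ because $a\star b\in\mathcal{S}$ for $a,b\in\mathcal{S}$, and $\mathcal{D}_\xi$ is essentially self-adjoint on that domain, its square being the (essentially self-adjoint) Landau Hamiltonian up to a bounded matrix term. Then \eqref{lipnormgrossneveu} with $\xi\ne1$ shows $[\mathcal{D}_\xi,\pi_0(a)]$ is bounded (axiom iii), and, repeating verbatim the computation of the Lemma inside the proof of Theorem~\ref{th1} (divide by $\ell_{\mathcal{D}_\xi}(a)=|1-\xi|\,\ell_{D_0}(a)$ inside the supremum \eqref{specdist}), one gets \eqref{distanceDlandau}. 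The same identity yields $\mathcal{A}'_{\mathcal{D}_\xi}=\mathcal{A}'_{D_0}=\{0\}$ (a Schwartz function all of whose derivatives vanish is $0$), so condition b) of Definition~\ref{spectralmetricspace} holds, while non-degeneracy of $\pi_0$ (condition a) is inherited from $\mathfrak{X}_{D_0}$; hence $\mathfrak{Y}(\xi)$ is a spectral metric space once iv) is settled.

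For axiom iv), by the resolvent identity it suffices to exhibit one $\lambda\notin\mathrm{sp}(\mathcal{D}_\xi)$ with $\pi_0(a)(\mathcal{D}_\xi-\lambda)^{-1}$ compact for all $a\in\mathcal{A}$. I would choose $\mu>0$ with $\mu^2>4\xi/\theta$ and take $\lambda=i\mu$. By \eqref{magnethamilton} and $\sigma^3=P_+-P_-$ (spectral projections of $\sigma^3$), the operator $\mathcal{D}_\xi^2+\mu^2=(H_L+m_+^2)\otimes P_++(H_L+m_-^2)\otimes P_-$ is block-diagonal with $m_\pm^2:=\mu^2\mp 4\xi/\theta>0$, so by functional calculus $(\mathcal{D}_\xi^2+\mu^2)^{-1/2}=(H_L+m_+^2)^{-1/2}\otimes P_++(H_L+m_-^2)^{-1/2}\otimes P_-$. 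Next I would check that $L(a)(H_L+m^2)^{-1/2}$ is compact for every $m^2>0$: writing $\chi_N$ for the spectral projection of $H_L$ onto $[0,N]$, each $L(a)(H_L+m^2)^{-1/2}\chi_N=\big(L(a)(H_L+m^2)^{-1}\big)\big((H_L+m^2)^{1/2}\chi_N\big)$ is compact by Corollary~\ref{corol2} times the bounded operator $(H_L+m^2)^{1/2}\chi_N$, while $\|L(a)(H_L+m^2)^{-1/2}(\bbbone-\chi_N)\|\le\|L(a)\|\,(N+m^2)^{-1/2}\to 0$ ($L(a)$ being bounded on $L^2(\mathbb{R}^2)$ for $a\in\mathcal{A}$), so $L(a)(H_L+m^2)^{-1/2}$ is a norm limit of compact operators. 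Consequently $\pi_0(a)(\mathcal{D}_\xi^2+\mu^2)^{-1/2}=L(a)(H_L+m_+^2)^{-1/2}\otimes P_++L(a)(H_L+m_-^2)^{-1/2}\otimes P_-$ is compact, and since $(\mathcal{D}_\xi-i\mu)^{-1}=(\mathcal{D}_\xi^2+\mu^2)^{-1/2}B$ with $B:=(\mathcal{D}_\xi+i\mu)(\mathcal{D}_\xi^2+\mu^2)^{-1/2}$ bounded (of operator norm $1$, by functional calculus), $\pi_0(a)(\mathcal{D}_\xi-i\mu)^{-1}=\big(\pi_0(a)(\mathcal{D}_\xi^2+\mu^2)^{-1/2}\big)B$ is compact. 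This establishes iv), so $\mathfrak{Y}(\xi)$ is a spectral triple and a spectral metric space.

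Finally, the topological statement is obtained by transcribing the proof of Proposition~\ref{connected1}: by \eqref{distanceDlandau}, $d_{\mathcal{D}_\xi}(\psi_{s_1},\psi_{s_2})=|1-\xi|^{-1}d_{D_0}(\psi_{s_1},\psi_{s_2})=+\infty$ for $s_1\ne s_2$ in the ranges of Proposition~7 of \cite{WAL1}, where $\psi_s$ denotes the vector state generated by the unit vector \eqref{statesinfty}; the subsets $\mathfrak{S}_{\psi_{s_1}}=\{\omega\in\mathfrak{S}(\mathcal{A}):\ d_{\mathcal{D}_\xi}(\omega,\psi_{s_1})<+\infty\}$ are then closed and open by the triangle inequality, producing infinitely many distinct components, and each is pathwise connected because the affine path $\omega_t=(1-t)\omega_1+t\omega_2$ satisfies $d_{\mathcal{D}_\xi}(\omega_{t_1},\omega_{t_2})=|t_1-t_2|\,d_{\mathcal{D}_\xi}(\omega_1,\omega_2)<+\infty$. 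The hard part will be precisely the passage from compactness of $L(a)(H_L+m^2)^{-1}$ (Corollary~\ref{corol2}) to that of $\pi_0(a)(\mathcal{D}_\xi-\lambda)^{-1}$: one must resist multiplying a compact operator by the unbounded $\mathcal{D}_\xi$, which is why the argument is routed through the bounded operator $(\mathcal{D}_\xi^2+\mu^2)^{-1/2}$ and the spectral truncation above rather than through $\pi_0(a)\mathcal{D}_\xi(\mathcal{D}_\xi^2+\mu^2)^{-1}$ directly; the remaining points are transcriptions of arguments already supplied for Theorem~\ref{th1} and Proposition~\ref{connected1}.
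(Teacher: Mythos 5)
Your proposal is correct, and for everything except axiom iv) it follows the paper's proof essentially verbatim: axioms i)--iii), the homothety \eqref{distanceDlandau}, the triviality of the metric commutant, and the transcription of Proposition \ref{connected1} for the connected components are all handled exactly as in the paper. The one genuine divergence is the localized compactness condition. The paper disposes of it in two lines: it writes ${\cal{D}}_\xi^2$ as the diagonal matrix $\mathrm{diag}(H_L-\tfrac{4\xi}{\theta},\,H_L+\tfrac{4\xi}{\theta})$, applies Corollary \ref{corol2} blockwise to conclude that $\pi_0(a)({\cal{D}}_\xi^2+\mu^2)^{-1}$ is compact, and implicitly relies on the textbook equivalence (quoted in its appendix for $\mathfrak{X}_{D_0}$) between compactness of $\pi(a)(D-\lambda)^{-1}$ for $\lambda\notin\mathrm{sp}(D)$ and compactness of $\pi(a)(D^2+1)^{-1}$. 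You instead re-derive that equivalence concretely, routing through $({\cal{D}}_\xi^2+\mu^2)^{-1/2}$, proving its compactness after multiplication by $\pi_0(a)$ via the spectral truncation $\chi_N$ together with Corollary \ref{corol2}, and then factoring $({\cal{D}}_\xi-i\mu)^{-1}=({\cal{D}}_\xi^2+\mu^2)^{-1/2}B$ with $B$ a contraction. This costs more space but buys two things: the argument is self-contained (no appeal to the unproved textbook lemma), and your requirement $\mu^2>4\xi/\theta$ quietly repairs a small imprecision in the paper, whose ``let $\mu$ be a non zero real parameter'' would put the shift $\mu^2-\tfrac{4\xi}{\theta}$ in the upper block outside the hypotheses of Corollary \ref{corol2} when $\mu$ is small. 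Both routes are valid; yours is the more careful one on this step.
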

\begin{proof}
It is straightforward to check that axioms i) and ii) of Definition \ref{spectraltriple} while axiom iii) is fullfilled thanks to the property \eqref{lipnormgrossneveu}. For any $a\in{\cal{A}}$, $\pi_0(a)$ has a diagonal action on ${\cal{H}}_0$ since one has $\pi_0(a)=L(a)\otimes\bbbone_2$. Besides, one can write
\begin{equation}
{\cal{D}}_\xi^2=\begin{pmatrix}
H_L-{{4\xi}\over{\theta}}&0\\
0& H_L+{{4\xi}\over{\theta}}
\end{pmatrix}.
\end{equation}
Let $\mu$ be a non zero real parameter. The Corollary \ref{corol2} implies that the diagonal operator $\pi_0(a)({\cal{D}}_\xi^2+\mu^2)^{-1}$ is a compact operator on ${\cal{H}}_0$ so that the axiom iv) of Definition \ref{spectraltriple} is verified. Hence for any $\xi>0$, $\xi\ne1$, ${\mathfrak{Y}}(\xi)$ is a spectral triple. The relation \eqref{distanceDlandau} is a direct consequence of the second relation \eqref{lipnormgrossneveu}.\\
The last part of the proposition can be shown in a way completely similar to the one used for the proposition \ref{connected1} which can be adapted to the present proposition. This terminates the proof.
\end{proof}
\begin{remark}\label{conseq2} 

The Proposition \ref{Ytriple} shows that \eqref{formula2} extends straightforwardly to:
\begin{equation}
d_{{\cal{D}}_\xi}(\omega_m,\omega_n)={\sqrt{{{\theta}\over{2(1-\xi)^2}} }}\sum_{p=n+1}^{m}{{1}\over{{\sqrt{p}} }},\ n<m. 
\end{equation}
and the conclusions obtained in \cite{mart-tom, mart-recent} apply to $\mathfrak{Y}(\xi)$ as seen before for $\mathfrak{X}(k)$, $k=1,2$.
\end{remark}
Consider the unital algebra ${\cal{M}}_\star:=({\cal{M}},\star)$. At a purely algebraic level, we note that ${\cal{D}}_\xi$ can be related to a connection on ${\cal{M}}_\star$ viewed as a module on itself. Indeed, as shown in \cite{WAL3, WAL4}, the map $\nabla_X:{\cal{M}}_\star\to{\cal{M}}_\star$, $X=(\partial_\mu),\ \mu=1,2$ given by
\begin{equation}
\nabla_{\partial_\mu}(a):=\nabla_\mu(a)=\partial_\mu a-iA_\mu\star a,\ \forall a\in{\cal{M}}_\star\label{nablamiu}
\end{equation}
where $A_\mu\in{\cal{M}}_\star$, $A_\mu^*=A_\mu$ defines a hermitian connection with hermitian structure
\begin{equation}
h:{\cal{M}}_\star\times{\cal{M}}_\star\to{\cal{M}}_\star,\ h(m_1,m_2)=m_1^*\star m_2, \forall m_1,m_2\in{\cal{M}}_\star\label{hermitstruct}.
\end{equation}
The unitary gauge group ${\cal{U}}({\cal{M}}_\star)$ acts on the connections as $\nabla_X^g=g^\dag\circ\nabla_X\circ g$ and yields
\begin{equation}
A_\mu^g:=g^\dag\star A_\mu\star g+g^\dag\star\partial_\mu g,\ \forall g\in{\cal{U}}({\cal{M}})\label{gaugetrans}.
\end{equation}
Whenever $A_\mu=-{{1}\over{2}}{\tilde{x}}_\mu:=A_\mu^{inv}$, the map \eqref{nablamiu} defines a gauge invariant connection \cite{WAL3, WAL4}:
\begin{equation}
\nabla^{inv}_\mu(a):=\partial_\mu a+{{i}\over{2}}{\tilde{x}}_\mu\star a={{i}\over{2}}a\star{\tilde{x}}_\mu,\ \forall a\in{\cal{M}}_\star\label{gaugeinvconnection},
\end{equation}
where gauge invariance $(\nabla^{inv}_X)^g=g^\dag\circ\nabla^{inv}_X\circ g=\nabla^{inv}_X$ follows from the second equality in the equation \eqref{gaugeinvconnection}. \\
Define now $\nabla^{\lambda}_\mu$ as $\nabla^{\lambda}_\mu(a):=\partial_\mu a+i\lambda{\tilde{x}}_\mu\star a$, $\forall a\in{\cal{M}}_\star$, with $A^{\lambda}_\mu:=-\lambda{\tilde{x}}_\mu$ with $\lambda\in\mathbb{R}$ and $\slashed{\nabla}^\lambda:=-i\sigma^\mu\nabla^\lambda_\mu$. By using Proposition \ref{formulas}, one obtains
\begin{equation}
{\cal{D}}_\xi=-i(1+\xi)\sigma^\mu\nabla^{{{\xi}\over{1+\xi}}}_\mu=(1+\xi)\slashed{\nabla}^{{{\xi}\over{1+\xi}}}\label{covderivation},
\end{equation}
which exhibits a relationship between the Dirac operator ${\cal{D}}_\xi$ and the ''covariant Dirac operator'' $\slashed{\nabla}^{{{\xi}\over{1+\xi}}}$ built from the connection $\nabla_\mu^\lambda$. From \eqref{distanceDlandau}, \eqref{covderivation} and \eqref{lipnormgrossneveu}, one obtains
\begin{equation}
||[\slashed{\nabla}^{{{\xi}\over{1+\xi}}},\pi_0(a)] ||={{1}\over{1+\xi}}||[{\cal{D}}_\xi,\pi_0(a)] ||={{|1-\xi|}\over{1+\xi}}||[D_0,\pi_0(a)] ||,\ \forall a\in{\cal{A}}.\label{lipnorms2}
\end{equation}
From \eqref{covderivation} and \eqref{lipnorms2}, it follows that, in some sense, determing the spectral distance for the spectral triples $\mathfrak{Y}(\xi)$ amounts to determine the spectral distance for the standard Dirac operator $D_0$ ``perturbed'' by a connection defined by ${\tilde{A}}^\xi_\mu=-{{\xi}\over{1+\xi}}{\tilde{x}}_\mu$. The resulting spectral metric spaces are homethetic thanks to the simple relation \eqref{distanceDlandau} (or \eqref{lipnorms2}). \\
Whenever $\xi=1$, ${\tilde{A}}^1_\mu=-{{1}\over{2}}{\tilde{x}}_\mu$ and the resulting connection is the gauge invariant connection. In particular, \eqref{covderivation} becomes ${\cal{D}}_{\xi=1}=2\slashed{\nabla}^{inv}$ which however does not give rise to a spectral metric space since the spectral distance between each pair of states in $\mathfrak{S}({\cal{A}})$ is infinite.\par 

It is easy to observe that a simple twist as in \eqref{twistdirac} of $D_k$, $k=1,2$, \eqref{Domega1}, \eqref{Domega2}, gives rise to Dirac operators involving two copies of \eqref{grossdirac}. Focusing on $D_1$, we quote directly the corresponding metric property:
\begin{proposition}\label{tripletcrossed}
For any $\xi>0$, $\xi\ne1$, the triples $({\cal{A}},\ \pi,\ {\cal{H}}=L^2(\mathbb{R}^2)\otimes\mathbb{C}^4,\ {\tilde{D}}_\xi)$, where $\pi:{\cal{A}}\to{\cal{B}}({\cal{H}})$ is the left regular representation \eqref{pi4} and the self-adjoint Dirac operator ${\tilde{D}}_\xi$, with Dom$({\tilde{D}}_\xi)={\cal{S}}\otimes\mathbb{C}^4$ is given by
\begin{equation}
{\tilde{D}}_\xi:=(\bbbone_2\otimes\sigma^\mu)(-i\partial_\mu)-\xi(\sigma^3\otimes\sigma^\mu)m({\tilde{x}}_\mu)\ \label{twistdirac}
\end{equation}
are spectral metric spaces having infinite number of distinct connected components, each pathwise connected, with corresponding spectral distances satisfying
\begin{equation}
d_{{\tilde{D}}_\xi}(\omega_1,\omega_2)={{1}\over{1+\xi}}d_{D_0}(\omega_1,\omega_2), \forall\omega_1,\omega_2\in\mathfrak{S}({\cal{A}}).\label{doubledist}
\end{equation}
\end{proposition}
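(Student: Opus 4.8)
The plan is to follow the same pattern that has already proved successful for Theorem \ref{th1} and Proposition \ref{Ytriple}, reducing everything to the computation of the Lipschitz seminorm $\ell_{{\tilde{D}}_\xi}$ and the verification of the resolvent compactness. First I would compute $[{\tilde{D}}_\xi,\pi(a)]$ directly from \eqref{twistdirac}: using the Leibniz rule $[\partial_\mu,L(a)]=L(\partial_\mu a)$, the anticommutator identity $\{{\tilde{x}}_\mu,f\}_\star=2{\tilde{x}}_\mu.f$ of \eqref{relat122}, and the inner-derivation formula $\partial_\mu f=-{{i}\over{2}}[{\tilde{x}}_\mu,f]_\star$ of \eqref{relat1}, exactly as in the proof of Proposition \ref{algebrorelat}, the multiplication operator $m({\tilde{x}}_\mu)$ gets traded for a Moyal commutator and one obtains
\begin{equation}
[{\tilde{D}}_\xi,\pi(a)]=-iL(\partial_\mu a)\otimes(\bbbone_2\otimes\sigma^\mu+\xi\,\sigma^3\otimes\sigma^\mu),\ \forall a\in{\cal{A}}.\nonumber
\end{equation}
This is the $4\times4$ analogue of the second relation in \eqref{lipnormgrossneveu}; note that the two terms differ only in the first tensor factor ($\bbbone_2$ versus $\sigma^3$), which is the structural reason a homothety appears with factor $1+\xi$ rather than $|1-\xi|$.

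Next I would extract the operator norm. Writing $L:=L(\partial a)$, ${\bar L}:=L(\delbar a)$ and using $\|T\|^2=\|T^*T\|$ on $T=[{\tilde{D}}_\xi,\pi(a)]$, one gets a block operator built from $\bbbone_2\otimes\sigma^\mu+\xi\,\sigma^3\otimes\sigma^\mu$; since $\sigma^3$ has eigenvalues $\pm1$, diagonalising the first tensor slot splits the matrix into a block proportional to $(1+\xi)^2$ and a block proportional to $(1-\xi)^2$, but the $\sigma^\mu$ in the second slot mixes $L$ and ${\bar L}$ so that, after a further unitary diagonalisation in the spirit of \eqref{diag1}--\eqref{diagprim1}, the spectrum of $[{\tilde{D}}_\xi,\pi(a)]^*[{\tilde{D}}_\xi,\pi(a)]$ consists of $(1+\xi)^2\|L\|^2$, $(1+\xi)^2\|{\bar L}\|^2$, $(1-\xi)^2\|L\|^2$, $(1-\xi)^2\|{\bar L}\|^2$ (schematically). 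Since $\xi>0$ we have $(1+\xi)^2>(1-\xi)^2$, so the norm is governed by the first pair and Proposition \ref{lzero} gives
\begin{equation}
\ell_{{\tilde{D}}_\xi}(a)=(1+\xi)\,\ell_{D_0}(a),\ \forall a\in{\cal{A}}.\nonumber
\end{equation}
This simultaneously shows $[{\tilde{D}}_\xi,\pi(a)]$ is bounded (axiom iii)) and, by the three-line $\sup$ computation used at the end of the proof of Theorem \ref{th1}, yields \eqref{doubledist} directly. Axioms i) and ii) are routine, handled exactly as in Proposition \ref{X0} and Theorem \ref{th1}.

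For axiom iv) I would square ${\tilde{D}}_\xi$: a direct calculation analogous to \eqref{magnethamilton} and \eqref{hamiltonian12} gives a block-diagonal operator whose diagonal blocks are $H_L\pm{{4\xi}\over{\theta}}$ (the Landau Hamiltonian plus a bounded matrix term), since ${\tilde{D}}_\xi$ is, up to the harmless $\sigma^3\otimes$ twist in the derivative slot, two copies of ${\cal{D}}_\xi$ \eqref{grossdirac} — indeed this is precisely the observation recorded just before the statement. Then $\pi(a)({\tilde{D}}_\xi^2+\mu^2)^{-1}$ is, blockwise, a finite sum of operators of the form $L(a)(H_L+\text{const}+\mu^2)^{-1}\otimes(\text{matrix})$, each compact by Corollary \ref{corol2} (choosing $\mu$ large enough that the constant shift keeps the resolvent well defined), so axiom iv) holds and ${\tilde{D}}_\xi$ defines a spectral triple; triviality of the metric commutant follows from $\ell_{{\tilde{D}}_\xi}=(1+\xi)\ell_{D_0}$ together with $\xi\ne-1$ and the faithfulness already known for $D_0$, giving a spectral metric space. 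Finally the connected-components and pathwise-connectedness statements are obtained verbatim as in Proposition \ref{connected1}, using that the states $\psi_s$ of \eqref{statesinfty} remain at infinite $d_{{\tilde{D}}_\xi}$-distance by the homothety \eqref{doubledist} and that $d_{{\tilde{D}}_\xi}(\omega_{t_1},\omega_{t_2})=|t_1-t_2|\,d_{{\tilde{D}}_\xi}(\omega_1,\omega_2)$ on line segments of states. The only mildly delicate point is the explicit diagonalisation giving the norm in the second step — the mixing of $L$ and ${\bar L}$ by $\sigma^\mu$ combined with the $\pm1$ eigenspaces of $\sigma^3$ means one must be careful that no cross-block term survives — but the unitary conjugation used in \eqref{diag1}--\eqref{diagprim1} adapts directly, and once one checks $(1+\xi)^2\ge(1-\xi)^2$ for $\xi>0$ the result falls out.
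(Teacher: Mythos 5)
Your proposal is correct and follows essentially the same route as the paper: the paper likewise reduces everything to the block decomposition ${\tilde{D}}_\xi=\mathrm{diag}({\cal{D}}_{-\xi},{\cal{D}}_\xi)$ (equivalently, your $(\bbbone_2+\xi\sigma^3)\otimes\sigma^\mu=\mathrm{diag}(1+\xi,1-\xi)\otimes\sigma^\mu$, which is already diagonal in the first slot, so the extra unitary conjugation you worry about is not actually needed), obtains $\ell_{{\tilde{D}}_\xi}(a)=\max(|1-\xi|,|1+\xi|)\,\ell_{D_0}(a)=(1+\xi)\ell_{D_0}(a)$, and checks compactness of the resolvent blockwise via Proposition \ref{interkernel}. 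The only cosmetic caveat is that for the ${\cal{D}}_{-\xi}$ block one should invoke the extension of Corollary \ref{corol2} to negative $\xi$ recorded in Remark \ref{finalremark}, exactly as the paper implicitly does.
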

\begin{proof}
Checking that axioms i) and ii) of definition \ref{spectraltriple} are satisfied is similar to what has been done for the spectral triples $(\mathfrak{X}(k))_{k=1,2}$. Next, one can write 
\begin{equation}
{\tilde{D}}_\xi=\begin{pmatrix}
-i\sigma^\mu(\partial_\mu-i\xi m({\tilde{x}}_\mu))&0\\
0&-i\sigma^\mu(\partial_\mu+i\xi m({\tilde{x}}_\mu))
\end{pmatrix}=
\begin{pmatrix}
{\cal{D}}_{-\xi}&0\\
0&{\cal{D}}_\xi
\end{pmatrix}
\end{equation}
which has therefore diagonal action on ${\cal{H}}={\cal{H}}_0\otimes\mathbb{C}^2$. It follows that
\begin{equation}
[{\tilde{D}}_\xi,\pi(a)]=\begin{pmatrix}
(1-\xi)[D_0,\pi_0(a)]&0\\
0&(1+\xi)[D_0,\pi_0(a)]
\end{pmatrix},\ \forall a\in{\cal{A}}.
\end{equation}
Therefore, one has for any $a\in{\cal{A}}$
\begin{equation}
\ell_{{\tilde{D}}_\xi}=||[{\tilde{D}}_\xi,\pi(a)] ||=\max(|1-\xi| ||[D_0,\pi_0(a)]||,|1+\xi| ||[D_0,\pi_0(a)]||)=(1+\xi)\ell_{D_0}(a).\label{homot2}
\end{equation}
Hence, $[{\tilde{D}}_\xi,\pi(a)]$ is a bounded operator on ${\cal{H}}$ so that axiom iii) of Definition \ref{spectraltriple} is satisfied while \eqref{doubledist} stems again from the definition of the spectral distance combined with \eqref{homot2}. \par 
Finally, to check the compactness axiom iv) on the resolvent operator of ${\tilde{D}}_\xi$, use for instance the fact that
\begin{equation}
 {\tilde{D}}_\xi^2=\begin{pmatrix}{\cal{D}}_{-\xi}^2&0\\ 
0&{\cal{D}}_{\xi}^2\end{pmatrix}
\end{equation}
together with Proposition \ref{interkernel}. It implies that $\pi(a)({\tilde{D}}^2_\xi+\mu^2)^{-1}\in{\cal{K}}({\cal{H}})$ so that the axiom iv) is satisfied. Finally, the existence of an infinite number of distinct pathwise connected components can be shown as for the Proposition \ref{connected1}.
\end{proof}
A similar discussion holds for $D_2$. 
\begin{remark}\label{finalremark} We close this section by noting that Proposition \ref{interkernel} (and Corollary \ref{corol2}) can be extended to any $\xi\in\mathbb{R}$, $\xi\ne1$. In this case, the expression for \eqref{k1} is unchanged except $\Phi(v)$ \eqref{fi1} which must be replaced by $\Psi(v)$ given by 
\begin{equation}
\Psi (v)=\int_0^\infty dt{{e^{-t\mu^2}}\over{4\pi\sinh{t}}}e^{-\vert\xi\vert\theta^{-1}|v|^2\coth{t}}\nonumber.
\end{equation}
Accordingly, $\int d^2v \vert\Psi (v) \vert^2\le{{\pi\theta }\over{48\vert\xi\vert }}$. Then, it follows that for any $a\in{\cal{A}}$, $\xi\in\mathbb{R}$, $\xi\ne1$, $L(a)(H_L+\mu^2)$ is Hilbert-Schmidt, hence compact on $L^2({\mathbb{R}^2})$. This implies that Proposition \ref{tripletcrossed} given above, together with Proposition \ref{Ytriple} are still valid for any $\xi\in\mathbb{R}$, $\xi\ne1$.
\end{remark}
\section{Discussion}
We now briefly discuss and comment the results.\par
Observe that \eqref{da} can be recast into the form $[D_\Omega,\pi(a)]=-iL(\partial_\mu a)\otimes\Gamma^\mu_{D_\Omega}$ for any $a\in{\cal{A}}$ where  $\Gamma^\mu_{D_\Omega}:=\gamma^\mu+\Omega\gamma^{\mu+2}$, $\mu=1,2$. Thanks to \eqref{cliff4}, the hermitian matrices $\Gamma^\mu_{D_\Omega}$ satisfy the Clifford relation 
\begin{equation}
\{\Gamma^\mu_{D_\Omega},\Gamma^\nu_{D_\Omega}\}=2\delta^{\mu\nu}(1+\Omega^2)\bbbone_4:=2(G_{D_\Omega}^{-1})^{\mu\nu}\bbbone_4\label{cliffmetrics}
\end{equation}
which defines an effective Clifford metric $G_{D_\Omega}$ for $D_\Omega$ through the rightmost side. A similar definition applies obviously for the Dirac operators $D_0$ and ${\cal{D}}_\xi$ in view of the equation \eqref{lipnormgrossneveu} and $[D_0,\pi_0(a)]=-iL(\partial_\mu a)\otimes \sigma^\mu$. Keeping the above observation in mind, we summarize the results of this paper by the following theorem:
\begin{theorem}\label{summaryresult}
The triples $({\cal{A}},\pi_0,{\cal{H}}_0,D_0)$ defined in proposition \ref{X0}, $({\cal{A}},\pi,{\cal{H}},D_k)$, $k=1,2$, defined in Theorem \ref{th1} and  $({\cal{A}},\pi_0,{\cal{H}}_0,{\cal{D}}_\xi)$ defined in proposition \ref{Ytriple} (and remark \ref{finalremark}) are homothetic spectral metric spaces, each having an infinite number of distinct components pathwise connected for the respective spectral distance. The homothety relations are given by
\begin{equation}
d_{D_k}(\omega_1,\omega_2)=(\det\ G_{D_k})^{{{1}\over{4}}}d_{D_0}(\omega_1,\omega_2), \ \forall k=1,2,\ \forall\omega_1,\omega_2\in{\mathfrak{S}}({\cal{A}}),\label{homott1}
\end{equation}
\begin{equation}
d_{{{\cal{D}}_\xi}}(\omega_1,\omega_2)=(\det\ G_{{{\cal{D}}_\xi}})^{{{1}\over{4}}}d_{D_0}(\omega_1,\omega_2),\ \forall\omega_1,\omega_2\in{\mathfrak{S}}({\cal{A}}),\label{homott2}
\end{equation}
where the effective Clifford metrics $G_{D_k}$ and $G_{{{\cal{D}}_\xi}}$ are defined as in \eqref{cliffmetrics}.
\end{theorem}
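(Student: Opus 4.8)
The plan is to assemble the homothety relations and the connectedness statements already proved for the individual triples, and then to re-express each scalar homothety factor as the fourth root of the determinant of the corresponding effective Clifford metric defined in \eqref{cliffmetrics}. There is no new analytic input; the content is bookkeeping together with one small dimensional coincidence.

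First I would record the effective metrics. For $D_k$, $k=1,2$, relation \eqref{da} reads $[D_k,\pi(a)]=-iL(\partial_\mu a)\otimes\Gamma^\mu_{D_k}$ with $\Gamma^\mu_{D_k}=\gamma^\mu+\Omega\gamma^{\mu+2}$, and the Clifford relations \eqref{cliff4} give $\{\Gamma^\mu_{D_k},\Gamma^\nu_{D_k}\}=2(1+\Omega^2)\delta^{\mu\nu}\bbbone_4$, so $(G_{D_k}^{-1})^{\mu\nu}=(1+\Omega^2)\delta^{\mu\nu}$. For ${\cal D}_\xi$, the first relation in \eqref{lipnormgrossneveu} reads $[{\cal D}_\xi,\pi_0(a)]=-iL(\partial_\mu a)\otimes\Gamma^\mu_{{\cal D}_\xi}$ with $\Gamma^\mu_{{\cal D}_\xi}=(1-\xi)\sigma^\mu$, so $\{\Gamma^\mu_{{\cal D}_\xi},\Gamma^\nu_{{\cal D}_\xi}\}=2(1-\xi)^2\delta^{\mu\nu}\bbbone_2$ and $(G_{{\cal D}_\xi}^{-1})^{\mu\nu}=(1-\xi)^2\delta^{\mu\nu}$. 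Since the spatial dimension is $2$, taking $2\times2$ determinants yields $\det G_{D_k}=(1+\Omega^2)^{-2}$ and $\det G_{{\cal D}_\xi}=(1-\xi)^{-4}$, hence $(\det G_{D_k})^{1/4}=(1+\Omega^2)^{-1/2}$ and $(\det G_{{\cal D}_\xi})^{1/4}=|1-\xi|^{-1}$; likewise $\det G_{D_0}=1$.

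Next I would substitute these identities into the established formulas: \eqref{distanceD} of Theorem \ref{th1} becomes \eqref{homott1}, and \eqref{distanceDlandau} of Proposition \ref{Ytriple} becomes \eqref{homott2}, with Remark \ref{finalremark} covering the remaining values of $\xi$. That each triple $({\cal A},\pi,{\cal H},D_k)$ and $({\cal A},\pi_0,{\cal H}_0,{\cal D}_\xi)$ is a spectral metric space in the sense of Definition \ref{spectralmetricspace} with infinitely many pairwise distinct, pathwise connected components is exactly the content of Propositions \ref{connected1} and \ref{Ytriple} (triviality of the metric commutant for ${\cal D}_\xi$ forcing $\xi\ne1$, cf.\ Remark \ref{remarklandau}); for the reference triple $({\cal A},\pi_0,{\cal H}_0,D_0)$ this is recalled in Subsection \ref{standardmoyalgeo} using \cite{WAL1,WAL2}. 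Concatenating these gives the theorem.

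I do not expect any real obstacle: the statement is a consolidation of Theorem \ref{th1}, Proposition \ref{Ytriple} and the path-connectedness results into a determinantal form. The only point needing care is matching the exponent $1/4$ to the spatial dimension $d=2$: a scalar effective metric $G^{\mu\nu}=\lambda^{-1}\delta^{\mu\nu}$ has $\det G=\lambda^{-d}$, so it is precisely because $d=2$ that $(\det G)^{1/4}=\lambda^{-1/2}$ reproduces the factors $(1+\Omega^2)^{-1/2}$ and $|1-\xi|^{-1}$ of \eqref{distanceD} and \eqref{distanceDlandau}; in general even dimension one would instead write the homothety factor as $(\det G)^{1/(2d)}$.
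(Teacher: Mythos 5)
Your proposal is correct and follows essentially the same route as the paper: the theorem is a consolidation of Theorem \ref{th1}, Proposition \ref{Ytriple} (with Remark \ref{finalremark}) and the connectedness results, the only computation being the identification $(\det G_{D_k})^{1/4}=(1+\Omega^2)^{-1/2}$ and $(\det G_{{\cal D}_\xi})^{1/4}=|1-\xi|^{-1}$ from \eqref{cliffmetrics}, which you carry out correctly. The paper's own proof is a one-line citation of those results; your version merely makes the determinant bookkeeping (and its dependence on the spatial dimension $d=2$) explicit.
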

\begin{proof}
This follows from Proposition \ref{X0}, Theorem \ref{th1} and Proposition \ref{Ytriple}.
\end{proof}

It can be realized that the homothetic relations linking the spectral metric spaces occur thanks to the fact that the spatial derivatives can be expressed as inner derivatives (recall $\partial_\mu f=-{{i}\over{2}}[{\tilde{x}}_\mu,f]_\star$). This can be combined with the $x$-dependant term of each of the Dirac operators for the Harmonic and Landau case. As a result, one obtains $[D,\pi(a)]=\Gamma^\mu\pi(\partial_\mu a)$, where in each case the set of hermitian matrices $\Gamma^\mu$ spans a representation of a Clifford algebra.\par
As a remark from a more physics inspirated viewpoint, we note that the parameters $\Omega$ and $\xi$ in \eqref{Domega} and \eqref{grossdirac} should also appear in classical actions linked in some way with the corresponding spectral triples. These actions can be obtained basically in two ways: either from some algebraic construction, such as (most of) the one considered in \cite{GROSSWULK0}-\cite{WAL5}, or from a computation of the spectral action stemming from the associated spectral triple (supplemented by suitable additional conditions). In any case, it is natural to interpret these actions as field theories built on noncommutative space described by the corresponding spectral triple, whose metric properties are ruled by the spectral distance. Whenever the action is renormalisable, if these parameters are renormalised, then the spectral distance is affected by the corresponding renormalisation effects, in view of the homothety relations \eqref{homott1}, \eqref{homott2}. In some sense, metric properties of the noncommutative space would closely fit with the renormalisation of the field theory built on it. For instance, the parameter $\Omega$ has a renormalisation flow going to $1^-$ in the UV region within the renormalisable noncommutative $\varphi^4$ harmonic model \cite{GROSSWULK0,GROSSEWULK}, so that from \eqref{homott1} the spectral distance $d_{D_k}$ experienced by two states decreases when moving toward the UV region. A somewhat similar comment would apply for $d_{{\cal{D}}_\xi}$ viewed as the distance function on the noncommutative space underlying LSZ type models. Notice that $d_{{\cal{D}}_\xi}\to\infty$ when $\xi\to1^-$ so that, in that picture, (well defined) metric structure would show up only when $\xi$ renormalises away from $1$, (presumably) away from the UV regime.

\subsection*{Acknowledgments}
Discussions with P. Martinetti are gratefully acknowledged. We thank the referee for useful suggestions and comments. This work is dedicated to the memory of M.-C. Chantelouve.
\vskip 2 true cm
\appendix
\section{The Standard Moyal Spectral Metric Space}
In this appendix, we collect additional details relative to subsection 2.1. Let ${\cal{A}}:=({\cal{S}},\star)$ denotes the non-unital involutive algebra built from ${\cal{S}}$ equipped with the Moyal product \cite{GRACIAVAR}.
\begin{proposition}\cite{GAYRAL2004, GRACIAVAR}
${\cal{A}}$ is a non unital Fr\'echet pre-C* algebra equipped with usual seminorms $\rho_{\alpha,\beta}(a)=\sup_{x\in\mathbb{R}^2}|x_1^{\alpha_1}x_2^{\alpha_2}\partial_1^{\beta_1}\partial_2^{\beta^2}a|$, $\forall a\in{\cal{A}}$.
\end{proposition}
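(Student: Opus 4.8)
The plan is to check three points separately. The underlying locally convex space $({\cal S}(\mathbb R^2),(\rho_{\alpha,\beta}))$ is already the standard complete, metrizable Schwartz space, so the content is: that $\star$ is jointly continuous for this topology, making $({\cal S},\star)$ a Fréchet $\ast$-algebra; that it has no unit; and that it sits in its $C^*$-completion as a spectrally invariant subalgebra (stability under holomorphic functional calculus). The second point is essentially trivial, the first a routine estimate, and the third the only substantial step.

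\emph{Fréchet algebra structure and non-unitality.} The key tool is Proposition \ref{formulas}. The Leibniz rule in \eqref{relat1} moves each $\partial_\mu$ across a $\star$-product, while combining $\tilde{x}_\mu\star f=\tilde{x}_\mu\!\cdot\! f+i\partial_\mu f$ with $\tilde{x}_\mu\star f+f\star\tilde{x}_\mu=2\,\tilde{x}_\mu\!\cdot\! f$ from \eqref{relat122} expresses each coordinate multiplication $x_\nu(f\star g)$ as a finite combination of terms $(x_\sigma f)\star g$, $f\star(x_\sigma g)$, $(\partial_\sigma f)\star g$, $f\star(\partial_\sigma g)$. Iterating these moves, $\rho_{\alpha,\beta}(f\star g)$ is bounded by a finite sum of sup-norms $\|u\star v\|_\infty$ with $u,v$ running over a fixed finite list of monomials-times-derivatives of $f$ and $g$; a base inequality of the type $\|u\star v\|_\infty\le C\|u\|_2\|v\|_2$ (Cauchy--Schwarz applied to the oscillatory integral \eqref{moyal}) then gives $\rho_{\alpha,\beta}(f\star g)\le C\sum_{\mathrm{finite}}\rho_{\gamma,\delta}(f)\,\rho_{\gamma',\delta'}(g)$. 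Hence $\star$ is jointly continuous, $({\cal S},\star)$ is a Fréchet algebra, and complex conjugation is an involution with $(f\star g)^\dag=g^\dag\star f^\dag$ (Proposition \ref{moyalproperty}), isometric for every $\rho_{\alpha,\beta}$. Non-unitality is immediate: a unit $e$ would satisfy $\int e f=\int(e\star f)=\int f$ for all $f\in{\cal S}$ by the trace property, forcing $e=1$ as a distribution, and $1\notin{\cal S}$.

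\emph{Pre-$C^*$ property.} Put $\|a\|:=\|L(a)\|_{{\cal B}(L^2(\mathbb R^2))}$; it is finite since the Moyal product maps ${\cal S}\times L^2\to L^2$ boundedly (as already used for the representation $\pi_0$), it is a norm because the left regular representation is faithful, and it is a $C^*$-norm because $L(a^\dag)=L(a)^*$. Thus the completion $\bar{\cal A}$ is a $C^*$-algebra, and the Fréchet topology dominates the $\|\cdot\|$-topology by the estimate above. For spectral invariance I would pass to the matrix basis $\{f_{mn}\}$ of \cite{Gracia-Bondia:1987kw,Varilly:1988jk}: under it ${\cal A}$ becomes the algebra $\mathfrak s$ of doubly-indexed matrices $(a_{mn})$ with rapidly decreasing entries under matrix multiplication, $\bar{\cal A}\cong{\cal K}(\ell^2(\mathbb N))$, and the $\rho_{\alpha,\beta}$ are equivalent to the weights $\sup_{m,n}(1+m+n)^k|a_{mn}|$. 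If $a\in\mathfrak s$ and $1+a$ is invertible in the unitisation of ${\cal K}(\ell^2)$ with inverse $1+b$, the resolvent identities $b=-a-a\star b=-a-b\star a$ together with the rapid row- and column-decay of $a$ and the boundedness $|b_{mn}|\le\|b\|$ first show $b$ decays rapidly in each index (uniformly in the other); reinserting this into $b=-a-a\star b$ then yields joint rapid decay, i.e. $b\in\mathfrak s$. Hence ${\cal A}$ is stable under the holomorphic functional calculus of $\bar{\cal A}$, so ${\cal A}$ is a non-unital Fréchet pre-$C^*$-algebra.

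The main obstacle is this last step: pinning down precisely the equivalence of the Schwartz seminorms with the matrix-basis weights (which rests on the asymptotics of the Laguerre-type functions $f_{mn}$) and making the bootstrap rigorous, the delicate point being the upgrade from ``rapidly decreasing in each index separately'' to ``jointly rapidly decreasing.'' An alternative avoiding the matrix basis is to run the same resolvent bootstrap inside the multiplier algebra ${\cal M}$ of \eqref{multiplikator}, using ${\cal M}\star{\cal S}\subset{\cal S}$ and ${\cal S}\star{\cal M}\subset{\cal S}$, once one has established $\bar{\cal A}\subset{\cal M}$; that inclusion requires a short separate argument. By contrast, the continuity of $\star$ and the non-unitality are routine.
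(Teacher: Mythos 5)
The paper gives no proof of this proposition: it is quoted as known background from \cite{GAYRAL2004} and \cite{GRACIAVAR}, so there is no internal argument to compare yours against. Judged on its own terms, your reconstruction is correct and follows the route of the cited references. The joint continuity of $\star$ via the Leibniz rule and the relations of Proposition \ref{formulas}, combined with a base estimate of the form $\|u\star v\|_\infty\le C\|u\|_2\|v\|_2$, is the standard argument; the non-unitality argument via the trace property is right (a unit would be forced to equal the constant function $1$ as a distribution, which is not Schwartz); and the spectral-invariance bootstrap in the matrix basis is exactly how the literature proceeds. Your two-step upgrade from separate to joint rapid decay of the quasi-inverse does go through: from $b=-a-b\star a$ and the rapid decay of the column sums $\sum_p|a_{pn}|$ one gets $|b_{mn}|\le C_l(1+n)^{-l}$ uniformly in $m$, and feeding this back into $b=-a-a\star b$ gives $|b_{mn}|\le|a_{mn}|+C_kC_l(1+m)^{-k}(1+n)^{-l}$. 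Two points deserve to be made explicit. First, the identification of the Schwartz seminorms with the rapid-decay matrix weights under $a\mapsto(a_{mn})$ — the step you correctly flag as the delicate one — is precisely the expansion theorem of \cite{Gracia-Bondia:1987kw, Varilly:1988jk} recalled in the paper's appendix (the seminorms $\rho_k$ there), so it should be imported as a citation rather than reproved. Second, before invoking the entrywise bound $|b_{mn}|\le\|b\|$ you should note that the quasi-inverse $b=-(1+a)^{-1}a$ automatically lies in $\mathcal{K}(\ell^2)$ because $\mathcal{K}(\ell^2)$ is an ideal in its multiplier algebra; with that remark the bootstrap is complete. In short: the proposal is a faithful, essentially complete sketch of the standard proof that the paper chose to cite rather than reproduce.
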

The $\star$-product \eqref{moyal} can be extended to spaces larger than ${\cal{S}}$. It is convenient to introduce the family of Wigner transition eigenfunctions of the harmonic oscillator $\{f_{mn}\}_{m,n\in\mathbb{N}}\subset{\cal{S}}$ (see for instance \cite{Gracia-Bondia:1987kw, Varilly:1988jk}), the so-called matrix base. This latter can be used to define a Frechet algebra isomorphism between ${\cal{A}}$ and the Frechet algebra of rapid decay matrices $(a_{mn})_{m,n\in\mathbb{N}}$ equipped with family of seminorms given by
\begin{equation}
 \rho^2_k(a)=\sum_{m,n\in\mathbb{N}}\theta^{2k}(m+{{1}\over{2}})^k(n+{{1}\over{2}})^k|a_{mn}|^2,\ k\in\mathbb{N}.
\end{equation}
It is given by 
\begin{equation}
(a_{mn})\to\sum_{m,n}a_{mn}f_{mn}\in{\cal{S}},
\end{equation}
with inverse
\begin{equation}
a\in{\cal{S}}\to a_{mn}={{1}\over{2\pi\theta}}\int d^2x(a\star f_{nm})(x)={{1}\over{2\pi\theta}}\int d^2x f_{mn}^\star(x) a(x).
\end{equation}
Then, the extension of the $\star$-product to various subspaces of ${\cal{S}}^\prime$ can be achieved by considering the family of spaces $\{{\cal{G}}_{s,t}\}_{s,t\in\mathbb{R}}$ defined as
\begin{equation}
{\cal{G}}_{s,t}:=\{ a=\sum_{m,n\in{\mathbb{N}}}a_{mn}f_{mn}\in{\cal{S}}^\prime;\ ||a ||_{s,t}^2=\sum_{m,n}\theta^{s+t}\big(m+\tfrac{1}{2}\big)^s\big(n+\tfrac{1}{2})^t|a_{mn}|^2<\infty\}, \label{gst}
\end{equation}
obtained by completion of the Schwarz algebra with respect to the norm $||.||_{s,t}$ given in \eqref{gst}. For any $a=\sum_{m,n}a_{mn}f_{mn}\in{\cal{G}}_{s,t}$ and any $b\in{\cal{G}}_{q,r}$, $b=\sum_{m,n}b_{mn}f_{mn}$, with $t+q\ge0$, the sequences $c_{mn}=\sum_{p}a_{mp}b_{pn}$, $\forall\, m,n\in{\mathbb{N}}$ define the functions $c=\sum_{m,n}c_{mn}f_{mn}$, $c\in{\cal{G}}_{s,r}$, as $||a\star b||_{s,r}\le||a||_{s,t}||b||_{q,r}$, $t+q\ge0$ and $|| a||_{u,v}\le ||a ||_{s,t}$ if $u\le s$ and $v\le t$. For more details, see e.g.~\cite{Gracia-Bondia:1987kw, Varilly:1988jk}. In particular, ${\cal{G}}_{0,0}=L^2(\mathbb{R}^2)$ and the dense and continuous inclusion ${\cal{S}}\subset{\cal{G}}_{s,t}\subset{\cal{S}}^\prime$ holds true for any $s,t\in\mathbb{R}$.\par 

Let ${\bar{{\cal{A}}}}$ be the C*-completion of ${\cal{A}}$. The subspace of pure states has been characterized e.g in \cite{WAL1}, Proposition 4, given below:
\begin{proposition}\label{purestates}
The pure states of ${\bar{ {\cal{A}}}}$ are the vector states $\omega_\psi:{\bar{ {\cal{A}}}}\to\mathbb{C}$ defined by any unit vector $\psi\in L^2(\mathbb{R}^2)$.
\end{proposition}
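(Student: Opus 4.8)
The plan is to identify the C*-completion $\bar{\mathcal{A}}$ concretely and then transport the known classification of pure states for a familiar C*-algebra. The crucial structural fact is the one recalled in the excerpt: via the matrix base $\{f_{mn}\}_{m,n\in\mathbb{N}}$, the Moyal product turns into ordinary matrix multiplication of the coefficient arrays $(a_{mn})$, with $\mathcal{A}=(\mathcal{S},\star)$ corresponding to the rapid-decay matrices. Since $\mathcal{G}_{0,0}=L^2(\mathbb{R}^2)$ carries the inner product under which $f_{mn}\star f_{kl}=\delta_{nk}f_{ml}$ and $\langle f_{mn},f_{kl}\rangle = 2\pi\theta\,\delta_{mk}\delta_{nl}$, the left-regular representation $L$ realizes $\mathcal{A}$ as an algebra of operators acting on the separable Hilbert space with orthonormal basis indexed by the ''column'' label. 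First I would make this identification precise: the map $a\mapsto L(a)$ is a faithful $\star$-representation, the operator-norm completion of $\mathcal{A}$ is exactly the algebra of compact operators $\mathcal{K}(L^2(\mathbb{R}^2))$ on this Hilbert space, so that $\bar{\mathcal{A}}\cong\mathcal{K}(H)$ for a separable infinite-dimensional $H$.

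The main work is to establish this isomorphism $\bar{\mathcal{A}}\cong\mathcal{K}(H)$ rather than merely asserting it. I would argue in two directions. For containment in $\mathcal{K}$: each $f_{mn}$ maps to a rank-one operator, finite $\star$-linear combinations of the $f_{mn}$ are finite-rank, these are norm-dense in $\mathcal{A}$ (Schwartz functions are norm-limits of their matrix-base truncations), and the finite-rank operators are norm-dense in $\mathcal{K}(H)$; hence $\bar{\mathcal{A}}\subseteq\mathcal{K}(H)$. For the reverse inclusion, the closure of the finite-rank operators in operator norm is all of $\mathcal{K}(H)$, and since every finite-rank operator is already the image of a finite matrix-base combination lying in $\mathcal{S}$, the completion exhausts $\mathcal{K}(H)$. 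This gives $\bar{\mathcal{A}}\cong\mathcal{K}(H)$ as C*-algebras.

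Once the isomorphism is in hand, I would invoke the standard classification: the pure states of $\mathcal{K}(H)$ are precisely the vector states $\omega_\psi(T)=\langle\psi,T\psi\rangle$ determined by unit vectors $\psi\in H$, with $\omega_\psi=\omega_{\psi'}$ iff $\psi,\psi'$ differ by a phase. This is classical and follows because $\mathcal{K}(H)$ is represented irreducibly on $H$ and admits no other irreducible representations up to unitary equivalence, so every pure state is a vector state in that representation. Transporting this back through the identification $H\cong L^2(\mathbb{R}^2)$ (the GNS space of the trace), the pure states of $\bar{\mathcal{A}}$ are exactly the $\omega_\psi$ for unit vectors $\psi\in L^2(\mathbb{R}^2)$, which is the claimed statement.

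The main obstacle will be the first step—pinning down $\bar{\mathcal{A}}\cong\mathcal{K}(H)$ rigorously—since one must be careful that the operator norm $\|a\|=\sup\{\|a\star b\|_2/\|b\|_2\}$ used to complete $\mathcal{A}$ coincides with the restriction of the operator norm on $\mathcal{K}(H)$, and that the matrix-base truncations genuinely converge in this operator norm (not merely in the Fréchet topology). The rapid decay of $(a_{mn})$ for $a\in\mathcal{S}$ is what guarantees this operator-norm convergence of truncations, so I would lean on the seminorm estimates $\|a\star b\|_{s,r}\le\|a\|_{s,t}\|b\|_{q,r}$ recalled in the appendix to control the tails. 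Everything after the isomorphism is a citation to the well-known structure of pure states on the compacts, so no further delicate estimates are required.
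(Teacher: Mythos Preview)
The paper itself does not prove this proposition; it is quoted from \cite{WAL1}, Proposition~4, so there is no in-paper argument to compare against. Your overall strategy---identify $\bar{\mathcal{A}}$ with a compact-operator algebra and invoke the standard description of pure states on $\mathcal{K}(H)$---is the correct one and is what underlies the cited result.

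There is, however, a genuine slip in your identification of the Hilbert space. The left-regular representation on $L^2(\mathbb{R}^2)$ is \emph{not} irreducible: in the matrix base $L(a)f_{kl}=\sum_m a_{mk}f_{ml}$, so each ``column'' subspace $H_l=\overline{\mathrm{span}}\{f_{ml}:m\in\mathbb{N}\}\cong\ell^2(\mathbb{N})$ is invariant and $L$ decomposes as an infinite direct sum of copies of the unique irreducible representation of $\mathcal{K}(\ell^2(\mathbb{N}))$. In particular $L(f_{mn})$ is \emph{not} rank one on $L^2(\mathbb{R}^2)$---its range is the infinite-dimensional $\overline{\mathrm{span}}\{f_{ml}\}_l$---so your density argument for $\bar{\mathcal{A}}=\mathcal{K}(L^2(\mathbb{R}^2))$ breaks down; the correct completion is $\bar{\mathcal{A}}\cong\mathcal{K}(\ell^2(\mathbb{N}))$, a proper C*-subalgebra of $\mathcal{K}(L^2(\mathbb{R}^2))$. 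As a consequence, a generic unit vector $\psi=\sum_l\psi_l\in L^2(\mathbb{R}^2)$ produces the \emph{mixture} $\omega_\psi=\sum_l\|\psi_l\|^2\,\omega_{\psi_l/\|\psi_l\|}$, which is pure only when the coefficient matrix $(\psi_{mn})$ has rank one. Every pure state \emph{does} arise from some unit $\psi\in L^2(\mathbb{R}^2)$ (for instance $\psi\in H_0$, as in the paper's $\omega_m$ built from $f_{m0}$), so the proposition holds in that direction, but your closing identification ``$H\cong L^2(\mathbb{R}^2)$, the GNS space of the trace'' conflates the reducible trace-GNS space with the irreducible $H$ needed in $\bar{\mathcal{A}}\cong\mathcal{K}(H)$.
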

An explicit formula for the spectral distance between pure states related to the eigenfunctions of the harmonic oscillator has been constructed and studied in \cite{WAL1, WAL2}. For any $m\in\mathbb{N}$, we denote by $\omega_m$ the corresponding pure states which are generated by the unit vector ${{1}\over{{\sqrt{2\pi\theta}}}}f_{m0}\in L^2(\mathbb{R}^2)$. Let $d_{D_0}$ be the spectral distance in the notation of the present paper. One has the following spectral distance formula:
\begin{proposition}\label{distance-basic}\cite{WAL1}, \cite{WAL2}
The spectral distance between any two pure states $\omega_m$ and $\omega_n$ is
\begin{equation}
d_{D_0}(\omega_m,\omega_n)={\sqrt{{{\theta}\over{2}}}}\sum_{k=n+1}^m{{1}\over{{\sqrt{k}}}},\qquad \forall \, m,n\in{\mathbb{N}},\qquad n<m . \label{distancemn}
\end{equation}
\end{proposition}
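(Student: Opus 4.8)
The plan is to pass to the matrix base $\{f_{mn}\}_{m,n\in\mathbb{N}}$ recalled in the appendix, in which every $a\in{\cal{A}}$ is encoded by its rapidly decreasing coefficient matrix $(a_{mn})$ through $a=\sum_{m,n}a_{mn}f_{mn}$, and to convert the supremum in \eqref{specdist} into a constrained optimization over these coefficients. The first step is to reduce the objective function. Using $f_{mn}\star f_{pq}=\delta_{np}f_{mq}$, $\int f_{mn}=2\pi\theta\,\delta_{mn}$ and $\|f_{m0}\|_2^2=2\pi\theta$, one computes for the unit vector $\hat f_{m0}=(2\pi\theta)^{-1/2}f_{m0}$ of Proposition \ref{purestates} that $\omega_m(a)=\langle\hat f_{m0},L(a)\hat f_{m0}\rangle=a_{mm}$, hence $\omega_m(a)-\omega_n(a)=a_{mm}-a_{nn}$, so that
\begin{equation}
d_{D_0}(\omega_m,\omega_n)=\sup\{\,|a_{mm}-a_{nn}|\ ;\ a\in{\cal{A}},\ \ell_{D_0}(a)\le1\,\}.
\end{equation}

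Next I would control the constraint via Proposition \ref{lzero}. Writing the ladder action of $\partial$ and $\delbar$ on the matrix base, so that $\delbar a$ is a first-order difference operator applied to $(a_{mn})$, one finds that its sub-diagonal coefficient isolates exactly the diagonal difference, namely $(\delbar a)_{k,k-1}$ is proportional to $\sqrt{k}\,(a_{k-1,k-1}-a_{kk})$ for \emph{every} $a$, regardless of the off-diagonal content of $a$; this is the crux that makes the reduction clean. Combining $\langle\hat f_{k0},L(\delbar a)\hat f_{k-1,0}\rangle=(\delbar a)_{k,k-1}$ with the bound $\|L(\delbar a)\|\le\frac{1}{\sqrt2}\ell_{D_0}(a)$ read off from Proposition \ref{lzero} gives
\begin{equation}
|a_{kk}-a_{k-1,k-1}|\le\sqrt{\tfrac{\theta}{2}}\,\frac{1}{\sqrt k}\,\ell_{D_0}(a).
\end{equation}
Telescoping $a_{mm}-a_{nn}=\sum_{k=n+1}^{m}(a_{kk}-a_{k-1,k-1})$ and imposing $\ell_{D_0}(a)\le1$ then yields the upper bound $d_{D_0}(\omega_m,\omega_n)\le\sqrt{\theta/2}\sum_{k=n+1}^m k^{-1/2}$.

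For the matching lower bound I would exhibit an explicit finite-rank diagonal optimizer $a=\sum_{p=0}^{m-1}c_p f_{pp}\in{\cal{S}}$, with real $c_p$ chosen so that the consecutive differences $c_{k-1}-c_k$ saturate each inequality above with a common sign for $k=n+1,\dots,m$ and vanish otherwise; this turns the telescoping estimate into an equality and gives $a_{mm}-a_{nn}=\sqrt{\theta/2}\sum_{k=n+1}^m k^{-1/2}$. The crucial verification is then $\ell_{D_0}(a)=1$ for this candidate: since $a$ is diagonal, both $L(\partial a)$ and $L(\delbar a)$ are weighted shifts supported respectively on the super- and sub-diagonal of the matrix base, whose operator norms equal the suprema of the moduli of their weights; the saturating choice makes all nonzero weights equal and common to both, so $\max(\|L(\partial a)\|,\|L(\delbar a)\|)$ gives exactly $\ell_{D_0}(a)=1$ in Proposition \ref{lzero}. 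Comparing with the upper bound produces the claimed formula, the supremum being genuinely attained.

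The main obstacle is this norm computation underlying the lower bound: one must justify the weighted-shift norm identity and check that $\partial a,\delbar a$ for diagonal $a$ carry no off-diagonal entries beyond the expected single (super/sub)-diagonal, so that the saturating element respects $\ell_{D_0}(a)\le1$ rather than overshooting it; pinning down the constant $\sqrt{\theta/2}$ exactly hinges on the precise normalization of the ladder relations for $\partial,\delbar$ on $\{f_{mn}\}$. A secondary delicate point is sign-coherence in the telescoping, since matching the upper and lower bounds requires all consecutive diagonal differences to share a single sign; this is automatic for the explicit $c_p$ but is what guarantees optimality over \emph{all} $a$ and not merely diagonal ones.
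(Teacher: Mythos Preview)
The paper does not actually prove this proposition: its ``proof'' consists solely of a pointer to \cite{WAL1} (Theorem~1) and \cite{WAL2}. Your outline is precisely the argument carried out in those references---reduction to diagonal matrix entries via $\omega_m(a)=a_{mm}$, the telescoping upper bound coming from the observation that $(\delbar a)_{k,k-1}$ depends only on $a_{k-1,k-1}$ and $a_{kk}$, and the explicit finite-rank diagonal saturator for the lower bound---so there is nothing to compare in terms of approach.

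Your two flagged obstacles are the right ones and both go through. For the weighted-shift norm: if $b=\delbar a$ with $a$ diagonal, then $b$ is supported on a single off-diagonal, say $b=\sum_k b_{k,k-1}f_{k,k-1}$, and $L(b)$ acts on each column of $\psi=\sum\psi_{mn}f_{mn}$ independently as the same scalar shift, so $\|L(b)\|=\sup_k|b_{k,k-1}|$ exactly; this is what makes the saturating choice attain $\ell_{D_0}(a)=1$ rather than overshoot. For the constant, the ladder relations in the conventions of \cite{Gracia-Bondia:1987kw,Varilly:1988jk} give $(\delbar a)_{k,k-1}=\sqrt{k/\theta}\,(a_{k-1,k-1}-a_{kk})$, which combined with $\|L(\delbar a)\|\le\tfrac{1}{\sqrt2}\ell_{D_0}(a)$ from Proposition~\ref{lzero} yields the factor $\sqrt{\theta/2}$ on the nose. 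One cosmetic point: your optimizer $a=\sum_{p=0}^{m-1}c_p f_{pp}$ has $a_{mm}=0$, so it is really $|a_{nn}|$ that equals the target sum; this is harmless since only $|a_{mm}-a_{nn}|$ matters, but you may prefer to write the sum up to $p=m$ with $c_m=0$ for clarity.
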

\begin{proof}
The details of the proof can be found in e.g \cite{WAL1} (see Theorem 1).
\end{proof}
\begin{proposition}\cite{GAYRAL2004}
${\mathfrak{X}}_{D_0}=({\cal{A}},\ \pi_0,\ {\cal{H}}_0=L^2({\mathbb{R}}^2)\otimes {\mathbb{C}}^2,\ D_0)$ where $\pi_0:{\cal{A}}\to{\cal{B}}({\cal{H}}_0)$ is the left regular representation \eqref{regulrep} and $D_0$ is the standard Dirac operator on $\mathbb{R}^2$ given in \eqref{cliff2} is a non unital spectral triple as in Definition \ref{spectraltriple}.
\end{proposition}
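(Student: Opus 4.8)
The plan is to verify the four axioms of Definition \ref{spectraltriple} in turn, the first three being essentially immediate from properties of the Moyal product already recorded, and the fourth---the localized compactness of the resolvent---being the genuine point.

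For axiom i), recall from Proposition \ref{moyalproperty} that complex conjugation is an involution for $\star$, so $\mathcal{A}=(\mathcal{S},\star)$ is an involutive algebra. That $\pi_0$ in \eqref{regulrep} is a $\star$-representation follows from associativity of $\star$, which gives $\pi_0(a\star b)=\pi_0(a)\pi_0(b)$, together with $L(a)^\dag=L(a^\dag)$; faithfulness is obtained from the faithful trace of Proposition \ref{moyalproperty}, since $L(a)=0$ forces $a\star a^\dag=0$, whence $\|a\|_2^2=\int a\,a^\dag=\int(a\star a^\dag)=0$ and $a=0$. For axiom ii) I would invoke the standard fact that the free Dirac operator \eqref{cliff2}, with the domain $\mathcal{D}_{L^2}\otimes\mathbb{C}^2$ fixed in the text, is self-adjoint: in Fourier variables it is multiplication by the Hermitian matrix $\sigma^\mu k_\mu$, with spectrum $\mathrm{sp}(D_0)=\mathbb{R}$. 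Invariance of the domain, $\pi_0(a)(\mathcal{D}_{L^2}\otimes\mathbb{C}^2)\subset\mathcal{D}_{L^2}\otimes\mathbb{C}^2$, follows from the Leibniz rule of Proposition \ref{moyalproperty}, each $\partial^\alpha(a\star\psi)$ being a finite sum of terms $L(\partial^\beta a)\,\partial^\gamma\psi$ with $L(\partial^\beta a)$ bounded on $L^2$ for $a\in\mathcal{S}$. Axiom iii) is already contained in Proposition \ref{lzero}, since $[D_0,\pi_0(a)]=-iL(\partial_\mu a)\otimes\sigma^\mu$ is bounded because $\partial_\mu a\in\mathcal{S}$ and left Moyal multiplication by a Schwartz function is bounded on $L^2(\mathbb{R}^2)$.

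The substantive step is axiom iv). As $\mathcal{A}$ is non-unital and the free resolvent on $\mathbb{R}^2$ is not compact, compactness must originate from the factor $\pi_0(a)$. It suffices to establish it for a single $\lambda$ in the resolvent set $\mathbb{C}\setminus\mathbb{R}$: choosing $\lambda=i\mu$, $\mu>0$, the resolvent identity gives $\pi_0(a)(D_0-\lambda)^{-1}=\pi_0(a)(D_0-i\mu)^{-1}\big[\bbbone+(\lambda-i\mu)(D_0-\lambda)^{-1}\big]$, a compact operator times a bounded one provided $\pi_0(a)(D_0-i\mu)^{-1}$ is compact. I would prove the latter by showing it is Hilbert--Schmidt: writing $(D_0-i\mu)^{-1}=(D_0+i\mu)\big[(-\Delta+\mu^2)^{-1}\otimes\bbbone_2\big]$ and combining the explicit kernel $K_{L(a)}$ of $L(a)$ from \eqref{kla} with the Green's function $\tfrac{1}{2\pi}K_0(\mu|x-y|)$ of $-\Delta+\mu^2$, the integral kernel of $\pi_0(a)(D_0-i\mu)^{-1}$ is of the convolution form $\int d^2z\,K_{L(a)}(x,z)\,G_\mu(z-y)$, with $G_\mu$ built from $K_0$ and its first derivatives. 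The estimate of its Hilbert--Schmidt norm is entirely parallel to the computation carried out for the Landau case in Proposition \ref{interkernel}.

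The main obstacle is precisely this kernel estimate. The resolvent kernel $G_\mu$ carries a diagonal singularity (integrable in two dimensions) and only exponential decay in the relative variable, so $(D_0-i\mu)^{-1}$ alone is not Hilbert--Schmidt over the infinite-volume $\mathbb{R}^2$. The key is that the smoothing of the left Moyal multiplication $L(a)$ for $a\in\mathcal{S}$---the same mechanism emphasized in Remark \ref{lszdual}---both regularizes the short-distance behaviour and supplies, through the Schwartz decay encoded in $K_{L(a)}$, the decay in the center-of-mass variable needed for global square-integrability of the composite kernel. Once $\iint d^2x\,d^2y\,\big|\int d^2z\,K_{L(a)}(x,z)G_\mu(z-y)\big|^2<\infty$ is verified, $\pi_0(a)(D_0-i\mu)^{-1}$ is Hilbert--Schmidt hence compact, axiom iv) holds for every $\lambda\notin\mathrm{sp}(D_0)$, and the proof is complete.
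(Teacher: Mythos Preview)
Your handling of axioms i)--iii) is correct and essentially coincides with the paper's. The genuine gap is in axiom iv): the operator $\pi_0(a)(D_0-i\mu)^{-1}$ is \emph{not} Hilbert--Schmidt in two dimensions, so the route you propose cannot close. Carrying out the kernel computation exactly as in \eqref{kernelresolvante1}--\eqref{HS1} but with $(p^2+1)^{-1}$ replaced by the matrix symbol $(\sigma\cdot p+i\mu)(p^2+\mu^2)^{-1}$, integration over $y$ and the change of variables $u=x+\tfrac12\Theta p$ give
\[
\|\pi_0(a)(D_0-i\mu)^{-1}\|_{\mathrm{HS}}^{2}
\;=\;C\,\|a\|_2^{2}\int d^2p\,\mathrm{tr}\bigl|(\sigma\cdot p-i\mu)^{-1}\bigr|^{2}
\;=\;2C\,\|a\|_2^{2}\int\frac{d^2p}{p^2+\mu^2}\;=\;+\infty.
\]
The left Moyal multiplication $L(a)$ does supply the decay in the centre-of-mass variable (the infrared side), but the factorisation above shows that it contributes \emph{no} extra momentum decay: the ultraviolet behaviour is governed entirely by the resolvent symbol, and a single power of $(p^2+\mu^2)^{-1}$ is logarithmically non-integrable in $d=2$. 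Equivalently, in position space the first derivatives of $K_0(\mu r)$ produce a $1/r$ singularity on the diagonal of $G_\mu$, which is the borderline case excluded from $L^2_{\mathrm{loc}}(\mathbb{R}^2)$, and the smoothing of $L(a)$ does not reach it.

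The paper sidesteps this by first invoking the standard equivalence $\pi(a)(D-\lambda)^{-1}\in{\cal{K}}(H)\Leftrightarrow\pi(a)(D^2+1)^{-1}\in{\cal{K}}(H)$ for self-adjoint $D$, and then proving that $L(a)(-\partial^2+1)^{-1}$ is Hilbert--Schmidt; the same computation now yields the convergent integral $\|a\|_2^{2}\,\|(p^2+1)^{-1}\|_2^{2}$. Your argument can be repaired in the same way---work with $(D_0^2+1)^{-1}$ rather than $(D_0-i\mu)^{-1}$---or, if you insist on the first-order resolvent, by showing $\pi_0(a)(D_0-i\mu)^{-1}$ lies in a Schatten class ${\cal{L}}^p$ for some $p>2$ (which does hold, since $\int d^2p\,(p^2+\mu^2)^{-p/2}<\infty$ for $p>2$). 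As written, however, the Hilbert--Schmidt claim is false.
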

\begin{proof}
The axioms i) and ii) of Definition \ref{spectraltriple} are satisfied by construction, owing to the properties of the Moyal product, ${\cal{A}}$ and $D_0$ and the density of ${\cal{S}}$ in $L^2(\mathbb{R}^2)$ from which follows the density of Dom$(D_0)$ in ${\cal{H}}_0$. \\
Furthermore, the following estimate holds true: $||a\star b||_2\le{{1}\over{{\sqrt{2\pi\theta}}}}||a ||_2||b||_2$, $\forall a,b\in L^2(\mathbb{R}^2)$. This can be straightforwardly obtained by expanding $a$ and $b$ in the matrix basis $\{f_{mn}\}_{m,n\in\mathbb{N}}$ used as an orthonormal basis of $L^2(\mathbb{R}^2)$, \cite{Gracia-Bondia:1987kw, Varilly:1988jk}, and making use of the Cauchy-Schwarz inequality so that one readily obtains that $L(a)$ is a bounded operator and therefore $\pi_0(a)\in{\cal{B}}({\cal{H}}_0)$ with $\pi_0$ given in \eqref{regulrep}. Then, Proposition \ref{lzero} implies $[D_0,\pi_0(a)]\in{\cal{B}}({\cal{H}}_0)$ for any $a\in{\cal{A}}$. Hence, the axiom iii) is satisfied.\\
To prove that iv) is satisfied, we make use of a textbook property (see e.g \cite{GRACIAVAR}) that for any self adjoint operator $D$, $\pi(a){{1}\over{D-\lambda}}\in {\cal{K}}({H})$, for any $\lambda\notin$ sp$D$ $\iff $ $\pi(a){{1}\over{D^2+1}}\in{\cal{K}}(H)$. Here, $\pi$ is assumed to be a $\star$-representation on some involutive algebra on the bounded operator of some Hilbert space $H$. This applies to $\mathfrak{X}_{D_0}$ given in Proposition \ref{X0}, so that one has to prove that for any $a\in{\cal{A}}$, $\pi_0(a){{1}\over{D_0^2+1}}\in{\cal{K}}({\cal{H}}_0)$, with $D_0$ given in \eqref{cliff2} and $\pi_0$ defined in \eqref{regulrep}. Then, by simply using $D_0^2=-\partial_\mu\partial^\mu\otimes\bbbone_2:=-\partial^2\otimes\bbbone_2$ and \eqref{regulrep}, one can write
\begin{equation}
\pi_0(a){{1}\over{D_0^2+1}}=L(a){{1}\over{-\partial^2+1}}\otimes\bbbone_2\label{diagresolv},
\end{equation}
which therefore acts diagonally on ${\cal{H}}_0=L^2(\mathbb{R}^2)\otimes\mathbb{C}^2$, so that it is sufficient to show that $L(a){{1}\over{-\partial^2+1}}$ for any $a\in{\cal{A}}$ is a compact operator on $L^2(\mathbb{R}^2)$. For that purpose, it is very convenient in the present situation to consider the corresponding integral kernel $K_{L(a)(-\partial^2+1)^{-1}}(x,y)$. Using the expression of the Moyal product given in Proposition \ref{moyalproperty}, one easily obtains the integral kernel for the bounded operator $L(a)$ given by
\begin{equation}
K_{L(a)}(x,y)={{1}\over{( \pi\theta)^2}}\int d^2z\ a(x+z)e^{i\,2z^\mu\,\Theta^{-1}_{\mu\nu}(x^\nu-y^\nu)},\ \forall a\in{\cal{A}}\label{kla}
\end{equation}
from which follows ($C$ is a real constant)
\begin{equation}
K_{L(a)(-\partial^2+1)^{-1}}(x,y)=C\int d^2p\ {{a(x+{{1}\over{2}}\Theta p)}\over{p^2+1}}e^{ip(x-y)},\ \forall a\in{\cal{A}}\label{kernelresolvante1}.
\end{equation}
Consider now $I:=\int d^2xd^2y|K_{L(a)(-\partial^2+1)^{-1}}(x,y)|^2$. Using \eqref{kernelresolvante1}, one can write
\begin{align}
I&=C^2\int d^2xd^2yd^2p_1d^2p_2\ {{a^*(x+{{1}\over{2}}\Theta p_1)}\over{p_1^2+1}}
{{a(x+{{1}\over{2}}\Theta p_2)}\over{p_2^2+1}}e^{-ip_1(x-y)}e^{ip_2(x-y)} \nonumber\\
&=C^2\int d^2xd^2p\ {{a^*(x+{{1}\over{2}}\Theta p)}\over{p^2+1}}
{{a(x+{{1}\over{2}}\Theta p)}\over{p^2+1}}=C^2\int d^2xd^2p\ |a(x+{{1}\over{2}}\Theta p)|^2({{1}\over{p^2+1}})^2\nonumber\\
&=(C^\prime)^2\ ||a||^2_2\ ||{{1}\over{p^2+1}}||^2_2<+\infty,\ \forall a\in{\cal{A}},\label{HS1}
\end{align}
where the last equality is obtained through a change of variable. Then, \eqref{HS1} implies that the operator $L(a)(-\partial^2+1)^{-1}$ is a Hilbert-Schmidt operator on $L^2(\mathbb{R}^2)$. Therefore, it is compact on $L^2(\mathbb{R}^2)$, so that, in view of \eqref{diagresolv}, $\pi_0(a){{1}\over{D_0^2+1}}\in{\cal{K}}({\cal{H}}_0)$ for any $a\in{\cal{A}}$ holds true and axiom iv) is satisfied. This terminates the proof.
\end{proof}

\end{document}